\renewcommand{\thefootnote}{}
\newtheorem{lemma}{Lemma}[section]
\newtheorem{propo}{Proposition}[section]
\newtheorem{prop}{RH Problem}
\newtheorem{theorem}{Theorem}[section]
\def\be{\begin{equation}}
\def\ee{\end{equation}}
\def\bee{\begin{eqnarray}}
\def\ene{\end{eqnarray}}
\def\bes{\begin{subequations}}
\def\ees{\end{subequations}}
\def\d{\displaystyle}
\def\v{\vspace{0.05in}}
\def\no{{\nonumber}}
\begin{document}

\baselineskip=14pt
\renewcommand {\thefootnote}{\dag}
\renewcommand {\thefootnote}{\ddag}
\renewcommand {\thefootnote}{ }

\pagestyle{plain}

\begin{flushleft}
\baselineskip=16pt \leftline{} \vspace{-.3in} {\Large \bf  On the $N_\infty$-soliton asymptotics for the modified Camassa-Holm equation with linear dispersion \\ and  vanishing boundaries} \\[0.2in]
\end{flushleft}

\begin{flushleft}
Weifang Weng$^{\rm a}$, Zhenya Yan$^{\rm b,c,*}$
\footnote{$^{*}$Corresponding author. {\it E-mail address}: zyyan@mmrc.iss.ac.cn}
\\[0.15in]
{\footnotesize \it $^{a}$School of Mathematical Sciences, University of Electronic Science and Technology of China, Chengdu,  611731, China  \\
$^b$KLMM, Academy of Mathematics and Systems Science, Chinese Academy of Sciences, Beijing 100190, China \\
$^c$School of Mathematical Sciences, University of Chinese Academy of Sciences, Beijing 100049, China } \\
% (Date:\,\, \today)
\end{flushleft}

\noindent \rule[0.25\baselineskip]{\textwidth}{0.6pt}

\noindent {\bf Abstract:}\, {\small We explore the $N_{\infty}$-soliton asymptotics for the modified Camassa-Holm (mCH) equation with linear dispersion and boundaries vanishing at infinity:
$m_t+(m(u^2-u_x^2)^2)_x+\kappa u_x=0,\quad m=u-u_{xx}$ with
$\lim_{x\rightarrow \pm \infty }u(x,t)=0$. We mainly analyze the aggregation state of $N$-soliton solutions of the mCH equation expressed by the solution of the modified Riemann-Hilbert problem in the new $(y,t)$-space when the discrete spectra are located in different regions. Starting from the modified RH problem, we find that i) when the region is a quadrature domain with $\ell=n=1$, the corresponding $N_{\infty}$-soliton is the one-soliton solution which the discrete spectral point is the center of the region; ii) when the region is a quadrature domain with $\ell=n$, the corresponding $N_{\infty}$-soliton is an $n$-soliton solution; iii) when the discrete spectra lie in the line region, we provide its corresponding Riemann-Hilbert problem,; and iv) when the discrete spectra lie in an elliptic region, it is equivalent to the case of the line region. }

\vspace{0.1in} \noindent {\small {\it MSC:} \, 35P25; 35Q15; 37K40; 35B40 }

\vspace{0.1in} \noindent {\small {\it Keywords:}\, Modified Camassa-Holm equation with linear dispersion; Inverse spectral method;
$N_\infty$-soliton asymptotics; Riemann-Hiblert problem; Discrete spectral region}

\noindent \rule[0.25\baselineskip]{\textwidth}{0.6pt}

\begin{spacing}{0.85}
\tableofcontents
\end{spacing}

%\vspace{0.1in} \noindent {\bf Mathematics Subject Classification}\, 35B30; 35G25; 35B44; 35Q35

\baselineskip=14pt

\section{Introduction}

In this paper, we would like to investigate the $N_\infty$-soliton asymptotics (i.e., large-$N$ asymptotics of $N$-soliton solution) for the modified Camassa-Holm (mCH) equation with  linear dispersion and zero boundary condition of the Dirichlet type at infinity~\cite{fokas1,fokas,Chen15,qiao11,Ma2}:
\begin{align}\label{mch}
\left\{\begin{aligned}
&m_t+(m(u^2-u_x^2)^2)_x+\kappa u_x=0,\quad m=u-u_{xx},\quad (x,t)\in \mathbb{R}^{2},\vspace{0.1in} \\
&\displaystyle\lim_{x\rightarrow \pm \infty }u(x,t)=0,
\end{aligned}\right.
\end{align}
where $u=u(x,t)$ is the free surface elevation indimensionless variables, and $\kappa>0$ denotes the effect of the linear dispersion, which makes the mCH equation (\ref{mch}) generate smooth soliton solutions with zero boundaries~\cite{Ma2}. Eq.~(\ref{mch}) describes the unidirectional propagation of surface waves in shallow water over a flat bottom~\cite{fokas}.
In fact, the mCH equation was originally  presented by Fokas~\cite{fokas}, then also found by Fuchssteiner~\cite{Fuch} using the symmetry method, Olver-Rosenau~\cite{or96} via the tri-Hamiltonian method, and Qiao~\cite{qiao06} via the Lax pair. Thus,
the mCH equation \eqref{mch0} was also called the Fokas-Olver-Rosenau-Qiao (FORQ) equation~\cite{Hou}.
At $\kappa\to 0$, Eq.~(\ref{mch}) reduces to the usual mCH equation~\cite{fokas,Fuch,or96,qiao06}
\bee \label{mch0}
m_t+(m(u^2-u_x^2)^2)_x=0,\quad m=u-u_{xx},
\ene
which is regarded as the modified version of the CH equation~\cite{CH,FF,CH2}
 \bee \label{ch}
  m_t + (um)_x + u_xm =0,\quad  m= u-u_{xx},
 \ene
which illustrates the unidirectional propagation of shallow water waves on a flat bottom and has a rich mathematical structure~\cite{CH2}. Recently, a Miura-type transformation was established between the mCH equation (\ref{mch0})
and CH equation (\ref{ch})~\cite{kang16}.

In 2009, Novikov~\cite{nov} used the perturbative symmetry method to classify the integrable equations
\bee
 m_t=G(u, u_x, u_{xx},...),\quad m=u-u_{xx},
 \ene
such that two integrable quadratic CH-type equations were found (i.e., the CH equation and Degasperi-Procesi equation), and two integrable cubic CH-type equations were found, that is, the usual mCH equation (\ref{mch0})
and the Novikov equation~\cite{nov}
\bee
m_t+u(m_xu+3mu_x)=0,\quad m=u-u_{xx},
\ene
The scaling transform and parameter limits can reduce the mCH equation (\ref{mch0}) to the short-pulse equation~\cite{SW}
\bee
 u_{xt}-u-\frac16(u^3)_{xx}=0.
\ene
Notice that i) the mCH equation (\ref{mch0}) with vanishing boundaries at infinity was shown to possess the non-smooth peakon solutions~\cite{qiao06,Chang}, whose stability was studied~\cite{qu1,qu2}; ii) starting from the Lax pair~\cite{qiao11}, the mCH equation (\ref{mch0}) with non-vanishing boundaries at infinity was shown to admit smooth dark solitons by the inverse scattering transform~\cite{mch-dark}; iii) the mCH equation with $\kappa\not=0$ (\ref{mch}) and vanishing boundaries as well as the mCH equation with $\kappa=0$ (\ref{mch0}) and nonvanishing boundaries at infinity were found to admit the bright multi-smooth solutions via the reciprocal transform and bilinear method, respectively~\cite{Ma1,Ma2}.

The Riemann-Hilbert (RH) problem with the inverse scattering transform (IST)~\cite{GGKM} plays a more and more important role in the study of integrable systems~\cite{Zhou89,Its03,Zhou03,Olver16}. In fact, in 1974, the RH problem was first used by Zakharov-Shabat~\cite{zs1} to solve integrable systems based on the IST.
After that, the IST and/or RH problem can be used to find not only the exact solitons of integrable equations for the case of reflectioness potential~\cite{book-F,zs2,Ablowitz2007, Yang10, soliton-84, RIST}, but also the large-order asymptotic behaviors of solutions  of integrable equations for the case of reflection potential or solitonness and other types of asymptotics of solitons~\cite{bil1,bil2,bil3}. On the one hand, in 1993, Deift-Zhou~\cite{DF} proposed a nonlinear steepest descent method to find the long-time asymptotics behavior of the solution for the mKdV equation in terms of modified RH problems. Recently, McLaughlin {\it et al}~\cite{Dbar1,Dbar2} extended the Deift-Zhou steepest descent method to present the $\bar\partial$-steepest descent method to study asymptotic behaviors of both orthogonal polynomials with non-analytical weights~\cite{Dbar1,Dbar2}, and integrable systems~\cite{Dbar3,NLS-18}. There were other results on long-time asymptotic behaviors of solutions of some integrable systems (see, e.g., Refs.~\cite{Bio,BI, GT,  TV, B1,B2, RS, huang,fan22,fan-adv,BL1, BL2, LE, CL} and references therein).

On the other hand, in 1971, Zahkarov~\cite{zak71} first analyzed the large $N$-limit of the $N$-soliton solution of the KdV equation, which is called the $N_{\infty}$-soliton behavior or siloton gas. Afterwards, the study of soliton gas will be extended to investigate the fluid dynamics of soliton gas, breather gas, dense soliton gas for other nonlinear wave equations, such as the NLS equation,  KdV equation, modified KdV equation, etc.~\cite{E1,E2,E3,E5,E6,E4,E7,Girotti-1,Girotti-2,Grava-3} by using the numerical method and RH problem.

 Without loss of generality, one can choose $\kappa=2$ in the mCH equation (\ref{mch}) via the scaling transforms:
$u(x,t)=\left(\kappa/2\right)^{\frac12}\tilde{u}(\tilde{x},\tilde{t}),\quad x=\tilde{x} ,\quad t=2\tilde{t}/\kappa.$
 The mCH equation (\ref{mch}) with non-zero linear dispersive coefficient $\kappa=2$ is completely integrable and possesses the Lax pair~\cite{qiao06,sch96,xu19}:
\begin{align}\label{lax}
\begin{aligned}
&\varPhi_x=X\varPhi, \quad X=-\frac{k(z)}{2}\sigma_3+\frac{i\lambda(z) m}{2}\sigma_2,\quad m=u-u_{xx},\v\\
&\varPhi_t=T\varPhi, \quad T=k(z)\left(\frac{1}{\lambda^2(z)}+\frac{u^2-u_x^2}{2}\right)\sigma_3
-i\left(\frac{u-k(z)u_x}{\lambda(z)}+\frac{\lambda(z)(u^2-u_x^2)m}{2}\right)\sigma_2,
\end{aligned}
\end{align}
where the three Pauli matrices are
\bee\no
&\sigma_1=\!\left[\!\!\begin{array}{cc}
0& 1  \\
1 & 0
\end{array}\!\!\right],\quad \sigma_2=\!\left[\!\!\begin{array}{cc}
0& -i  \\
i & 0
\end{array}\!\!\right],
\quad\sigma_3=\!\left[\!\!\begin{array}{cc}
1& 0  \\
0 & -1
\end{array}\!\!\right].
\ene
and
\bee\label{kl}
k(z)=\frac{i}{2}(z-\frac1z),\quad  \lambda(z)=\frac{1}{2}(z+\frac1z),
\ene
$z\in\mathbb{C}$ is a spectral parameter.
In 2020, Boutet de Monvel-Karpenko-Shepelsky~\cite{B1,B2} first presented the RH problem of the mCH equation (\ref{mch}) with nonzero backgrounds and  gave the long-time asymptotics of solution for the solitonless case. Recently,
the long-time asymptotic behaviors were found for the Eq.~(\ref{mch}) with Schwartz initial data~\cite{xu19} and weighted Sobolev initial data~\cite{fan-adv} by using the Deift-Zhou steepest decedent method and $\bar\partial$-steepest decedent method, respectively. More recently,  the solutions of the mCH equation (\ref{mch0}) with step-like initial data were found via the solution of the RH problem~\cite{kst}, and then its long-time asymptotics was studied~\cite{fan22}.
However, to the best of our knowledge,
there was no report on the $N_\infty$-soliton asymptotics of the mCH equation with linear dispersion \eqref{mch} before.

In this paper we will, motivated by the idea for the NLS equation~\cite{Grava-3}, investigate the $N_{\infty}$-soliton behaviors (i.e., the large-$N$ asymptotics of $N$-soliton solution) for the mCH equation with  linear dispersion and zero boundary condition of the Dirichlet type at infinity given by Eq.~(\ref{mch}). Different from the works on soliton gases for these NLS, mKdV and KdV equations~\cite{Girotti-1,Girotti-2,Grava-3}, the Lax pair (\ref{lax}) of the mCH equation with linear dispersion (\ref{mch})  has more singularities at $k=\infty,\, \lambda=0, \infty$, i.e., $z=0,\, \infty$ and $z=\pm i$ (branch
cut points in the complex $z$-plane), which lead to more asymptotic behaviors of the Lax pair (\ref{lax}). This difficult and key point was solved by Boutet de Monvel {\it et al}~\cite{B1,B2,BS1,BS2,BS3} with the aid of an appropriate transform.
We study the aggregation state of $N_{\infty}$-soliton solutions of Eq.~\eqref{mch} when the infinite many discrete spectra are located in different regions. We find that when the region is a quadrature domain and $\ell=n=1$, the corresponding $N_{\infty}$-soliton becomes the one-soliton solution, where the discrete spectral point is the center of the region. We show that when the region is a quadrature domain and $\ell=n$, the corresponding $N_{\infty}$-soliton is equivalent to an $n$-soliton solution. When the discrete spectra lie in the line region, we provide its corresponding Riemann-Hilbert problem. When the discrete spectra lie in an elliptic region, it is equivalent to the case of the line region.

The main contributions of this paper about the $N_{\infty}$-soliton asymptotics of the modified Camassa-Holm equation are listed follows:

\begin{itemize}

\item [1)] The discrete spectra of the mCH equation (\ref{mch}) possess multiple symmetry conditions due to its Lax pair (\ref{lax}) possessing more singularities at $k=\infty,\, \lambda=0, \infty$, i.e., $z=0,\, \infty$ and $z=\pm i$ (branch
cut points in the complex $z$-plane), which complicate the analysis of their residues in the RH problem. When the discrete spectra are uniformly distributed across the different regions, such as quadrature and elliptic regions, we can find the aggregation states of $N$-soliton solutions.

\item [2)] When the discrete spectra of the mCH equation (\ref{mch}) are uniformly distributed on a line, the corresponding RH problem becomes intricate due to the multiple regularity conditions and residue conditions. Through a series of deformations of the RH problem, we obtain a solvable RH problem corresponding to this situation.

\item [3)] The solvable RH problem presented in the Point 2), despite being a step forward, remains complex. We propose for the first time the high-order functions $f(z),\, g(z)$, which are used to reduce this RH problem to a standard form. By analyzing its asymptotic properties on the arcs and at the endpoints, we can derive the $N_{\infty}$-soliton asymptotics for
    the mCH equation (\ref{mch}).

\end{itemize}

The rest of this paper is arranged as follows. In Sec. 2, we recall the basic RH problem of the mCH equation with linear dispersion (\ref{mch}), used to solve its smooth $N$-soliton solutions~\cite{B1,B2,xu19,fan-adv}. In Sec. 3, we introduce the modified RH problem with the jump curves only being some very small radius encircling the discrete spectra,  which is used to construct the $N$-soliton solutions of the mCH equation (\ref{mch}). In Secs. 4, 5 and 6, we, based on some modified RH problems, investigate
the $N_{\infty}$-soliton asymptotics for the mCH equation with linear dispersion in the different types of domains for the discrete spectra. Finally, we give the conclusions and discussions in Sec. 7.

\section{Preliminaries}

In this section, we recall some main results on the RH problem generated from its Lax pair (\ref{lax}) such that the RH problem can be used to construct the $N$-soliton solutions of the mCH equation (\ref{mch}) with linear dispersion~\cite{B1,B2,xu19,fan-adv}.
The Lax pair (\ref{lax}) for the mCH equation has singularities at $k=\infty,\, \lambda=0, \infty$, i.e., $z=0,\, \infty,\, \pm i$ (cf. Eq.~(\ref{kl})), where $z=\pm i$ is also called the branch cut points.

\subsection{The modified Jost solutions}

{\it Case 1.} As $z\to \infty$, let
\bee\label{mu-1}
\mu_{\pm}=\frac{2q(q+1)^{3/2}}{m^2+(q+1)^2}\left(\begin{array}{cc} 1&  \frac{im}{q+1} \v\\
\frac{im}{q+1} & 1 \end{array}\right)\varPhi_{\pm}e^{\frac{i}{4}(z-1/z)p(x,t;z)\sigma_3}, \quad x\to \pm\infty
\ene
with
\bee
p(x,t;z)=x-\int_x^{\infty}(q-1)dy-8(z+1/z)^{-2}t,\quad q=\sqrt{m+1}.
\ene
Then the Lax pair (\ref{lax}) reduces to one for $\mu_{\pm}$:
\bee \label{lax2}
\left\{\begin{array}{l}
 (\mu_{\pm})_x=\dfrac{i}{4}(z^{-1}-z)p_x[\sigma_3, \mu_{\pm}]+U\mu_{\pm}, \v\\
 (\mu_{\pm})_t=\dfrac{i}{4}(z^{-1}-z)p_t[\sigma_3, \mu_{\pm}]+V\mu_{\pm}.
\end{array} \right.
\ene
with
\bee
\begin{array}{l}
\d U=\frac{im_x}{2q^2}\sigma_1-\frac{im}{2zq}(m\sigma_3-\sigma_2), \v\\
%\left(\begin{array}{cc} im & -1 \v\\ 1 & -im \end{array}\right),\v\\
\d V=\left(\frac{im_t}{2q^2}+\frac{(z^2-1)u_x}{z^2+1}\right)\sigma_1+\frac{im(u^2-u_x^2)}{2zq}(m\sigma_3-\sigma_2).
%\left(\begin{array}{cc} im & -1 \v\\ 1 & -im \end{array}\right),
\end{array}
\ene
It follows from the Lax pair (\ref{lax2}) that the two Volterra-type integrals can be written as
\bee
\mu_{\pm}=I+\int^x_{\pm\infty}e^{\frac{i}{4}(z^{-1}-z)(p(x)-p(y))\widehat{\sigma}_3}U(y)\mu_{\pm}(y)dy,\quad
e^{\alpha\widehat{\sigma}_3}A=e^{\alpha\sigma_3}Ae^{-\alpha\sigma_3}.
\ene

There exists the scattering matrix $S(z)$ depending on only the spectral parameter $z$ between $\mu_{\pm}$:
\bee
\mu_-=\mu_+e^{\frac{i}{4}(z^{-1}-z)(p(x)-p(y))\widehat{\sigma}_3}S(z)
\quad
S(z)=\left(\begin{array}{cc} a(z) & -b^*(z^*) \v\\ b(z) & a^*(z^*) \end{array}\right)
\ene
with the symmetries $S(z)=S^*(1/z^*)=\sigma_3S(-1/z)\sigma_3$. As a result, one has the symmetries of
the reflection coefficient: $\rho(z)=\rho^*(1/z^*)=-\rho^*(-z^*)=\rho(-1/z^)$, where $\rho(z)=b(z)/a(z)$ and the star denotes the
complex conjugate.

Let $\{z_n,\ -z_n^*,\, 1/z_n,\, -1/z_n^*\}$ with $|z_n|>1$ and ${\rm arg}(z_n)\in (0, \pi/2]$ and
$\{w_n,\, -w_n^*\}$ with  $|w_n|=1$ and ${\rm arg}(w_n)\in (0, \pi/2]$  be the simple zeros of $a(z)$. Then
the discrete spectra is $Z\cup Z^*$ with $Z=K\cup W,\, Z^*=K^*\cup W^*$, $K=\{z_n,\ -z_n^*,\, 1/z_n,\, -1/z_n^*\}_{n=1}^{N},\,
W=\{w_n,\ -w_n^*\}_{n=1}^{N'}$.

\v {\it Case 2.} As $z\to 0$, similar to the case $z\to \infty$, one has $a(z)\to 1,\, b(z)\to 0$ as $z\to 0$.

\v {\it Case 3.} As $z\to \pm i$ (i.e., $\lambda(z)\to 0$),  let
\bee\label{mu-0}
\mu_{\pm}^{(0)}=\varPhi_{\pm}e^{(\frac{k}{2}x-\frac{k}{\lambda^2}t)\sigma_3}.
\ene
Then the Lax pair (\ref{lax}) reduces to
\bee \label{lax3}
\begin{array}{l}
 (\mu_{\pm}^{(0)})_x=-\dfrac{k}{2}[\sigma_3, \mu_{\pm}^{(0)}]+U_0\mu_{\pm}^{(0)}, \v\\
 (\mu_{\pm}^{(0)})_t=\dfrac{k}{\lambda^2}[\sigma_3, \mu_{\pm}^{(0)}]+V_0\mu_{\pm}^{(0)},
\end{array}
\ene
with
\bee
U_0=\frac{i}{2}\lambda m\sigma_2,\qquad
V_0=\frac{ku_x}{\lambda}\sigma_1+\frac{u^2-u_x^2}{2}(k\sigma_3-i\lambda m\sigma_2)
%\left(\begin{array}{cc} k & -\lambda m \v\\ \lambda m & -k\end{array}\right)
-\frac{iu}{\lambda}\sigma_2.
%\left(\begin{array}{cc} 0 & -1 \v\\ 1 & 0 \end{array}\right).
\ene
One has the asymptotic expansion
\bee
 \mu^{(0)}=I+ \mu_1^{(0)}(z-i)+O((z-i)^2),\quad z\to i.
\ene
It follows from the Lax pair (\ref{lax3}) that two Volterra type integrals can be written as
\bee
\mu_{\pm}^{(0)}=I+\int^x_{\pm\infty}e^{-\frac{k}{2}(x-y)\widehat{\sigma}_3}U_0(y)\mu_{\pm}^{(0)}(y)dy.
\ene

 It follows from Eqs.~(\ref{mu-1}) and (\ref{mu-0}) that one has
 \bee
\mu_{\pm}=\frac{2q(q+1)^{3/2}}{m^2+(q+1)^2}\left(\begin{array}{cc} 1&  \frac{im}{q+1} \v\\
\frac{im}{q+1} & 1 \end{array}\right)\mu_{\pm}^{(0)}e^{\frac{i}{4}(z-1/z)h_{\pm}(x,t)\sigma_3},\quad
h_{\pm}=\int_{\pm\infty}^x(q-1)dy.
 \ene

\subsection{The basic Riemann-Hilbert problem and $N$-soliton solution}

To construct the RH problem, let
\bee
 y(x,t)=x-\int_{x}^{+\infty}(q(s)-1)ds,\quad s=x-h_+(x,t),
\ene
and a piecewise meromorphic function be
\bee\label{RHP-M}
M^{(0)}(y, t; z)=\left\{
\begin{array}{ll}
\left(\dfrac{\mu_{-1}(x(y,t), t; z)}{a(z)},\, \mu_{+2}(x(y,t), t; z)\right), & z\in \mathbb{C}^+, \v\v \\
\left(\mu_{+1}(x(y,t), t; z),\, \dfrac{\mu_{-2}(x(y,t), t; z)}{a^*(z^*)}\right),  & z\in \mathbb{C}^-.
\end{array}\right.
\ene
Then, the matrix function $M^{(0)}(y,t;z)$ satisfies the following RH problem:

\begin{prop}\label{RH1}
Find a $2\times 2$ matrix $M^{(0)}(y,t;z)$ that satisfies the following conditions:

\begin{itemize}

 \item {} Analyticity: $M^{(0)}$ is meromorphic in $\{z|z\in\mathbb{C}\setminus \mathbb{R}\}$ and takes
continuous boundary values on $\mathbb{R}$;

 \item {} The jump condition: the boundary values on the jump contour $\Sigma$ are defined as
 \bee
  M_+^{(0)}(z)=M_-^{(0)}(z)J(z), \quad J(z)=\mathrm{e}^{i\theta(x(y,t), t; z)\widehat\sigma_3}\left(\begin{array}{cc} 1+|\rho(z)|^2 & \widehat\rho(z)\\[0.05in] \rho(z)& 1 \end{array}\right),\,\,\, k\in\mathbb{R},
  \ene
where $\theta(x(y,t), t; z)=\frac{i}{2}k(z)\left(y-2\lambda^{-2}(z)t\right)$.

 \item {} Normalization:
\bee
 M^{(0)}=\left\{\begin{array}{ll}
    \mathbb{I}_2+O\left(1/z\right),  & z\to\infty, \v\\
    \dfrac{(q+1)^2}{m^2+(q+1)^2}\left(\begin{array}{cc} 1&  \frac{im}{q+1} \v\\
\frac{im}{q+1} & 1 \end{array}\right)\left[\mathbb{I}_2+\mu_1^{(0)}(z-i)\right]e^{\frac12h_+\sigma_3}+\mathcal{O}((z-i)^2), & z\to i.
\end{array}\right.
\ene

\item {} Residue conditions: $M^{(0)}$ has simple poles at each point in $K:=\{z_n,\ -z_n^*,\, \frac{1}{z_n},\, -\frac{1}{z_n^*}\}_{j=1}^N$ with:
\begin{eqnarray}
&&\no \mathop{\mathrm{Res}}_{z=z_j}M^{(0)}(y, t; z)=\lim\limits_{z\to z_j}M^{(0)}(y, t; z)\left[\!\!\begin{array}{cc}
0& 0  \vspace{0.05in}\\
c_je^{-2i\theta(z_j)}& 0
\end{array}\!\!\right],\vspace{0.05in}\\
&&\no
\mathop{\mathrm{Res}}_{z=-z_j^*}M^{(0)}(y, t; z)=\lim\limits_{z\to -z_j^*}M^{(0)}(y, t; z)\left[\!\!\begin{array}{cc}
0& 0  \vspace{0.05in}\\
c_j^*e^{-2i\theta(-z_j^*)}& 0
\end{array}\!\!\right],\vspace{0.05in}\\
&&\no \mathop{\mathrm{Res}}_{z=\frac{1}{z_j^*}}M^{(0)}(y,t;z)=\lim\limits_{z\to \frac{1}{z_j^*}}M^{(0)}(y,t;z)\left[\!\!\begin{array}{cc}
0& 0  \vspace{0.05in}\\
-\frac{c_j^*}{z_j^{*2}}e^{-2i\theta(\frac{1}{z_j^*})}& 0
\end{array}\!\!\right],\vspace{0.05in}\\
&&\no \mathop{\mathrm{Res}}_{z=-\frac{1}{z_j}}M^{(0)}(y,t;z)=\lim\limits_{z\to -\frac{1}{z_j}}M^{(0)}(y,t;z)\left[\!\!\begin{array}{cc}
0& 0  \vspace{0.05in}\\
-\frac{c_j}{z_j^{2}}e^{-2i\theta(-\frac{1}{z_j})}& 0
\end{array}\!\!\right],\vspace{0.05in}\\
&&\no \mathop{\mathrm{Res}}_{z=z_j^*}M^{(0)}(y,t;z)=\lim\limits_{z\to z_j^*}M^{(0)}(y,t;z)\left[\!\!\begin{array}{cc}
0& -c_j^*e^{2i\theta(z_j^*)}  \vspace{0.05in}\\
0& 0
\end{array}\!\!\right], \vspace{0.05in}\\
&&\no \mathop{\mathrm{Res}}_{z=-z_j}M^{(0)}(y,t;z)=\lim\limits_{z\to -z_j}M^{(0)}(y,t;z)\left[\!\!\begin{array}{cc}
0& -c_je^{2i\theta(-z_j)}  \vspace{0.05in}\\
0& 0
\end{array}\!\!\right], \vspace{0.05in}\\
&&\no \mathop{\mathrm{Res}}_{z=\frac{1}{z_j}}M^{(0)}(y,t;z)=\lim\limits_{z\to \frac{1}{z_j}}M^{(0)}(y,t;z)\left[\!\!\begin{array}{cc}
0& \frac{c_j}{z_j^2}e^{2i\theta(\frac{1}{z_j})}  \vspace{0.05in}\\
0& 0
\end{array}\!\!\right], \vspace{0.05in}\\
&&\no \mathop{\mathrm{Res}}_{z=-\frac{1}{z_j^*}}M^{(0)}(y,t;z)=\lim\limits_{z\to -\frac{1}{z_j^*}}M^{(0)}(y,t;z)\left[\!\!\begin{array}{cc}
0& \frac{c_j^*}{z_j^{*2}}e^{2i\theta(-\frac{1}{z_j^*})}  \vspace{0.05in}\\
0& 0 \end{array}\!\!\right].
\end{eqnarray}
and at each point in $W:=\{w_j,-w_j^*\}_{j=1}^{N_2}$ with:
\begin{eqnarray}
&& \no
\mathop{\mathrm{Res}}_{z=w_j}M^{(0)}(y,t;z)=\lim\limits_{z\to w_j}M^{(0)}(y,t;z)\left[\!\!\begin{array}{cc}
0& 0  \vspace{0.05in}\\ d_je^{-2i\theta(w_j)}& 0 \end{array}\!\!\right],\vspace{0.05in}\\
&& \no  \mathop{\mathrm{Res}}_{z=-w_j^*}M^{(0)}(y,t;z)=\lim\limits_{z\to -w_j^*}M^{(0)}(y,t;z)\left[\!\!\begin{array}{cc}
0& 0  \vspace{0.05in}\\ d_j^*e^{-2i\theta(-w_j^*)}& 0 \end{array}\!\!\right],\vspace{0.05in}\\
&& \no \mathop{\mathrm{Res}}_{z=w_j^*}M^{(0)}(y,t;z)=\lim\limits_{z\to w_j^*}M^{(0)}(y,t;z)\left[\!\!\begin{array}{cc}
0& -e_j^*e^{2i\theta(w_j^*)}  \vspace{0.05in}\\ 0& 0 \end{array}\!\!\right], \vspace{0.05in}\\
&& \no  \mathop{\mathrm{Res}}_{z=-w_j}M^{(0)}(y,t;z)=\lim\limits_{z\to -w_j}M^{(0)}(y,t;z)\left[\!\!\begin{array}{cc}
0& -d_je^{2i\theta(-w_j)}  \vspace{0.05in}\\ 0& 0 \end{array}\!\!\right]
\end{eqnarray}
with $c_j$'s and $d_j$'s being complex constants.
\end{itemize}
\end{prop}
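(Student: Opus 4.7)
The plan is to verify each of the four bulleted conditions in turn, directly from the piecewise definition \eqref{RHP-M} and the analytic properties of the modified Jost solutions and scattering data collected in Subsection 2.1. The overall strategy is to treat the two normalizations $\mu_\pm$ (governing $z\to\infty$) and $\mu_\pm^{(0)}$ (governing $z\to\pm i$) on an equal footing and bridge them via the matrix identity relating $\mu_\pm$ to $\mu_\pm^{(0)}$, while using the symmetries of $S(z)$ to generate each orbit of discrete eigenvalues.

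For \emph{analyticity}, the Volterra representations of $\mu_\pm$ show that the columns $\mu_{-1}$ and $\mu_{+2}$ extend analytically to $\mathbb{C}^+$, while $\mu_{+1}$ and $\mu_{-2}$ extend to $\mathbb{C}^-$. Combined with the analyticity of $a(z)$ in $\mathbb{C}^+$, whose only zeros lie in $K\cup W$, and of $a^*(z^*)$ in $\mathbb{C}^-$, this makes \eqref{RHP-M} meromorphic on $\mathbb{C}\setminus\mathbb{R}$ with continuous boundary values on $\mathbb{R}$. For the \emph{jump condition}, I would start from $\mu_-=\mu_+ e^{\frac{i}{4}(z^{-1}-z)p(x)\widehat{\sigma}_3}S(z)$, write it columnwise, and divide the two resulting relations by $a(z)$ and $a^*(z^*)$ respectively. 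Rearranging with $\rho=b/a$ yields $M_+^{(0)}=M_-^{(0)}J(z)$ with $J$ of the stated form. Replacing $x$ by the new variable $y=x-\int_x^\infty(q-1)ds$ rewrites the phase as $\theta(x(y,t),t;z)=\tfrac{i}{2}k(z)\bigl(y-2\lambda^{-2}t\bigr)$, matching the claim.

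For \emph{normalization}, as $z\to\infty$ the Volterra representations give $\mu_\pm\to\mathbb{I}_2$, hence $M^{(0)}\to\mathbb{I}_2+O(1/z)$. For the behaviour at $z\to i$, I would substitute the identity linking $\mu_\pm$ to $\mu_\pm^{(0)}$ together with the local expansion $\mu^{(0)}=\mathbb{I}_2+\mu_1^{(0)}(z-i)+O((z-i)^2)$ into \eqref{RHP-M}; using $a(i)=1$ and evaluating the exponential prefactor $e^{\frac{i}{4}(z-1/z)h_+\sigma_3}$ at $z=i$ produces the stated local structure, the scalar prefactor being collapsed via $q^2=m+1$ to $\dfrac{(q+1)^2}{m^2+(q+1)^2}$.

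Finally, the \emph{residue conditions} follow from the proportionality of columns at a zero of $a$. At $z=z_j$, one has $\mu_{-1}(z_j)=b(z_j)e^{\frac{i}{2}(z_j^{-1}-z_j)p(x)}\mu_{+2}(z_j)$; setting $c_j:=b(z_j)/a'(z_j)$ and passing to the $y$ variable produces the first residue formula. The residues at $-z_j^*,\,1/z_j,\,-1/z_j^*$ and their Schwarz conjugates are then generated by the symmetries $S(z)=S^*(1/z^*)=\sigma_3 S(-1/z)\sigma_3$, which induce the transformation rules for the norming constants; the factors $c_j^*$, $-c_j/z_j^2$, $-c_j^*/z_j^{*2}$ arise from differentiating $a$ under the involutions $z\mapsto 1/z^*$ and $z\mapsto -1/z$. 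The unitary eigenvalues $w_j$ lie in a shorter orbit $\{w_j,-w_j^*\}$ and yield the $d_j$ residues analogously. The main obstacle is exactly this bookkeeping: one must track, for each of the eight residues around every $z_j$ (and four around every $w_j$), the correct power of $z_j$ coming from $a'$ after the involution, the correct complex conjugation, and the correct sign inherited from Schwarz reflection, all while keeping the change of variable $x\mapsto y$ consistent inside each exponential $e^{\pm 2i\theta}$; verifying that the signs and $z_j^2$ factors land simultaneously in all residues is the most delicate step.
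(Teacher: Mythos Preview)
Your sketch is correct and follows the standard derivation; it is also essentially the argument the paper has in mind. Note, however, that the paper does not supply a self-contained proof of this RH problem at all: Section~2 is explicitly a recall of known results from \cite{B1,B2,xu19,fan-adv}, and after assembling the Jost functions, the scattering relation, the symmetries $S(z)=S^*(1/z^*)=\sigma_3 S(-1/z)\sigma_3$, and the $\mu_\pm\leftrightarrow\mu_\pm^{(0)}$ bridge, the paper simply \emph{states} that $M^{(0)}$ defined by \eqref{RHP-M} satisfies RH Problem~\ref{RH1}. Your outline---analyticity from the Volterra representations, the jump from the columnwise scattering relation, the two normalizations from the $z\to\infty$ and $z\to i$ expansions linked by the gauge identity, and the residue conditions from the zeros of $a(z)$ propagated through the discrete symmetry group---is precisely how those references establish the result, so there is nothing to compare beyond noting that you have made explicit what the paper leaves to citation.
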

Then the $N$-soliton solution $u(x,t)$ of the mCH equation (\ref{mch}) is given by
\bee\label{fanyan}
u(x,t)=\lim\limits_{z\rightarrow i}\frac{1}{z-i}\left(1-\dfrac{(M^{(0)}_{11}(z)+M^{(0)}_{21}(z))(M^{(0)}_{12}(z)+M^{(0)}_{22}(z))}
{(M^{(0)}_{11}(i)+M^{(0)}_{21}(i))(M^{(0)}_{12}(i)+M^{(0)}_{22}(i))}\right),
\ene
where
\bee\no
x(y,t)=y+h_+(x,t)=y-\ln\left(\dfrac{M^{(0)}_{12}(i)+M^{(0)}_{22}(i)}{M^{(0)}_{11}(i)+M^{(0)}_{21}(i)}\right).
\ene

\section{The modified Riemann-Hilbert problem}

To study the $N_{\infty}$-soliton asymptotic behaviors of the mCH equation with linear dispersion (\ref{mch}) for the discrete spectra $K\cup K^*$ considered in this paper, we define a closed curve $\Gamma_{1+}\,(\Gamma_{2+},\,\Gamma_{3+},\,\Gamma_{4+})$ with a very small radius encircling the simple poles $\{z_j\}_{j=1}^N\,(\{-z_j^*\}_{j=1}^N,\,\{\frac{1}{z_j^*}\}_{j=1}^N,\, \{-\frac{1}{z_j}\}_{j=1}^N)$ counterclockwise in the upper half plane $\mathbb{C}_+$, respectively, and a closed curve $\Gamma_{1-}\,(\Gamma_{2-},\, \Gamma_{3-},\,\Gamma_{4-})$ with a very small radius encircling the poles $\{z_j^*\}_{j=1}^N\, (\{-z_j\}_{j=1}^N,\, \{\frac{1}{z_j}\}_{j=1}^N,\, \{-\frac{1}{z_j^*}\}_{j=1}^N)$ counterclockwise in the lower half plane $\mathbb{C}_-$, respectively. Then we make the following transform for  $M^{(0)}(y,t;z)$:
\bee
M^{(1)}(y,t;z)
=\begin{cases}M^{(0)}(y,t;z)\left[\!\!\begin{array}{cc}
1& 0  \vspace{0.05in}\\
-\sum\limits_{j=1}^{N}\dfrac{c_je^{-2i\theta(z_j)}}{z-z_j}& 1
\end{array}\!\!\right],\quad z~\mathrm{within}~\Gamma_{1+},\vspace{0.05in}\\
M^{(0)}(y,t;z)\left[\!\!\begin{array}{cc}
1& 0  \vspace{0.05in}\\
-\sum\limits_{j=1}^{N}\dfrac{c_j^*e^{-2i\theta(-z_j^*)}}{z+z_j^*}& 1
\end{array}\!\!\right],\quad z~\mathrm{within}~\Gamma_{2+},\vspace{0.05in}\\
M^{(0)}(y,t;z)\left[\!\!\begin{array}{cc}
1& 0  \vspace{0.05in}\\
\sum\limits_{j=1}^{N}\dfrac{\frac{c_j^*}{z_j^{*2}}e^{-2i\theta(\frac{1}{z_j^*})}}{z-\frac{1}{z_j^*}}& 1
\end{array}\!\!\right],\quad z~\mathrm{within}~\Gamma_{3+},\vspace{0.05in}\\
M^{(0)}(y,t;z)\left[\!\!\begin{array}{cc}
1& 0  \vspace{0.05in}\\
\sum\limits_{j=1}^{N}\dfrac{\frac{c_j}{z_j^{2}}e^{-2i\theta(-\frac{1}{z_j})}}{z+\frac{1}{z_j}}& 1
\end{array}\!\!\right],\quad z~\mathrm{within}~\Gamma_{4+},
\end{cases}
\ene
\bee
M^{(1)}(y,t;z)
=\begin{cases}
M^{(0)}(y,t;z)\left[\!\!\begin{array}{cc}
1& \sum\limits_{j=1}^{N}\dfrac{c_j^*e^{2i\theta(z_j^*)}}{z-z_j^*}  \vspace{0.05in}\\
0& 1
\end{array}\!\!\right],\quad z~\mathrm{within}~\Gamma_{1-},\v\\
M^{(0)}(y,t;z)\left[\!\!\begin{array}{cc}
1&  \sum\limits_{j=1}^{N}\dfrac{c_je^{2i\theta(-z_j)}}{z+z_j} \vspace{0.05in}\\
0& 1
\end{array}\!\!\right],\quad z~\mathrm{within}~\Gamma_{2-},\v\\
M^{(0)}(y,t;z)\left[\!\!\begin{array}{cc}
1& -\sum\limits_{j=1}^{N}\dfrac{\frac{c_j}{z_j^2}e^{2i\theta(\frac{1}{z_j})}}{z-\frac{1}{z_j}}  \vspace{0.05in}\\
0& 1
\end{array}\!\!\right],\quad z~\mathrm{within}~\Gamma_{3-},\v\\
M^{(0)}(y,t;z)\left[\!\!\begin{array}{cc}
1& -\sum\limits_{j=1}^{N}\dfrac{\frac{c_j^*}{z_j^{*2}}e^{2i\theta(-\frac{1}{z_j^*})}}{z+\frac{1}{z_j^*}} \vspace{0.05in}\\
0& 1
\end{array}\!\!\right],\quad z~\mathrm{within}~\Gamma_{4-},\v\\
M^{(0)}(y,t;z),\quad \mathrm{otherwise}.
\end{cases}
\ene

Then matrix function $M^{(1)}(y,t;z)$ satisfies the following Riemann-Hilbert problem.

\begin{prop}\label{RH2}
Find a $2\times 2$ matrix function $M^{(1)}(y,t;z)$ that satisfies:

\begin{itemize}

 \item {} Analyticity: $M^{(1)}(y,t;z)$ is analytic in $\mathbb{C}\setminus(\Gamma_{1\pm}\cup\Gamma_{2\pm}\cup\Gamma_{3\pm}\cup\Gamma_{4\pm})$ and takes continuous boundary values on $\Gamma_{1\pm}\cup\Gamma_{2\pm}\cup\Gamma_{3\pm}\cup\Gamma_{4\pm}$.

 \item {} Jump condition: The boundary values on the jump contour $\Gamma_{1\pm}\cup\Gamma_{2\pm}\cup\Gamma_{3\pm}\cup\Gamma_{4\pm}$ are defined as
 \bee
M^{(1)}_{+}(y,t;z)=M^{(1)}_{-}(y,t;z)V_1(y,t;z),\quad z\in\Gamma_{1\pm}\cup\Gamma_{2\pm}\cup\Gamma_{3\pm}\cup\Gamma_{4\pm},
\ene
where
\bee\label{V1-1}
V_1(y,t;z)
=\begin{cases}
\left[\!\!\begin{array}{cc}
1& 0  \vspace{0.05in}\\
-\sum\limits_{j=1}^{N}\dfrac{c_je^{-2i\theta(z_j)}}{z-z_j}& 1
\end{array}\!\!\right],\quad z\in\Gamma_{1+},\vspace{0.05in}\\
\left[\!\!\begin{array}{cc}
1& 0  \vspace{0.05in}\\
-\sum\limits_{j=1}^{N}\dfrac{c_j^*e^{-2i\theta(-z_j^*)}}{z+z_j^*}& 1
\end{array}\!\!\right],\quad z\in\Gamma_{2+},\vspace{0.05in}\\
\left[\!\!\begin{array}{cc}
1& 0  \vspace{0.05in}\\
\sum\limits_{j=1}^{N}\dfrac{\frac{c_j^*}{z_j^{*2}}e^{-2i\theta(\frac{1}{z_j^*})}}{z-\frac{1}{z_j^*}}& 1
\end{array}\!\!\right],\quad z\in\Gamma_{3+},\vspace{0.05in}\\
\left[\!\!\begin{array}{cc}
1& 0  \vspace{0.05in}\\
\sum\limits_{j=1}^{N}\dfrac{\frac{c_j}{z_j^{2}}e^{-2i\theta(-\frac{1}{z_j})}}{z+\frac{1}{z_j}}& 1
\end{array}\!\!\right],\quad z\in\Gamma_{4+},\vspace{0.05in}\\
\end{cases}
\ene
\bee\label{V1-1g}
V_1(y,t;z)
=\begin{cases}
\left[\!\!\begin{array}{cc}
1& \sum\limits_{j=1}^{N}\dfrac{c_j^*e^{2i\theta(z_j^*)}}{z-z_j^*}  \vspace{0.05in}\\
0& 1
\end{array}\!\!\right],\quad z\in\Gamma_{1-},\v\\
\left[\!\!\begin{array}{cc}
1&  \sum\limits_{j=1}^{N}\dfrac{c_je^{2i\theta(-z_j)}}{z+z_j} \vspace{0.05in}\\
0& 1
\end{array}\!\!\right],\quad z\in\Gamma_{2-},\v\\
\left[\!\!\begin{array}{cc}
1& -\sum\limits_{j=1}^{N}\dfrac{\frac{c_j}{z_j^2}e^{2i\theta(\frac{1}{z_j})}}{z-\frac{1}{z_j}}  \vspace{0.05in}\\
0& 1
\end{array}\!\!\right],\quad z\in\Gamma_{3-},\v\\
\left[\!\!\begin{array}{cc}
1& -\sum\limits_{j=1}^{N}\dfrac{\frac{c_j^*}{z_j^{*2}}e^{2i\theta(-\frac{1}{z_j^*})}}{z+\frac{1}{z_j^*}} \vspace{0.05in}\\
0& 1
\end{array}\!\!\right],\quad z\in\Gamma_{4-}.
\end{cases}
\ene

 \item {} Normalization:
\bee
 M^{(1)}=\left\{\begin{array}{ll}
    \mathbb{I}_2+O\left(1/z\right),  & z\to\infty, \v\\
    \dfrac{(q+1)^2}{m^2+(q+1)^2}\left(\begin{array}{cc} 1&  \frac{im}{q+1} \v\\
\frac{im}{q+1} & 1 \end{array}\right)\left[\mathbb{I}_2+\mu_1^{(0)}(z-i)\right]e^{\frac12h_+\sigma_3}+\mathcal{O}((z-i)^2), & z\to i.
\end{array}\right.
\ene

\end{itemize}
\end{prop}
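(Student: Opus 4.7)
The plan is to verify the three defining properties of the modified Riemann-Hilbert problem in order---analyticity, jump relation, and normalization---by directly unpacking the piecewise definition of $M^{(1)}$ in terms of $M^{(0)}$ and exploiting the fact that only lower/upper triangular factors are applied inside each small disk.

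First, I would handle the analyticity, which is the heart of the argument. The only candidates for failure of analyticity are the simple poles of $M^{(0)}$ sitting inside the small disks bounded by $\Gamma_{1\pm},\ldots,\Gamma_{4\pm}$; outside every disk, $M^{(1)}\equiv M^{(0)}$ is already analytic by Proposition \ref{RH1}. Take $z=z_k$ inside $\Gamma_{1+}$ as the representative case. Denoting by $E_{21}$ the matrix with a $1$ in position $(2,1)$ and zeros elsewhere, the residue condition
\[
\mathop{\mathrm{Res}}_{z=z_k} M^{(0)}=c_ke^{-2i\theta(z_k)}\lim_{z\to z_k}M^{(0)}(z)E_{21}
\]
from Proposition \ref{RH1} is equivalent to the statement that the second column of $M^{(0)}$ is regular at $z_k$ and that the residue of the first column equals $c_ke^{-2i\theta(z_k)}$ times the second column evaluated at $z_k$. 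Right-multiplication by the triangular dressing factor $B(z)=I-\sum_{j=1}^N\frac{c_je^{-2i\theta(z_j)}}{z-z_j}E_{21}$ leaves the second column untouched and replaces the first column by the first column of $M^{(0)}(z)$ minus $\sum_j\frac{c_je^{-2i\theta(z_j)}}{z-z_j}$ times the second column of $M^{(0)}(z)$. The $j=k$ term produces exactly the polar part required to cancel that of the first column of $M^{(0)}$, since the difference of the second column at $z$ and at $z_k$ is $O(z-z_k)$, while the $j\neq k$ terms are holomorphic at $z_k$. The other seven families of residues in Proposition \ref{RH1} (at $-z_j^{*}$, $1/z_j^{*}$, $-1/z_j$ and their complex conjugates) are removed by exactly the same nilpotent mechanism---lower-triangular $E_{21}$ dressings for spectra in $\mathbb{C}_+$ and upper-triangular $E_{12}$ dressings for those in $\mathbb{C}_-$---with signs and coefficients $c_j^{*}$, $-c_j/z_j^{2}$, $-c_j^{*}/z_j^{*2}$ matched to the residue weights listed in Proposition \ref{RH1} and reproduced in (\ref{V1-1})--(\ref{V1-1g}).

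Second, the jump condition follows immediately from the piecewise construction: $M^{(0)}$ is continuous across each $\Gamma_{i\pm}$ (its only singularities inside being the interior simple poles, which never lie on the contour), while $M^{(1)}=M^{(0)}B$ strictly inside each disk and $M^{(1)}=M^{(0)}$ strictly outside. Orienting each $\Gamma_{i\pm}$ counterclockwise so that the ``$+$'' boundary is the interior limit, one reads off $M^{(1)}_+=M^{(0)}B$ and $M^{(1)}_-=M^{(0)}$, so $V_1=(M^{(1)}_-)^{-1}M^{(1)}_+=B$, which is precisely the matrix listed in (\ref{V1-1})--(\ref{V1-1g}).

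Third, the normalization is essentially free: because $|z_n|>1$ (so none of $\pm z_n^{\pm 1}$ or $\pm z_n^{*\pm 1}$ can equal $i$, and all lie at positive distance from $\infty$), the radii of $\Gamma_{i\pm}$ can be chosen so small that open neighborhoods of both $z=\infty$ and $z=i$ lie strictly outside every disk. On such neighborhoods $M^{(1)}\equiv M^{(0)}$, and the two asymptotic expansions are inherited verbatim from Proposition \ref{RH1}. The main---indeed only---obstacle is the bookkeeping of matching the eight distinct residue identities with the eight distinct triangular dressings of differing signs and prefactors; once the first case $z=z_k$ is checked carefully, the remaining seven are handled in exactly the same fashion using $E_{21}^{\,2}=E_{12}^{\,2}=0$ together with the column structure of each residue.
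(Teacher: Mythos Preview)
Your proposal is correct and is precisely the standard verification that the paper leaves implicit: the paper simply defines $M^{(1)}$ piecewise and then states ``Then matrix function $M^{(1)}(y,t;z)$ satisfies the following Riemann-Hilbert problem,'' without spelling out why the poles cancel, why the jump equals the dressing factor, or why the normalization survives. Your three-step check (nilpotent $E_{21}/E_{12}$ dressing kills each simple pole, the counterclockwise orientation gives $V_1=(M^{(1)}_-)^{-1}M^{(1)}_+=B$, and the disks avoid $z=i,\infty$) is exactly the routine argument the construction demands, so there is no meaningful difference in approach.
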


According to Eq.~(\ref{fanyan}), we can recover the $N$-soliton solution $u(x,t)$ of the mCH equation (\ref{mch}) by the following formula:
\bee\label{fanyan-1}
u(x,t)=\lim\limits_{z\rightarrow i}\frac{1}{z-i}\left(1-\dfrac{(M^{(1)}_{11}(z)+M^{(1)}_{21}(z))(M^{(1)}_{12}(z)+M^{(1)}_{22}(z))}{(M^{(1)}_{11}(i)+M^{(1)}_{21}(i))(M^{(1)}_{12}(i)+M^{(1)}_{22}(i))}\right).
\ene
where
\bee\no
x(y,t)=y-\ln\left(\dfrac{M^{(1)}_{12}(i)+M^{(1)}_{22}(i)}{M^{(1)}_{11}(i)+M^{(1)}_{21}(i)}\right).
\ene

In the following, we consider the $N_{\infty}$-soliton asymptotic properties of $u(x,t)$ defined by Eq.~(\ref{fanyan-1}) when the discrete spectra are located in different regions, such as quadrature domain, line domain, and elliptic domain.

\section{$N_{\infty}$-soliton asymptotics: the quadrature domain}

In this section, we care about the $N_{\infty}$ asymptotic situation of $N$-soliton solution  under the additional assumptions:
\begin{itemize}

 \item {} The discrete spectra $z_j,j=1,\cdots,N$ fill uniformly compact domain $\Omega_1$ which is strictly contained in the domain $D_{\Gamma_{1+}}$ bounded by $\Gamma_{1+}$, that is,
\bee
\Omega_1:=\{z|~|(z-s_1)^\ell-s_2|<s_3\},\quad \Omega_1\subset D_1:=\{z\in\mathbb{C}|~0<\mathrm{arg}z<\frac{\pi}{2},|z|>1\},
\ene
where $\ell\in\mathbb{N}^+,s_1\in\mathbb{C}^+$ and $|s_2|,s_3$ are sufficiently small (see Fig.~\ref{fig1}).

 \item {} The norming constants $c_j,j=1,\cdots,N$ have the following form:
\bee\label{c-j}
c_j=\frac{|\Omega_1|r(z_j,z_j^*)}{N\pi}.
\ene
where $|\Omega_1|$ means the area of the domain $\Omega_1$ and $r(z,z^*):=n(z^*-s_1^*)^{n-1}r_1(z)$ is a smooth function with respect to variables $z$ and $z^*$ and the function $r_1(z)$ is analytic in the domain $\Omega_1$.

\end{itemize}

\begin{figure}[!t]
    \centering
 \vspace{-0.15in}
  {\scalebox{0.38}[0.38]{\includegraphics{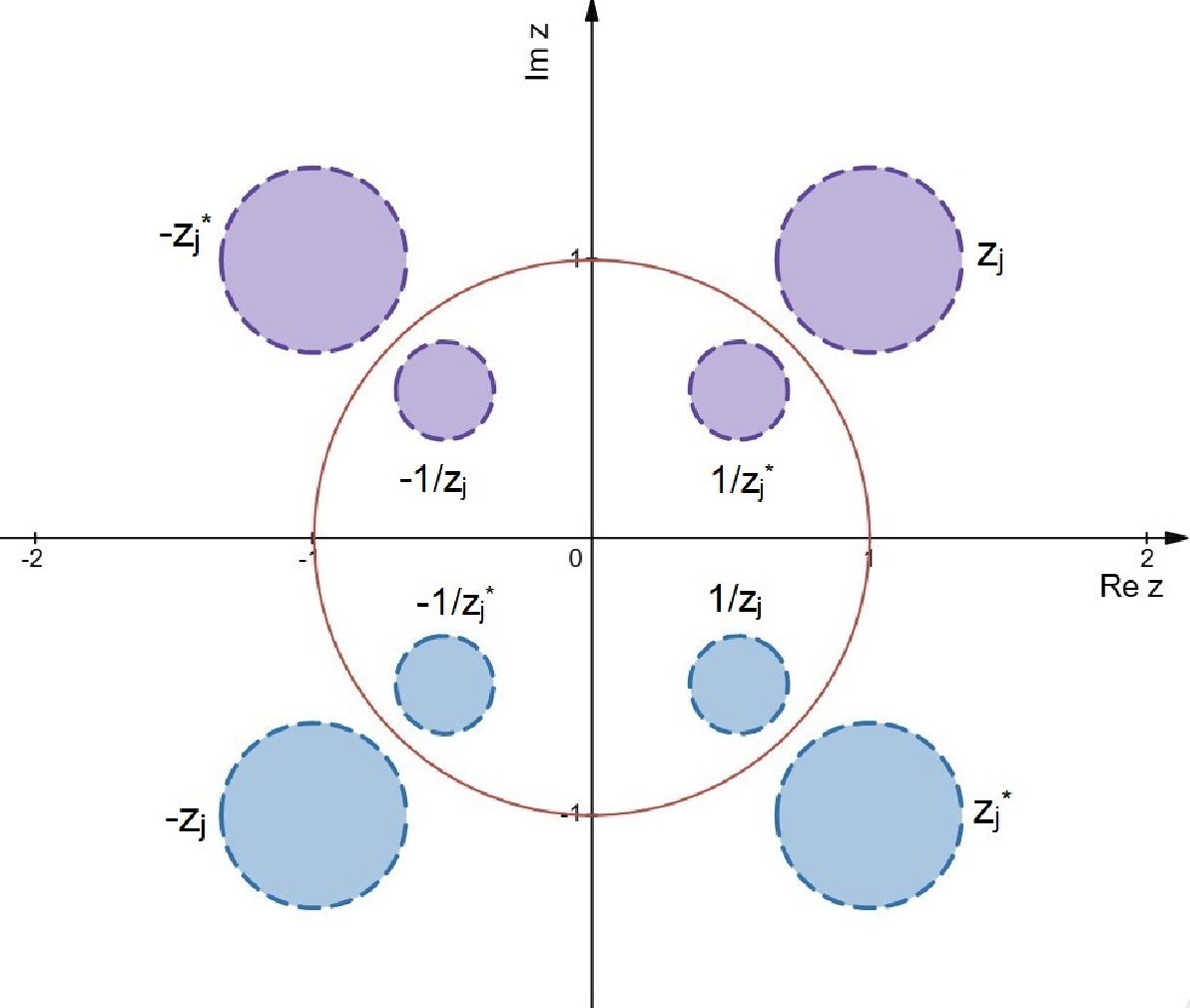}}}\hspace{-0.35in}
\vspace{0.05in}
\caption{Distribution of discrete spectrum $K\cup K^*$ and the parameters are $s_1=\frac34+i,s_2=\frac14,s_3=\frac13,m=1$.}
   \label{fig1}
\end{figure}

\begin{lemma}\label{le1}
For any open set $B_+$ containing the domain $\Omega_1$, the following identities hold:
\begin{eqnarray}
&& \no \lim\limits_{N\to\infty}\sum\limits_{j=1}^{N}\frac{c_je^{-2i\theta(z_j)}}{z-z_j}=\iint_{\Omega_1}\frac{r(\zeta,\zeta^*)e^{-2i\theta(\zeta)}}{2\pi i(z-\zeta)}d\zeta^*\wedge d\zeta, \vspace{0.2in} \\  \no \\
&&\no \lim\limits_{N\to\infty}\sum\limits_{j=1}^{N}\dfrac{c_j^*e^{-2i\theta(-z_j^*)}}{z+z_j^*}=\iint_{\Omega_1}\frac{r^*(\zeta,\zeta^*)e^{-2i\theta(-\zeta^*)}}{2\pi i(z+\zeta^*)}d\zeta^*\wedge d\zeta,\v\\ \no \\
&&\no \lim\limits_{N\to\infty}\sum\limits_{j=1}^{N}\dfrac{\frac{c_j^*}{z_j^{*2}}e^{-2i\theta(\frac{1}{z_j^*})}}{z-\frac{1}{z_j^*}}
=\iint_{\Omega_1}\dfrac{\frac{r^*(\zeta,\zeta^*)}{\zeta^{*2}}e^{-2i\theta(\frac{1}{\zeta^*})}}{2\pi i(z-\frac{1}{\zeta^*})}d\zeta^*\wedge d\zeta,\v\\ \no \\
&&\no \lim\limits_{N\to\infty}\sum\limits_{j=1}^{N}\dfrac{\frac{c_j}{z_j^{2}}e^{-2i\theta(-\frac{1}{z_j})}}{z+\frac{1}{z_j}}
=\iint_{\Omega_1}\dfrac{\frac{r(\zeta,\zeta^*)}{\zeta^{2}}e^{-2i\theta(-\frac{1}{\zeta})}}{2\pi i(z+\frac{1}{\zeta})}d\zeta^*\wedge d\zeta,\v\\ \no \\
&&\no \lim\limits_{N\to\infty}\sum\limits_{j=1}^{N}\frac{c_j^*e^{2i\theta(z_j^*)}}{z-z_j^*}=\iint_{\Omega_1}\frac{r^*(\zeta,\zeta^*)e^{2i\theta(\zeta^*)}}{2\pi i(z-\zeta^*)}d\zeta^*\wedge d\zeta,\v\\ \no \\
&&\no\lim\limits_{N\to\infty}\sum\limits_{j=1}^{N}\dfrac{c_je^{2i\theta(-z_j)}}{z+z_j}=\iint_{\Omega_1}\frac{r(\zeta,\zeta^*)e^{2i\theta(-\zeta)}}{2\pi i(z+\zeta)}d\zeta^*\wedge d\zeta,\v\\ \no \\
&&\no\lim\limits_{N\to\infty}\sum\limits_{j=1}^{N}\dfrac{\frac{c_j}{z_j^2}e^{2i\theta(\frac{1}{z_j})}}{z-\frac{1}{z_j}}=
\iint_{\Omega_1}\frac{\frac{r(\zeta,\zeta^*)}{{\zeta^2}}e^{2i\theta(\frac{1}{\zeta})}}{2\pi i(z-\frac{1}{\zeta})}d\zeta^*\wedge d\zeta,\v\\ \no \\
&& \lim\limits_{N\to\infty}\sum\limits_{j=1}^{N}\dfrac{\frac{c_j^*}{z_j^{*2}}e^{2i\theta(-\frac{1}{z_j^*})}}{z+\frac{1}{z_j^*}}=
\iint_{\Omega_1}\frac{\frac{r^*(\zeta,\zeta^*)}{\zeta^{*2}}e^{2i\theta(-\frac{1}{\zeta^*})}}{2\pi i(z+\frac{1}{\zeta^*})}d\zeta^*\wedge d\zeta,
\end{eqnarray}

uniformly for all $\mathbb{C}\setminus B_+$. The boundary $\partial\Omega_1$ is counterclockwise.

\end{lemma}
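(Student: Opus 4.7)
The plan is to recognize each of the eight sums as a Riemann sum approximating an area integral over $\Omega_1$, then convert the Lebesgue area element to the differential form $d\zeta^*\wedge d\zeta$ appearing on the right-hand sides. Since the discrete spectra $\{z_j\}_{j=1}^N$ fill $\Omega_1$ uniformly, the quantity $|\Omega_1|/N$ plays the role of the area per sample point, so for any function $F$ continuous on $\overline{\Omega_1}$,
\[
\lim_{N\to\infty}\frac{|\Omega_1|}{N}\sum_{j=1}^N F(z_j)=\iint_{\Omega_1}F(\zeta)\,dA(\zeta),
\]
where $dA=dx\,dy$. The identity $d\zeta^*\wedge d\zeta=2i\,dA(\zeta)$ then converts the right side to $\frac{1}{2i}\iint_{\Omega_1}F(\zeta)\,d\zeta^*\wedge d\zeta$.

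To establish the first identity, I substitute $c_j=\frac{|\Omega_1|r(z_j,z_j^*)}{N\pi}$ into the sum, producing
\[
\sum_{j=1}^N\frac{c_j e^{-2i\theta(z_j)}}{z-z_j}=\frac{|\Omega_1|}{N}\sum_{j=1}^N\frac{r(z_j,z_j^*)e^{-2i\theta(z_j)}}{\pi(z-z_j)},
\]
and apply the Riemann-sum limit above with $F(\zeta)=\frac{r(\zeta,\zeta^*)e^{-2i\theta(\zeta)}}{\pi(z-\zeta)}$; converting $dA$ to $\frac{1}{2i}d\zeta^*\wedge d\zeta$ yields precisely the first claimed formula. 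The remaining seven identities follow in exactly the same way, with three straightforward modifications: (i) sums built from $c_j^*$ pick up the conjugate weight $r^*(\zeta,\zeta^*)$ after passing to the limit; (ii) the residue weights $c_j/z_j^2$ and $c_j^*/z_j^{*2}$ generate the extra factors $1/\zeta^2$ and $1/\zeta^{*2}$ under the integral sign; (iii) the shifted/reflected denominators $z+z_j^*,\,z\pm 1/z_j,\,z\pm 1/z_j^*$ produce the corresponding $\zeta$-dependent denominators shown in the statement.

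For the uniformity claim on $\mathbb{C}\setminus B_+$, the key point is that whenever $z\in\mathbb{C}\setminus B_+$ we have $|z-\zeta|\geq\mathrm{dist}(\overline{\Omega_1},\partial B_+)>0$, while the numerators $r,\,r^*,\,e^{\pm 2i\theta}$ and their $\zeta$-derivatives are analytic and thus uniformly bounded on $\overline{\Omega_1}$; a standard Riemann-sum error estimate of the form $\mathcal{O}(\sup|\nabla_\zeta F|\cdot|\Omega_1|\cdot N^{-1/2})$ then delivers convergence uniform in $z\in\mathbb{C}\setminus B_+$. For the four sums with reciprocal denominators, I would tacitly enlarge $B_+$ to also contain the reflected set $\{\pm 1/\zeta,\pm 1/\zeta^*:\zeta\in\overline{\Omega_1}\}$; since $\Omega_1\subset D_1\subset\{|z|>1\}$, this reflected set lies strictly inside $\{|z|<1\}$ (in the appropriate quadrants) and is always disjoint from $\Omega_1$, so the enlargement is harmless.

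The analytic content — Riemann-sum convergence on a compact planar region against a smooth, nonsingular integrand — is classical and causes no difficulty. The main obstacle is purely bookkeeping: one must carefully trace the eight symmetry images $\{z_j,\,-z_j^*,\,\pm 1/z_j^{(*)}\}$ and their conjugate partners inherited from the Lax-pair symmetries, and match each sum with the correct integrand and weight so that the family of identities is internally consistent with the residue conditions listed in Proposition~\ref{RH1}.
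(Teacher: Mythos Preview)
Your proposal is correct and follows essentially the same approach as the paper: substitute the definition $c_j=\frac{|\Omega_1|r(z_j,z_j^*)}{N\pi}$, recognize the resulting expression as a Riemann sum for an area integral over $\Omega_1$, and convert via $d\zeta^*\wedge d\zeta=2i\,dA(\zeta)$. In fact, the paper's own proof is considerably terser---it only writes out the first identity and omits any discussion of uniformity---so your treatment of the eight symmetry images and the uniform bound on $\mathbb{C}\setminus B_+$ is more complete than what appears in the paper.
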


\begin{proof}
Using Eq.~(\ref{c-j}), we have
\begin{align}
\begin{array}{rl}
\d\lim\limits_{N\to\infty}\sum\limits_{j=1}^{N}\frac{c_je^{-2i\theta(z_j)}}{z-z_j}
=&\d \lim\limits_{N\to\infty}\sum\limits_{j=1}^{N}\frac{|\Omega_1|}{N}\frac{r(z_j,z_j^*)e^{-2i\theta(z_j)}}{\pi(z-z_j)} \v\\
=&\d \iint_{\Omega_1}\frac{r(\zeta,\zeta^*)e^{-2i\theta(\zeta)}}{2\pi i(z-\zeta)}d\zeta^*\wedge d\zeta.
\end{array}
\end{align}

Thus the proof is completed.
\end{proof}

\begin{lemma}\label{le2}
The following identities hold:
\begin{eqnarray}
&&\no \iint_{\Omega_1}\frac{r(\zeta,\zeta^*)e^{-2i\theta(\zeta)}}{2\pi i(z-\zeta)}d\zeta^*\wedge d\zeta=
\int_{\partial{\Omega_1}}\frac{(\zeta^*-s_1^*)^{n}r_1(\zeta)e^{-2i\theta(\zeta)}}{2\pi i(z-\zeta)}d\zeta,\v\\ \no \\
&&\no \iint_{\Omega_1}\frac{r^*(\zeta,\zeta^*)e^{-2i\theta(-\zeta^*)}}{2\pi i(z+\zeta^*)}d\zeta^*\wedge d\zeta
=-\int_{\partial{\Omega_1}}\frac{(\zeta-s_1)^{n}r_1^*(\zeta)e^{-2i\theta(-\zeta^*)}}{2\pi i(z+\zeta^*)}d\zeta^*,\v\\ \no \\
&&\no \iint_{\Omega_1}\dfrac{\frac{r^*(\zeta,\zeta^*)}{\zeta^{*2}}e^{-2i\theta(\frac{1}{\zeta^*})}}{2\pi i(z-\frac{1}{\zeta^*})}d\zeta^*\wedge d\zeta
=-\int_{\partial{\Omega_1}}\frac{(\zeta-s_1)^{n}r_1^*(\zeta)e^{-2i\theta(\frac{1}{\zeta^*})}}{2\pi i(z-\frac{1}{\zeta^*})\zeta^{*2}}d\zeta^*,\v\\ \no \\
&&\no \iint_{\Omega_1}\dfrac{\frac{r(\zeta,\zeta^*)}{\zeta^{2}}e^{-2i\theta(-\frac{1}{\zeta})}}{2\pi i(z+\frac{1}{\zeta})}d\zeta^*\wedge d\zeta
=\int_{\partial{\Omega_1}}\frac{(\zeta^*-s_1^*)^{n}r_1(\zeta)e^{-2i\theta(-\frac{1}{\zeta})}}{2\pi i(z+\frac{1}{\zeta})\zeta^2}d\zeta,\v\\ \no \\
&&\no \iint_{\Omega_1}\frac{r^*(\zeta,\zeta^*)e^{2i\theta(\zeta^*)}}{2\pi i(z-\zeta^*)}d\zeta^*\wedge d\zeta=-\int_{\partial{\Omega_1}}\frac{(\zeta-s_1)^{n}r_1^*(\zeta)e^{2i\theta(\zeta^*)}}{2\pi i(z-\zeta^*)}d\zeta^*,\v\\ \no \\
&&\no \iint_{\Omega_1}\frac{r(\zeta,\zeta^*)e^{2i\theta(-\zeta)}}{2\pi i(z+\zeta)}d\zeta^*\wedge d\zeta
=\int_{\partial{\Omega_1}}\frac{(\zeta^*-s_1^*)^{n}r_1(\zeta)e^{2i\theta(-\zeta)}}{2\pi i(z+\zeta)}d\zeta,\v\\ \no \\
&&\no \iint_{\Omega_1}\frac{\frac{r(\zeta,\zeta^*)}{{\zeta^2}}e^{2i\theta(\frac{1}{\zeta})}}{2\pi i(z-\frac{1}{\zeta})}d\zeta^*\wedge d\zeta
=\int_{\partial{\Omega_1}}\frac{(\zeta^*-s_1^*)^{n}r_1(\zeta)e^{2i\theta(\frac{1}{\zeta})}}{2\pi i(z-\frac{1}{\zeta})\zeta^2}d\zeta,\v\\  \no \\
&& \iint_{\Omega_1}\frac{\frac{r^*(\zeta,\zeta^*)}{\zeta^{*2}}e^{2i\theta(-\frac{1}{\zeta^*})}}{2\pi i(z+\frac{1}{\zeta^*})}d\zeta^*\wedge d\zeta
=-\int_{\partial{\Omega_1}}\frac{(\zeta-s_1)^{n}r_1^*(\zeta)e^{2i\theta(-\frac{1}{\zeta^*})}}{2\pi i(z+\frac{1}{\zeta^*})\zeta^{*2}}d\zeta^*,
\end{eqnarray}
uniformly for all $\mathbb{C}\setminus \Omega_1$. The boundary $\partial\Omega_1$ is counterclockwise.

\end{lemma}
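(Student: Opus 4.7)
The strategy is to write each double integral as the integral of a pure $\bar\partial$- (or $\partial$-) derivative of a one-form, and then invoke the complex Green/Stokes identity
\[
\iint_{\Omega_1}\frac{\partial F}{\partial \zeta^*}\,d\zeta^*\wedge d\zeta=\oint_{\partial\Omega_1} F\,d\zeta,\qquad \iint_{\Omega_1}\frac{\partial G}{\partial \zeta}\,d\zeta^*\wedge d\zeta=-\oint_{\partial\Omega_1} G\,d\zeta^*,
\]
where $\partial\Omega_1$ is oriented counterclockwise, which follows from $d\zeta^*\wedge d\zeta=2i\,dx\wedge dy$ and $d\zeta\wedge d\zeta^*=-d\zeta^*\wedge d\zeta$. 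The algebraic input is the observation that, thanks to the hypothesis $r(\zeta,\zeta^*)=n(\zeta^*-s_1^*)^{n-1}r_1(\zeta)$ with $r_1$ holomorphic on $\Omega_1$, the functions $r$ and $r^*$ are exact in the appropriate variable:
\[
r(\zeta,\zeta^*)=\frac{\partial}{\partial \zeta^*}\Bigl[(\zeta^*-s_1^*)^n\,r_1(\zeta)\Bigr],\qquad r^*(\zeta,\zeta^*)=\frac{\partial}{\partial \zeta}\Bigl[(\zeta-s_1)^n\,r_1^*(\zeta)\Bigr],
\]
since $r_1^*(\zeta)=\overline{r_1(\zeta)}$ is antiholomorphic in $\zeta$.

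For the first, fourth, sixth, and seventh identities, the factor multiplying $r$ in the integrand — an exponential $e^{\pm 2i\theta(\cdot)}$ together with a rational expression in $\zeta$ of the form $1/(z-\zeta)$, $1/(\zeta^{2}(z+1/\zeta))$, $1/(z+\zeta)$ or $1/(\zeta^{2}(z-1/\zeta))$ — is holomorphic in $\zeta$ on a neighbourhood of $\overline{\Omega_1}$. This uses that $\theta$ is holomorphic away from $\{0,\pm i\}$ and that $\Omega_1\subset D_1=\{|z|>1,\,0<\arg z<\pi/2\}$, which keeps $0$, $\pm i$, and $\pm 1/z$ off $\overline{\Omega_1}$ whenever $z\in\mathbb{C}\setminus\Omega_1$. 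Multiplying these holomorphic factors into $(\zeta^*-s_1^*)^n r_1(\zeta)$ does not affect the $\bar\partial$-antiderivative; for the first identity, for instance,
\[
\frac{\partial}{\partial\zeta^*}\!\left[\frac{(\zeta^*-s_1^*)^n r_1(\zeta)\,e^{-2i\theta(\zeta)}}{2\pi i(z-\zeta)}\right]=\frac{r(\zeta,\zeta^*)\,e^{-2i\theta(\zeta)}}{2\pi i(z-\zeta)},
\]
and the first Stokes identity above immediately yields the desired contour representation. The remaining three identities in this group are handled in exactly the same way, with the rational prefactor merely relabelled.

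For the second, third, fifth, and eighth identities, the factor accompanying $r^*$ depends only on $\zeta^*$ (i.e.\ is antiholomorphic in $\zeta$), as follows from the same analyticity properties of $\theta$. One multiplies it into $(\zeta-s_1)^n r_1^*(\zeta)$ and applies the second Stokes identity; this mechanism automatically produces the overall minus sign that appears in front of those four contour integrals.

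The only genuinely technical point is the bookkeeping: checking case by case that the extra factor is holomorphic (respectively antiholomorphic) in $\zeta$ on all of $\overline{\Omega_1}$, and tracking the sign coming from the orientation convention $d\zeta\wedge d\zeta^*=-d\zeta^*\wedge d\zeta$. I do not expect any deeper obstacle: once $r$ and $r^*$ are written as perfect derivatives, all eight identities collapse to a single Stokes computation applied to eight variants of the same holomorphic/antiholomorphic building block.
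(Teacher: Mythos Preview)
Your proposal is correct and follows the same approach as the paper: recognize $r(\zeta,\zeta^*)=\bar\partial\bigl[(\zeta^*-s_1^*)^n r_1(\zeta)\bigr]$ (and the conjugate version), then apply the complex Green/Stokes identity. The paper only writes out the first identity explicitly and declares the rest analogous; your treatment is more careful in separating the holomorphic and antiholomorphic groups and in tracking the sign from $d\zeta\wedge d\zeta^*=-d\zeta^*\wedge d\zeta$, but the method is identical.
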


\begin{proof}
Note that $r(z,z^*):=nz^{*(n-1)}r_1(z)$, using Green theorem, we have
\begin{align}\no
\begin{aligned}
\iint_{\Omega_1}\frac{r(\zeta,\zeta^*)e^{-2i\theta(\zeta)}}{2\pi i(z-\zeta)}d\zeta^*\wedge d\zeta
=&\iint_{\Omega_1}\frac{\overline{\partial}((\zeta^*-s_1^*)^{n})r_1(\zeta)e^{-2i\theta(\zeta)}}{2\pi i(z-\zeta)}d\zeta^*\wedge d\zeta\v\\
=&\iint_{\Omega_1}\overline{\partial}\left(\frac{(\zeta^*-s_1^*)^{n}r_1(\zeta)e^{-2i\theta(\zeta)}}{2\pi i(z-\zeta)}\right)d\zeta^*\wedge d\zeta\v\\
=&\int_{\partial{\Omega_1}}\frac{(\zeta^*-s_1^*)^{n}r_1(\zeta)e^{-2i\theta(\zeta)}}{2\pi i(z-\zeta)}d\zeta,
\end{aligned}
\end{align}
Thus the proof is completed.
\end{proof}

According to Lemmas \ref{le1} and \ref{le2}, we obtain a Riemann-Hilbert problem for $M^{(2)}(y,t;z):=\lim\limits_{N\to\infty}M^{(1)}$.

\begin{prop}\label{RH3}
Find a $2\times 2$ matrix function $M^{(2)}(y,t;z)$ that satisfies the following properties:

\begin{itemize}

 \item {} Analyticity: $M^{(2)}(y,t;z)$ is analytic in $\mathbb{C}\setminus(\Gamma_{1\pm}\cup\Gamma_{2\pm}\cup\Gamma_{3\pm}\cup\Gamma_{4\pm})$ and takes continuous boundary values on $\Gamma_{1\pm}\cup\Gamma_{2\pm}\cup\Gamma_{3\pm}\cup\Gamma_{4\pm}$.

 \item {} Jump condition: The boundary values on the jump contour $\Gamma_{1\pm}\cup\Gamma_{2\pm}\cup\Gamma_{3\pm}\cup\Gamma_{4\pm}$ are defined as
 \bee
M^{(2)}_{+}(y,t;z)=M^{(2)}_{-}(y,t;z)V_2(y,t;z),\quad z\in\Gamma_{1\pm}\cup\Gamma_{2\pm}\cup\Gamma_{3\pm}\cup\Gamma_{4\pm},
\ene
where
\bee\label{v2}
V_2(y,t;z)
=\begin{cases}
\left[\!\!\begin{array}{cc}
1& 0  \vspace{0.05in}\\
-\d\int_{\partial{\Omega_1}}\dfrac{(\zeta^*-s_1^*)^{n}r_1(\zeta)e^{-2i\theta(\zeta)}}{2\pi i(z-\zeta)}d\zeta& 1
\end{array}\!\!\right],\quad z\in\Gamma_{1+},\vspace{0.05in}\\
\left[\!\!\begin{array}{cc}
1& 0  \vspace{0.05in}\\
\d\int_{\partial{\Omega_1}}\dfrac{(\zeta-s_1)^{n}r_1^*(\zeta)e^{-2i\theta(-\zeta^*)}}{2\pi i(z+\zeta^*)}d\zeta^*& 1
\end{array}\!\!\right],\quad z\in\Gamma_{2+},\vspace{0.05in}\\
\left[\!\!\begin{array}{cc}
1& 0  \vspace{0.05in}\\
-\d\int_{\partial{\Omega_1}}\dfrac{(\zeta-s_1)^{n}r_1^*(\zeta)e^{-2i\theta(\frac{1}{\zeta^*})}}{2\pi i(z-\frac{1}{\zeta^*})\zeta^{*2}}d\zeta^*& 1
\end{array}\!\!\right],\quad z\in\Gamma_{3+},\vspace{0.05in}\\
\left[\!\!\begin{array}{cc}
1& 0  \vspace{0.05in}\\
\d\int_{\partial{\Omega_1}}\dfrac{(\zeta^*-s_1^*)^{n}r_1(\zeta)e^{-2i\theta(-\frac{1}{\zeta})}}{2\pi i(z+\frac{1}{\zeta})\zeta^2}d\zeta& 1
\end{array}\!\!\right],\quad z\in\Gamma_{4+},\vspace{0.05in}\\
\end{cases}
\ene
\bee\label{v2g}
V_2(y,t;z)
=\begin{cases}\left[\!\!\begin{array}{cc}
1& -\d\int_{\partial{\Omega_1}}\dfrac{(\zeta-s_1)^{n}r_1^*(\zeta)e^{2i\theta(\zeta^*)}}{2\pi i(z-\zeta^*)}d\zeta^*  \vspace{0.05in}\\
0& 1
\end{array}\!\!\right],\quad z\in\Gamma_{1-},\v\\
\left[\!\!\begin{array}{cc}
1&  \d\int_{\partial{\Omega_1}}\dfrac{(\zeta^*-s_1^*)^{n}r_1(\zeta)e^{2i\theta(-\zeta)}}{2\pi i(z+\zeta)}d\zeta \vspace{0.05in}\\
0& 1
\end{array}\!\!\right],\quad z\in\Gamma_{2-},\v\\
\left[\!\!\begin{array}{cc}
1& -\d\int_{\partial{\Omega_1}}\dfrac{(\zeta^*-s_1^*)^{n}r_1(\zeta)e^{2i\theta(\frac{1}{\zeta})}}{2\pi i(z-\frac{1}{\zeta})\zeta^2}d\zeta  \vspace{0.05in}\\
0& 1
\end{array}\!\!\right],\quad z\in\Gamma_{3-},\v\\
\left[\!\!\begin{array}{cc}
1& \d\int_{\partial{\Omega_1}}\dfrac{(\zeta-s_1)^{n}r_1^*(\zeta)e^{2i\theta(-\frac{1}{\zeta^*})}}{2\pi i(z+\frac{1}{\zeta^*})\zeta^{*2}}d\zeta^* \vspace{0.05in}\\
0& 1
\end{array}\!\!\right],\quad z\in\Gamma_{4-}.
\end{cases}
\ene

 \item {} Normalization:
\bee
 M^{(2)}=\left\{\begin{array}{ll}
    \mathbb{I}_2+O\left(1/z\right),  & z\to\infty, \v\\
    \dfrac{(q+1)^2}{m^2+(q+1)^2}\left(\begin{array}{cc} 1&  \frac{im}{q+1} \v\\
\frac{im}{q+1} & 1 \end{array}\right)\left[\mathbb{I}_2+\mu_1^{(0)}(z-i)\right]e^{\frac12 h_+\sigma_3}+\mathcal{O}((z-i)^2), & z\to i.
\end{array}\right.
\ene

\end{itemize}
\end{prop}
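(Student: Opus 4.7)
The plan is to obtain Proposition 4.3 as the $N\to\infty$ limit of the Riemann--Hilbert problem satisfied by $M^{(1)}$ in Proposition 3.1, with the work concentrated on the jump matrices and on preservation of analyticity/normalization. First I would argue that, since the contours $\Gamma_{1\pm},\ldots,\Gamma_{4\pm}$ are fixed (independent of $N$) and enclose the set where all discrete spectra accumulate as $N\to\infty$, the entries of the jump matrix $V_1$ in (\ref{V1-1})--(\ref{V1-1g}) are, term-by-term, Riemann sums for area integrals over $\Omega_1$. Define $M^{(2)}(y,t;z) := \lim_{N\to\infty} M^{(1)}(y,t;z)$ pointwise on $\mathbb{C}\setminus(\Gamma_{1\pm}\cup\cdots\cup\Gamma_{4\pm})$ and justify that the convergence is in fact uniform on compact subsets of this set via standard small-norm Riemann--Hilbert arguments (the jump matrices $V_1$ are uniformly close to $V_2$ on each $\Gamma_{i\pm}$, so the associated singular integral operators are uniformly invertible for $N$ large).

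Second, to verify the jump condition, I plug the identities of Lemma \ref{le1} directly into each branch of $V_1$ on the corresponding contour $\Gamma_{i\pm}$, obtaining $V_2$ as a collection of double integrals over $\Omega_1$ against the smooth density $r(\zeta,\zeta^*) = n(\zeta^*-s_1^*)^{n-1}r_1(\zeta)$ (and its symmetric variants). Next, I invoke Lemma \ref{le2} to rewrite each such $\bar\partial$-type integral as a contour integral over $\partial\Omega_1$: this is the application of Green's theorem that replaces $\overline{\partial}((\zeta^*-s_1^*)^n)r_1(\zeta)$ by a total $\overline{\partial}$-derivative on $\Omega_1$, since $r_1$ is analytic there and the exponential factors $e^{\pm 2i\theta(\cdot)}$ are analytic in a neighborhood of $\Omega_1$ (with $z$ outside $\Omega_1$, so the Cauchy kernels cause no singularity). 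Matching each of the eight cases produces exactly (\ref{v2})--(\ref{v2g}).

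Third, for the normalization I proceed as follows. The transform from $M^{(0)}$ to $M^{(1)}$ in Section 3 multiplies by lower/upper triangular factors that equal $\mathbb{I}_2 + O(1/z)$ as $z\to\infty$ uniformly in $N$; the sums $\sum c_j e^{-2i\theta(z_j)}/(z-z_j)$ are $O(1/z)$ with an $N$-uniform bound (since $c_j = O(1/N)$ by (\ref{c-j})). Hence the expansion $M^{(1)} = \mathbb{I}_2 + O(1/z)$ passes to the limit and yields the same normalization for $M^{(2)}$. At $z\to i$, the point $i$ is bounded away from $\Omega_1 \subset D_1$ and from the reflected regions, so the triangular factors defining $M^{(1)}$ are analytic at $z=i$ with value $\mathbb{I}_2 + O((z-i))$ uniformly in $N$; thus the local expansion of $M^{(0)}$ at $z=i$ carries over verbatim to $M^{(1)}$ and, in the limit, to $M^{(2)}$.

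The main obstacle is the rigorous passage to the limit: one must ensure both the uniform convergence of $V_1 \to V_2$ on each $\Gamma_{i\pm}$ (so that the $\bar\partial/$Cauchy integral operators are continuously deformed) and the well-posedness of the limiting RH problem, so that $M^{(2)}$ is uniquely characterized by the listed properties. Any deviation of the discrete spectra from exact uniformity, or non-analyticity of $r_1$ on $\partial\Omega_1$, would break the application of Green's theorem in Lemma \ref{le2}; our assumptions (\ref{c-j}) and the analyticity of $r_1$ on $\Omega_1$ are precisely what make this step work. Once these points are in hand, analyticity of $M^{(2)}$ off the contours follows from the uniform limit of analytic functions, and all three bullets of Proposition 4.3 are established.
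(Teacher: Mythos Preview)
Your approach is essentially the same as the paper's: the paper simply states that ``According to Lemmas \ref{le1} and \ref{le2}, we obtain a Riemann-Hilbert problem for $M^{(2)}(y,t;z):=\lim_{N\to\infty}M^{(1)}$,'' i.e., it applies Lemma~\ref{le1} to replace the finite sums in $V_1$ by double integrals over $\Omega_1$ and then Lemma~\ref{le2} to convert these to boundary integrals over $\partial\Omega_1$, yielding $V_2$. Your proposal follows exactly this route but supplies more detail (uniform convergence, small-norm arguments, preservation of normalization at $z=\infty$ and $z=i$) than the paper itself provides.
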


According to Eq.~(\ref{fanyan-1}), we can recover $u(x,t)$ by the following formula:
\bee
u(x,t)=\lim\limits_{z\rightarrow i}\frac{1}{z-i}\left(1-\dfrac{(M^{(2)}_{11}(z)+M^{(2)}_{21}(z))(M^{(2)}_{12}(z)+M^{(2)}_{22}(z))}
{(M^{(2)}_{11}(i)+M^{(2)}_{21}(i))(M^{(2)}_{12}(i)+M^{(2)}_{22}(i))}\right),
\ene
where
\bee\no
x(y,t)=y-\ln\left(\dfrac{M^{(2)}_{12}(i)+M^{(2)}_{22}(i)}{M^{(2)}_{11}(i)+M^{(2)}_{21}(i)}\right).
\ene

For the different quadrature domains, we will find different types of $N_\infty$-soliton asymptotic behaviors.

\subsection{$N_{\infty}$-soliton asymptotics: one-soliton solution}

\v {\it Case I}.---One-soliton solution. In this case, we choose $n=\ell=1$ such that we obtain the following proposition.

\begin{propo} Let $\zeta_0:=s_1+s_2$, then the solution of the RH problem \ref{RH3} can be used to generate the one-soliton solution $u_1(x,t)$ of the mCH equation (\ref{mch}) with the discrete spectrum $\zeta_0$ and norming constant $c_1=s_3^2r_1(\zeta_0)$.
\end{propo}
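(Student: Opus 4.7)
The plan is to specialize the RH problem in Proposition~4.1 to the case $n=\ell=1$, observe that $\Omega_1$ becomes the disk $\{z:|z-\zeta_0|<s_3\}$ with $\zeta_0=s_1+s_2$, and then show that each of the eight contour integrals appearing in the jump matrix $V_2$ collapses to a single rational term. The resulting $V_2$ then coincides with the jump matrix of the standard one-soliton RH problem (i.e.\ Proposition~3.1 with $N=1$) associated with the discrete spectrum $\zeta_0$ and norming constant $c_1=s_3^2\,r_1(\zeta_0)$, so the reconstruction formula \eqref{fanyan-1} delivers exactly the one-soliton $u_1(x,t)$.

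The main tool I would use is the Schwarz function of the disk: on $\partial\Omega_1$ the identity $(\zeta-\zeta_0)(\zeta^*-\zeta_0^*)=s_3^2$ holds, giving
\[
\zeta^*-s_1^*=s_2^*+\frac{s_3^2}{\zeta-\zeta_0},\qquad \zeta-s_1=s_2+\frac{s_3^2}{\zeta^*-\zeta_0^*}.
\]
I would insert the first identity into the $(2,1)$-entry of $V_2|_{\Gamma_{1+}}$ from \eqref{v2}. The integrand splits into a piece proportional to $s_2^*$, whose factor $r_1(\zeta)e^{-2i\theta(\zeta)}/(z-\zeta)$ is analytic on $\overline{\Omega_1}$ for $z$ outside a small neighbourhood of $\Omega_1$ (note $r_1$ is analytic in $\Omega_1$ by hypothesis, and $e^{-2i\theta(\zeta)}$ is entire in $\zeta\in D_1$), so that piece vanishes by Cauchy's theorem; and a piece with the extra factor $s_3^2/(\zeta-\zeta_0)$, which by the residue theorem contributes exactly $s_3^2 r_1(\zeta_0)e^{-2i\theta(\zeta_0)}/(z-\zeta_0)$. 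This is precisely the $(2,1)$-entry of the one-soliton jump matrix with pole $\zeta_0$ and norming constant $s_3^2 r_1(\zeta_0)$.

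I would then repeat the same reduction on the remaining seven arcs. On $\Gamma_{2+},\Gamma_{3+},\Gamma_{4+}$ and their lower-half-plane mirrors, the factors $(\zeta-s_1)^n$ and additional rational weights $1/\zeta^2$ or $1/\zeta^{*2}$ appear, but the latter are analytic on $\overline{\Omega_1}$ because $\Omega_1\subset D_1$ stays away from the origin, so the second Schwarz identity applies verbatim. Each integrand again decomposes into an analytic piece (killed by Cauchy's theorem) plus a simple-pole piece whose residue at $\zeta_0$, respectively $\zeta_0^*$, produces the expected $1/(z+\zeta_0^*)$, $1/(z-1/\zeta_0^*)$, $1/(z+1/\zeta_0)$ etc. terms with the correct norming constants. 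Matching these against the eight residue conditions in Proposition~2.1 for $N=1$ and $z_1=\zeta_0$ with $c_1=s_3^2r_1(\zeta_0)$ gives the identification.

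The main obstacle I anticipate is bookkeeping: the eight arcs come with mixed differentials $d\zeta$ and $d\zeta^*$, the conjugation $\zeta\mapsto\zeta^*$ reverses orientation of $\partial\Omega_1$, and the various signs in Lemma~4.2 already encode that reversal. I would need to carefully verify that after the Schwarz substitution the residue at the conjugated or inverted centre produces exactly the sign demanded by \eqref{V1-1}--\eqref{V1-1g} at $N=1$, so that the symmetry orbit $\{\zeta_0,-\zeta_0^*,1/\zeta_0^*,-1/\zeta_0\}\cup\{\zeta_0^*,-\zeta_0,1/\zeta_0,-1/\zeta_0^*\}$ of simple poles is reproduced. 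Once this verification is complete, uniqueness of the RH solution implies $M^{(2)}$ coincides with the one-soliton $M^{(1)}$, and \eqref{fanyan-1} yields $u_1(x,t)$ as claimed.
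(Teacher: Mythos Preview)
Your proposal is correct and follows essentially the same approach as the paper: the key step is the Schwarz-function identity $\zeta^*-s_1^*=s_2^*+\dfrac{s_3^2}{\zeta-\zeta_0}$ on $\partial\Omega_1$ (the paper's Eq.~(\ref{propo1-1})), after which each contour integral in $V_2$ collapses to a single rational term by Cauchy's theorem plus a residue at $\zeta_0$ (or $\zeta_0^*$), yielding exactly the one-soliton jump matrix with norming constant $s_3^2 r_1(\zeta_0)$. Your explanation of \emph{why} the $s_2^*$-piece vanishes and the pole piece survives is in fact more explicit than the paper's, which simply records the outcome of the substitution; the bookkeeping concern you flag about orientations and signs in the $d\zeta^*$-integrals is real but routine, and the paper handles it the same way you outline.
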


\begin{proof} At $\ell=n=1$, the boundary of $\Omega_1^*$ is described by
\bee\label{propo1-1}
z^*=s_1^*+\left(s_2^*+\dfrac{s_3^2}{z-\zeta_0}\right),\quad z\in\partial\Omega_1.
\ene

Substituting Eq.~(\ref{propo1-1}) into Eqs.~(\ref{v2}) and (\ref{v2g}), we obtain
\begin{eqnarray}
&&\no \int_{\partial{\Omega_1}}\frac{(\zeta^*-s_1^*)r_1(\zeta)e^{-2i\theta(\zeta)}}{2\pi i(z-\zeta)}d\zeta
=\frac{s_3^2r_1(\zeta_0)e^{-2i\theta(\zeta_0)}}{z-\zeta_0},\v\\ \no \\
&&\no \int_{\partial{\Omega_1}}\frac{(\zeta-s_1)r_1^*(\zeta)e^{-2i\theta(-\zeta^*)}}{2\pi i(z+\zeta^*)}d\zeta^*
=-\frac{s_3^2r_1^*(\zeta_0)e^{-2i\theta(-\zeta_0^*)}}{z+\zeta_0^*},\v\\  \no \\
&&\no \int_{\partial{\Omega_1}}\frac{(\zeta-s_1)r_1^*(\zeta)e^{-2i\theta(\frac{1}{\zeta^*})}}{2\pi i(z-\frac{1}{\zeta^*})\zeta^{*2}}d\zeta^*
=-\frac{s_3^2r_1^*(\zeta_0)e^{-2i\theta(\frac{1}{\zeta_0^*})}}{(z-\frac{1}{\zeta_0^*})\zeta_0^{*2}},\v\\  \no \\
&&\no \int_{\partial{\Omega_1}}\frac{(\zeta^*-s_1^*)r_1(\zeta)e^{-2i\theta(-\frac{1}{\zeta})}}{2\pi i(z+\frac{1}{\zeta})\zeta^2}d\zeta
=\frac{s_3^2r_1(\zeta_0)e^{-2i\theta(-\frac{1}{\zeta_0})}}{(z+\frac{1}{\zeta_0})\zeta_0^2},\v\\ \no \\
&&\no \int_{\partial{\Omega_1}}\frac{(\zeta-s_1)r_1^*(\zeta)e^{2i\theta(\zeta^*)}}{2\pi i(z-\zeta^*)}d\zeta^*
=-\frac{s_3^2r_1^*(\zeta_0)e^{2i\theta(\zeta_0^*)}}{z-\zeta_0^*},\v\\ \no \\
&&\no \int_{\partial{\Omega_1}}\frac{(\zeta^*-s_1^*)r_1(\zeta)e^{2i\theta(-\zeta)}}{2\pi i(z+\zeta)}d\zeta
=\frac{s_3^2r_1(\zeta)e^{2i\theta(-\zeta_0)}}{z+\zeta_0},\v\\ \no \\
&&\no \int_{\partial{\Omega_1}}\frac{(\zeta^*-s_1^*)r_1(\zeta)e^{2i\theta(\frac{1}{\zeta})}}{2\pi i(z-\frac{1}{\zeta})\zeta^2}d\zeta
=\frac{s_3^2r_1(\zeta_0)e^{2i\theta(\frac{1}{\zeta_0})}}{(z-\frac{1}{\zeta_0})\zeta_0^2},\v\\ \no \\
&& \int_{\partial{\Omega_1}}\frac{(\zeta-s_1)r_1^*(\zeta)e^{2i\theta(-\frac{1}{\zeta^*})}}{2\pi i(z+\frac{1}{\zeta^*})\zeta^{*2}}d\zeta^*
=-\frac{s_3^2r_1^*(\zeta_0)e^{2i\theta(-\frac{1}{\zeta_0^*})}}{(z+\frac{1}{\zeta_0^*})\zeta_0^{*2}}.
\label{propo1-2}
\end{eqnarray}
Then the jump matrixes given by Eqs.~(\ref{v2}) and (\ref{v2g}) can be rewritten as:
\bee \label{v22}
V_2(y,t;z)|_{\ell=n=1}
=\begin{cases}
\left[\!\!\begin{array}{cc}
1& 0  \vspace{0.05in}\\
-\dfrac{s_3^2r_1(\zeta_0)e^{-2i\theta(\zeta_0)}}{z-\zeta_0}& 1
\end{array}\!\!\right],\quad z\in\Gamma_{1+},\vspace{0.05in}\\
\left[\!\!\begin{array}{cc}
1& 0  \vspace{0.05in}\\
-\dfrac{s_3^2r_1^*(\zeta_0)e^{-2i\theta(-\zeta_0^*)}}{z+\zeta_0^*}& 1
\end{array}\!\!\right],\quad z\in\Gamma_{2+},\vspace{0.05in}\\
\left[\!\!\begin{array}{cc}
1& 0  \vspace{0.05in}\\
\dfrac{s_3^2r_1^*(\zeta_0)e^{-2i\theta(\frac{1}{\zeta_0^*})}}{(z-\frac{1}{\zeta_0^*})\zeta_0^{*2}}& 1
\end{array}\!\!\right],\quad z\in\Gamma_{3+},\vspace{0.05in}\\
\left[\!\!\begin{array}{cc}
1& 0  \vspace{0.05in}\\
\dfrac{s_3^2r_1(\zeta_0)e^{-2i\theta(-\frac{1}{\zeta_0})}}{(z+\frac{1}{\zeta_0})\zeta_0^2}& 1
\end{array}\!\!\right],\quad z\in\Gamma_{4+},\vspace{0.05in}\\
\end{cases}
\ene
\bee\label{v22g}
V_2(y,t;z)|_{\ell=n=1}
=\begin{cases}
\left[\!\!\begin{array}{cc}
1& \dfrac{s_3^2r_1^*(\zeta_0)e^{2i\theta(\zeta_0^*)}}{z-\zeta_0^*}  \vspace{0.05in}\\
0& 1
\end{array}\!\!\right],\quad z\in\Gamma_{1-},\v\\
\left[\!\!\begin{array}{cc}
1& \dfrac{s_3^2r_1(\zeta)e^{2i\theta(-\zeta_0)}}{z+\zeta_0} \vspace{0.05in}\\
0& 1
\end{array}\!\!\right],\quad z\in\Gamma_{2-},\v\\
\left[\!\!\begin{array}{cc}
1& -\dfrac{s_3^2r_1(\zeta_0)e^{2i\theta(\frac{1}{\zeta_0})}}{(z-\frac{1}{\zeta_0})\zeta_0^2}  \vspace{0.05in}\\
0& 1
\end{array}\!\!\right],\quad z\in\Gamma_{3-},\v\\
\left[\!\!\begin{array}{cc}
1& -\dfrac{s_3^2r_1^*(\zeta_0)e^{2i\theta(-\frac{1}{\zeta_0^*})}}{(z+\frac{1}{\zeta_0^*})\zeta_0^{*2}} \vspace{0.05in}\\
0& 1
\end{array}\!\!\right],\quad z\in\Gamma_{4-}.
\end{cases}
\ene

Let the discrete spectrum be $z_1:=\zeta_0$ and the norming constant $c_1=s_3^2r_1(\zeta_0)$. Then the solution of Riemann-Hilbert problem \ref{RH3} with the jump matrix given by Eqs.~(\ref{v22}) and (\ref{v22g}) can generate the one-soliton solution $u_1(x,t)$ of the mCH equation (\ref{mch}).
\end{proof}

\subsection{$N_{\infty}$-soliton asymptotics: the n-soliton solution}

\v {\it  Case II}.---$n$-soliton solution. In this case, if we choose $n=\ell$. Then we obtain the following proposition.

\begin{propo} Let $\{\zeta_1,\zeta_2,\cdots,\zeta_n\}$ being the solutions to the algebraic equation $(z-s_1)^n=s_2$, then the solution of the RH problem \ref{RH3} can lead to the $n$-soliton solution $u_n(x,t)$ of the mCH equation (\ref{mch}) with discrete spectra $\zeta_j,\, j=1,\cdots,n$ and norming constants $c_j=\dfrac{s_3^2r_1(\zeta_j)}{\prod_{k\neq j}(\zeta_j-\zeta_k)},j=1,\cdots,n$.
\end{propo}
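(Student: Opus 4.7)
The plan is to mimic the one-soliton argument of Case I, but now with $n$ simple poles appearing by the residue theorem instead of a single one. The crucial algebraic input is a parametrization of $\partial\Omega_1$ that expresses $(\zeta^*-s_1^*)^{n}$ as a meromorphic function of $\zeta$. Since $\Omega_1=\{z:|(z-s_1)^{\ell}-s_2|<s_3\}$ with $\ell=n$, writing the boundary as $(z-s_1)^n-s_2=s_3 e^{i\phi}$, $\phi\in[0,2\pi)$, and conjugating gives $((z-s_1)^n-s_2)((z^*-s_1^*)^n-s_2^*)=s_3^2$. Factoring $(z-s_1)^n-s_2=\prod_{k=1}^{n}(z-\zeta_k)$ yields the key identity
\be
(\zeta^*-s_1^*)^n=s_2^*+\frac{s_3^2}{\prod_{k=1}^{n}(\zeta-\zeta_k)},\qquad \zeta\in\partial\Omega_1,
\ee
which is the direct generalization of (4.10).

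Next I would substitute this identity into each of the eight contour integrals defining the jump matrix $V_2(y,t;z)$ in Proposition 4.3. For the first one, this yields
\be
\int_{\partial\Omega_1}\frac{(\zeta^*-s_1^*)^n r_1(\zeta) e^{-2i\theta(\zeta)}}{2\pi i(z-\zeta)}\,d\zeta
= \int_{\partial\Omega_1}\frac{s_2^* r_1(\zeta) e^{-2i\theta(\zeta)}}{2\pi i(z-\zeta)}\,d\zeta
+ \int_{\partial\Omega_1}\frac{s_3^2 r_1(\zeta) e^{-2i\theta(\zeta)}}{2\pi i(z-\zeta)\prod_{k=1}^{n}(\zeta-\zeta_k)}\,d\zeta.
\ee
The first piece vanishes by Cauchy's theorem, since $r_1(\zeta) e^{-2i\theta(\zeta)}/(z-\zeta)$ is analytic inside $\Omega_1$ (the pole at $\zeta=z\in\Gamma_{1+}$ lies outside $\Omega_1$, and neither $0$ nor $\pm i$ are in $\Omega_1\subset D_1$). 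The second piece, by the residue theorem applied to the $n$ simple poles $\zeta_j\in\Omega_1$, collapses to $\sum_{j=1}^{n}\frac{c_j e^{-2i\theta(\zeta_j)}}{z-\zeta_j}$ with exactly the norming constants $c_j=s_3^2 r_1(\zeta_j)/\prod_{k\neq j}(\zeta_j-\zeta_k)$ stated in the proposition.

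The remaining seven integrals are handled in the same way: for the ones integrated against $d\zeta^*$ I use the conjugate identity $(\zeta-s_1)^n=s_2+s_3^2/\prod_k(\zeta^*-\zeta_k^*)$ on $\partial\Omega_1^*$, and pick up residues at $\zeta^*=\zeta_j^*$; the minus signs that already appear in $V_2$ on $\Gamma_{2\pm},\Gamma_{3\pm},\Gamma_{4\pm}$ in (4.8)--(4.9) combine with the orientation reversal produced by conjugating the contour, giving exactly the signs needed to match the $n$-soliton residue conditions at $\{\zeta_j,-\zeta_j^*,1/\zeta_j^*,-1/\zeta_j\}$ and their conjugates in Proposition 2.1. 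After these eight substitutions, the jump matrix $V_2$ coincides term for term with the jump matrix $V_1$ of Proposition 3.1 (RH Problem 2) for discrete spectra $\{\zeta_j\}_{j=1}^{n}$ and norming constants $\{c_j\}_{j=1}^{n}$, so the reconstruction formula (4.12) at $z=i$ delivers precisely the smooth $n$-soliton $u_n(x,t)$ of (1.1).

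I expect the main technical obstacle to be bookkeeping: tracking the symmetry orbit $\{\zeta_j,-\zeta_j^*,1/\zeta_j^*,-1/\zeta_j\}$ under conjugation of $\partial\Omega_1$, and checking that each of the four pairs $(\Gamma_{k+},\Gamma_{k-})$ receives the correctly signed residue sum. A secondary point is to confirm that the auxiliary factors $1/\zeta^{*2}$ and $1/\zeta^2$ that appear in the third and fourth jumps do not introduce extra poles inside $\Omega_1$ (they do not, since $0\notin\Omega_1$), so the ``residue part'' argument above goes through verbatim. Once these sign and analyticity checks are in place, the equivalence with the standard $n$-soliton RH problem, and hence the identification of $u(x,t)$ with $u_n(x,t)$, is immediate.
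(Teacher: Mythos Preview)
Your proposal is correct and follows essentially the same approach as the paper: both hinge on the boundary identity $(\zeta^*-s_1^*)^n=s_2^*+\dfrac{s_3^2}{(\zeta-s_1)^n-s_2}=s_2^*+\dfrac{s_3^2}{\prod_{k=1}^n(\zeta-\zeta_k)}$ on $\partial\Omega_1$ (the paper's Eq.~(\ref{propo2-1}) raised to the $n$th power), followed by direct substitution into the eight integrals of $V_2$. Your write-up is in fact more explicit than the paper's, since you spell out why the $s_2^*$ piece vanishes by Cauchy's theorem and why the remaining piece reduces via residues at the simple poles $\zeta_j\in\Omega_1$ to the $n$-soliton jump data; the paper simply asserts the resulting identities (\ref{propo2-2}).
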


\begin{proof} At $\ell=n$, the boundary of $\Omega_1^*$ is described by
\bee\label{propo2-1}
z^*=s_1^*+\left(s_2^*+\dfrac{s_3^2}{(z-s_1)^m-s_2}\right)^{\frac{1}{n}},\quad z\in\partial\Omega_1.
\ene

Substituting Eq.~(\ref{propo2-1}) into Eqs.~(\ref{v2}) and (\ref{v2g}), we obtain
\begin{eqnarray}
&& \no \int_{\partial{\Omega_1}}\frac{(\zeta^*-s_1^*)^nr_1(\zeta)e^{-2i\theta(\zeta)}}{2\pi i(z-\zeta)}d\zeta
=\sum_{j=1}^n\frac{s_3^2r_1(\zeta_j)e^{-2i\theta(\zeta_j)}}{(z-\zeta_j)\prod_{k\neq j}(\zeta_j-\zeta_k)},\v\\ \no \\
&& \no  \int_{\partial{\Omega_1}}\frac{(\zeta-s_1)^nr_1^*(\zeta)e^{-2i\theta(-\zeta^*)}}{2\pi i(z+\zeta^*)}d\zeta^*
=-\sum_{j=1}^n\frac{s_3^2r_1^*(\zeta_j)e^{-2i\theta(-\zeta_j^*)}}{(z+\zeta_j^*)\prod_{k\neq j}(\zeta_j^*-\zeta_k^*)},\v\\ \no \\
&& \no  \int_{\partial{\Omega_1}}\frac{(\zeta-s_1)^nr_1^*(\zeta)e^{-2i\theta(\frac{1}{\zeta^*})}}{2\pi i(z-\frac{1}{\zeta^*})\zeta^{*2}}d\zeta^*
=-\sum_{j=1}^n\frac{s_3^2r_1^*(\zeta_j)e^{-2i\theta(\frac{1}{\zeta_j^*})}}{(z-\frac{1}{\zeta_j^*})\zeta_j^{*2}\prod_{k\neq j}(\zeta_j^*-\zeta_k^*)},\v\\  \no \\
&& \no  \int_{\partial{\Omega_1}}\frac{(\zeta^*-s_1^*)^nr_1(\zeta)e^{-2i\theta(-\frac{1}{\zeta})}}{2\pi i(z+\frac{1}{\zeta})\zeta^2}d\zeta
=\sum_{j=1}^n\frac{s_3^2r_1(\zeta_j)e^{-2i\theta(-\frac{1}{\zeta_j})}}{(z+\frac{1}{\zeta_j})\zeta_j^2\prod_{k\neq j}(\zeta_j-\zeta_k)},\v\\ \no \\
&& \no  \int_{\partial{\Omega_1}}\frac{(\zeta-s_1)^nr_1^*(\zeta)e^{2i\theta(\zeta^*)}}{2\pi i(z-\zeta^*)}d\zeta^*
=-\sum_{j=1}^n\frac{s_3^2r_1^*(\zeta_j)e^{2i\theta(\zeta_j^*)}}{(z-\zeta_j^*)\prod_{k\neq j}(\zeta_j^*-\zeta_k^*)},\v\\ \no \\
&& \no  \int_{\partial{\Omega_1}}\frac{(\zeta^*-s_1^*)^nr_1(\zeta)e^{2i\theta(-\zeta)}}{2\pi i(z+\zeta)}d\zeta
=\sum_{j=1}^n\frac{s_3^2r_1(\zeta)e^{2i\theta(-\zeta_j)}}{(z+\zeta_j)\prod_{k\neq j}(\zeta_j-\zeta_k)},\v\\ \no \\
&& \no  \int_{\partial{\Omega_1}}\frac{(\zeta^*-s_1^*)^nr_1(\zeta)e^{2i\theta(\frac{1}{\zeta})}}{2\pi i(z-\frac{1}{\zeta})\zeta^2}d\zeta
=\sum_{j=1}^n\frac{s_3^2r_1(\zeta_j)e^{2i\theta(\frac{1}{\zeta_j})}}{(z-\frac{1}{\zeta_j})\zeta_j^2\prod_{k\neq j}(\zeta_j-\zeta_k)},\v\\ \no \\
&& \int_{\partial{\Omega_1}}\frac{(\zeta-s_1)^nr_1^*(\zeta)e^{2i\theta(-\frac{1}{\zeta^*})}}{2\pi i(z+\frac{1}{\zeta^*})\zeta^{*2}}d\zeta^*
=-\sum_{j=1}^n\frac{s_3^2r_1^*(\zeta_j)e^{2i\theta(-\frac{1}{\zeta_j^*})}}{(z+\frac{1}{\zeta_j^*})\zeta_j^{*2}\prod_{k\neq j}(\zeta_j^*-\zeta_k^*)}.
\label{propo2-2}
\end{eqnarray}
Then the jump matrixes given by Eqs.~(\ref{v2}) and (\ref{v2g}) can be rewritten as:
\bee \label{v23}
V_2(y,t;z)|_{\ell=n}
=\begin{cases}
\left[\!\!\begin{array}{cc}
1& 0  \vspace{0.05in}\\
-\d\sum_{j=1}^n\frac{s_3^2r_1(\zeta_j)e^{-2i\theta(\zeta_j)}}{(z-\zeta_j)\prod_{k\neq j}(\zeta_j-\zeta_k)}& 1
\end{array}\!\!\right],\quad z\in\Gamma_{1+},\vspace{0.05in}\\
\left[\!\!\begin{array}{cc}
1& 0  \vspace{0.05in}\\
-\d \sum_{j=1}^n\frac{s_3^2r_1^*(\zeta_j)e^{-2i\theta(-\zeta_j^*)}}{(z+\zeta_j^*)\prod_{k\neq j}(\zeta_j^*-\zeta_k^*)}& 1
\end{array}\!\!\right],\quad z\in\Gamma_{2+},\vspace{0.05in}\\
\end{cases}
\ene
\bee \label{v231}
V_2(y,t;z)|_{\ell=n}
=\begin{cases}
\left[\!\!\begin{array}{cc}
1& 0  \vspace{0.05in}\\
\d \sum_{j=1}^n\frac{s_3^2r_1^*(\zeta_j)e^{-2i\theta(\frac{1}{\zeta_j^*})}}{(z-\frac{1}{\zeta_j^*})\zeta_j^{*2}\prod_{k\neq j}(\zeta_j^*-\zeta_k^*)}& 1
\end{array}\!\!\right],\quad z\in\Gamma_{3+},\vspace{0.05in}\\
\left[\!\!\begin{array}{cc}
1& 0  \vspace{0.05in}\\
\d \sum_{j=1}^n\frac{s_3^2r_1(\zeta_j)e^{-2i\theta(-\frac{1}{\zeta_j})}}{(z+\frac{1}{\zeta_j})\zeta_j^2\prod_{k\neq j}(\zeta_j-\zeta_k)}& 1
\end{array}\!\!\right],\quad z\in\Gamma_{4+},\vspace{0.05in}\\
\end{cases}
\ene
\bee \label{v23g}
V_2(y,t;z)|_{\ell=n}
=\begin{cases}
\left[\!\!\begin{array}{cc}
1& \d \sum_{j=1}^n\frac{s_3^2r_1^*(\zeta_j)e^{2i\theta(\zeta_j^*)}}{(z-\zeta_j^*)\prod_{k\neq j}(\zeta_j^*-\zeta_k^*)}  \vspace{0.05in}\\
0& 1
\end{array}\!\!\right],\quad z\in\Gamma_{1-},\v\\
\left[\!\!\begin{array}{cc}
1& \d \sum_{j=1}^n\frac{s_3^2r_1(\zeta)e^{2i\theta(-\zeta_j)}}{(z+\zeta_j)\prod_{k\neq j}(\zeta_j-\zeta_k)} \vspace{0.05in}\\
0& 1
\end{array}\!\!\right],\quad z\in\Gamma_{2-},
\end{cases}
\ene
\bee
V_2(y,t;z)|_{\ell=n}
=\begin{cases}
\left[\!\!\begin{array}{cc}
1& -\d \sum_{j=1}^n\frac{s_3^2r_1(\zeta_j)e^{2i\theta(\frac{1}{\zeta_j})}}{(z-\frac{1}{\zeta_j})\zeta_j^2\prod_{k\neq j}(\zeta_j-\zeta_k)}  \vspace{0.05in}\\
0& 1
\end{array}\!\!\right],\quad z\in\Gamma_{3-},\v\\
\left[\!\!\begin{array}{cc}
1& -\d \sum_{j=1}^n\frac{s_3^2r_1^*(\zeta_j)e^{2i\theta(-\frac{1}{\zeta_j^*})}}{(z+\frac{1}{\zeta_j^*})\zeta_j^{*2}\prod_{k\neq j}(\zeta_j^*-\zeta_k^*)} \vspace{0.05in}\\
0& 1
\end{array}\!\!\right],\quad z\in\Gamma_{4-}.
\end{cases}
\ene

Let the discrete spectra be $z_j:=\zeta_j,j=1,\cdots,n$ and the norming constants $c_j=\dfrac{s_3^2r_1(\zeta_j)}{\prod_{k\neq j}(\zeta_j-\zeta_k)},j=1,\cdots,n$, then the solution of Riemann-Hilbert problem \ref{RH3} with the jump matrixes given by
 Eqs.~(\ref{v23}) and (\ref{v23g}) can generate the $n$-soliton solution $u_n(x,t)$ of the mCH equation (\ref{mch}) with discrete spectrums $z_j,j=1,\cdots,n$ and norming constants $c_j,j=1,\cdots,n$.
\end{proof}

\section{$N_{\infty}$-soliton asymptotics: the line domain}

In this section, we care about the $N_{\infty}$ asymptotic situation of $N$-soliton solution, under the additional assumptions:

\begin{itemize}

 \item {} The poles $\{z_j\}_{j=1}^N$ are sampled from a smooth density function $\rho(z)$ so that $\int_{a}^{-iz_j}\rho(\zeta)d\zeta=\frac{j}{N},j=1,\cdots,N$.

 \item {} The coefficients $\{c_j\}_{j=1}^N$ satisfy
\bee\label{c-j-2}
c_j=\dfrac{(b-a)r(z_j)}{N\pi},\quad b>a>1,
\ene
where $r(z)$ is a real-valued, continuous and non-vanishing function for $z\in(ia,ib)$, with the symmetry $r(z^*)=r(z)=r(-\frac{1}{z})=r(-\frac{1}{z^*})$.

\end{itemize}

\begin{lemma}\label{le3}
For any open set $A_+(B_+)$ containing the interval $[ia,ib]([\frac{i}{b},\frac{i}{a}])$, and any open set $A_-(B_-)$ containing the interval $[-ib,-ia]([-\frac{i}{a},-\frac{i}{b}])$, the following identities hold:
\bee
\lim\limits_{N\to\infty}\sum\limits_{j=1}^{N}\frac{c_je^{-2i\theta(z_j)}}{z-z_j}
=-\int_{ia}^{ib}\dfrac{ir(w)e^{-2i\theta(w)}}{\pi(z-w)}dw,
\ene
uniformly for all $\mathbb{C}\setminus A_+$.
\bee
\lim\limits_{N\to\infty}\sum\limits_{j=1}^{N}\dfrac{\frac{c_j}{z_j^{2}}e^{-2i\theta(-\frac{1}{z_j})}}{z+\frac{1}{z_j}}
=\int_{\frac{i}{b}}^{\frac{i}{a}}\dfrac{ir(w)e^{-2i\theta(w)}}{\pi(z-w)}dw,
\ene
uniformly for all $\mathbb{C}\setminus B_+$.
\bee
\lim\limits_{N\to\infty}\sum\limits_{j=1}^{N}\frac{c_j^*e^{2i\theta(z_j^*)}}{z-z_j^*}=
-\int_{-ib}^{-ia}\dfrac{ir(w)e^{2i\theta(w)}}{\pi(z-w)}dw,
\ene
uniformly for all $\mathbb{C}\setminus A_-$.
\bee
\lim\limits_{N\to\infty}\sum\limits_{j=1}^{N}\dfrac{\frac{c_j}{z_j^2}e^{2i\theta(\frac{1}{z_j})}}{z-\frac{1}{z_j}}
=\int_{-\frac{i}{a}}^{-\frac{i}{b}}\dfrac{ir(w)e^{2i\theta(w)}}{\pi(z-w)}dw,
\ene
uniformly for all $\mathbb{C}\setminus B_-$. The open intervals $(ia,ib),(-\frac{i}{a},-\frac{i}{b}),(\frac{i}{b},\frac{i}{a})$ and $(-ib,-ia)$ are both oriented upwards.

\end{lemma}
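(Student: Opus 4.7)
The plan is to recognize each limit as a Riemann sum for a one-dimensional contour integral, in the same spirit as Lemma~\ref{le1} but reduced from a two-dimensional area integral to a one-dimensional one because the discrete spectrum now accumulates on a segment rather than filling a region. Writing $z_j=i\alpha_j$ with $\alpha_j\in[a,b]$, the sampling hypothesis $\int_a^{-iz_j}\rho(\zeta)\,d\zeta=j/N$ together with the choice \eqref{c-j-2} of $c_j$ encodes, to leading order, a uniform spacing $\alpha_j-\alpha_{j-1}\sim (b-a)/N$, so that every sum of the form $\sum_{j=1}^N c_j F(z_j)=\sum_{j=1}^N \frac{b-a}{N\pi}r(z_j)F(z_j)$ is a standard Riemann sum for $\frac{1}{\pi}\int_a^b r(i\alpha)F(i\alpha)\,d\alpha$.

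For the first identity I specialise $F(z_j)$ to $e^{-2i\theta(z_j)}/(z-z_j)$, with $z$ held fixed in the exterior set $\mathbb{C}\setminus A_+$, pass to the Riemann sum limit, and then make the change of variable $w=i\alpha$, $d\alpha=-i\,dw$ to obtain exactly $-\int_{ia}^{ib}\frac{ir(w)e^{-2i\theta(w)}}{\pi(z-w)}dw$. The third identity follows by the same mechanism via $w=-i\alpha=z_j^*$, using the real-valuedness of $r$ (which gives $c_j^*=c_j$) and the symmetry $r(z^*)=r(z)$ (which gives $r(z_j^*)=r(w)$); the resulting contour $[-ib,-ia]$ acquires its upward orientation after absorbing the sign from $d\alpha=i\,dw$.

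The second and fourth identities are the same Riemann sums twisted by a M\"obius change of variable. For the second one set $w_j=-1/z_j=i/\alpha_j$, so that $\alpha_j\in[a,b]$ corresponds to $w_j$ traversing $[i/a,i/b]$ downwards with Jacobian $d\alpha=-i\,dw/w^2$; the algebraic identity $1/z_j^2=w_j^2$ is precisely what is needed to cancel the $w^2$ in this Jacobian, while the assumed symmetry $r(z)=r(-1/z)$ combined with the reality of $r$ gives $r(z_j)=r(w_j)$, and $z+1/z_j=z-w_j$. Assembling these factors and reversing the orientation of the resulting contour yields $\int_{i/b}^{i/a}\frac{ir(w)e^{-2i\theta(w)}}{\pi(z-w)}dw$, as claimed. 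The fourth identity is completely analogous, with $w_j=1/z_j=-i/\alpha_j$ lying on $[-i/a,-i/b]$ and the symmetry invoked in the form $r(z_j)=r(-1/z_j^*)=r(w_j)$.

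The uniform convergence claim is the only analytic input and is routine: on each external closed set $\mathbb{C}\setminus A_\pm$ or $\mathbb{C}\setminus B_\pm$ the kernel $1/(z-w)$, together with all its companions produced by the M\"obius substitutions, is jointly continuous and bounded in $(z,w)$ with $w$ on the compact contour, so the Riemann sum error is controlled by the modulus of continuity of the integrand times the mesh size $(b-a)/N$, uniformly in $z$. If $\rho$ is not taken to be the uniform density the same estimate is recovered by reinterpreting the spacing through the quantile map $\alpha_j=F^{-1}(j/N)$ and using the Riemann--Stieltjes form of the approximation. I expect the main obstacle to be purely bookkeeping: keeping track of the orientation of each of the four contours and of the signs generated by the two M\"obius substitutions, so that the precise combinations of $i$'s and minus signs in the displayed formulas come out correctly.
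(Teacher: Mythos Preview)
Your proposal is correct and follows essentially the same approach as the paper: recognize each sum as a Riemann sum along $[ia,ib]$ via $c_j=\frac{(b-a)r(z_j)}{N\pi}$, and for the second and fourth identities perform the M\"obius change of variable $w=-1/\lambda$ (respectively $w=1/\lambda$) to land on the reciprocal intervals. Your treatment is in fact more thorough than the paper's, which writes out only the first two identities explicitly and does not separately address the uniform convergence or the role of the density $\rho$.
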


\begin{proof}
Using Eq.~(\ref{c-j-2}), we have
\begin{align}\no
\begin{aligned}
\lim\limits_{N\to\infty}\sum\limits_{j=1}^{N}\frac{c_je^{-2i\theta(z_j)}}{z-z_j}
=&\lim\limits_{N\to\infty}\sum\limits_{j=1}^{N}\frac{i(b-a)}{N}\frac{-ir(z_j)e^{-2i\theta(z_j)}}{\pi(z-z_j)}\v\\
=&-\int_{ia}^{ib}\dfrac{ir(w)e^{-2i\theta(w)}}{\pi(z-w)}dw,
\end{aligned}
\end{align}
and
\begin{align}\no
\begin{aligned}
\lim\limits_{N\to\infty}\sum\limits_{j=1}^{N}\dfrac{\frac{c_j}{z_j^{2}}e^{-2i\theta(-\frac{1}{z_j})}}{z+\frac{1}{z_j}}
=&\lim\limits_{N\to\infty}\sum\limits_{j=1}^{N}\frac{i(b-a)}{N}\dfrac{-ir(z_j)e^{-2i\theta(-\frac{1}{z_j})}}{z_j^2\pi(z+\frac{1}{z_j})}\v\\
=&-\int_{ia}^{ib}\dfrac{ir(\lambda)e^{-2i\theta(-\frac{1}{\lambda})}}{\lambda^2\pi(z+\frac{1}{\lambda})}d\lambda\v\\
=&\int_{\frac{i}{b}}^{\frac{i}{a}}\dfrac{ir(w)e^{-2i\theta(w)}}{\pi(z-w)}dw.
\end{aligned}
\end{align}

Thus the proof is completed.
\end{proof}

We define a closed curve $\gamma_{1+}(\gamma_{2+})$ with a very small radius encircling the poles $\{z_j\}_{j=1}^N(-\frac{1}{z_j})$ counterclockwise in the upper half plane $\mathbb{C}^+$, and a closed curve $\Gamma_{1-}(\Gamma_{2-})$ with a very small radius encircling the poles $\{z_j^*\}_{j=1}^N(\frac{1}{z_j})$ counterclockwise in the lower half plane $\mathbb{C}^-$. According to Lemma \ref{le3}, the jump matrix $V_1(y,t;z)$ defined by Eq.~(\ref{V1-1}) can be rewrite as:
\bee
V_1(y,t;z)|_{N\to\infty}
=\begin{cases}
\left[\!\!\begin{array}{cc}
1& 0  \vspace{0.05in}\\
\displaystyle\int_{ia}^{ib}\dfrac{ir(w)e^{-2i\theta(w)}}{\pi(z-w)}dw& 1
\end{array}\!\!\right],\quad z\in\gamma_{1+},\vspace{0.05in}\\
\left[\!\!\begin{array}{cc}
1& 0  \vspace{0.05in}\\
\displaystyle\int_{\frac{i}{b}}^{\frac{i}{a}}\dfrac{ir(w)e^{-2i\theta(w)}}{\pi(z-w)}dw& 1
\end{array}\!\!\right],\quad z\in\gamma_{2+},\vspace{0.05in}\\
\left[\!\!\begin{array}{cc}
1& -\displaystyle\int_{-ib}^{-ia}\dfrac{ir(w)e^{2i\theta(w)}}{\pi(z-w)}dw  \vspace{0.05in}\\
0& 1
\end{array}\!\!\right],\quad z\in\gamma_{1-},\v\\
\left[\!\!\begin{array}{cc}
1&-\displaystyle\int_{-\frac{i}{a}}^{-\frac{i}{b}}\dfrac{ir(w)e^{2i\theta(w)}}{\pi(z-w)}dw  \vspace{0.05in}\\
0& 1
\end{array}\!\!\right],\quad z\in\gamma_{2-}.
\end{cases}
\ene

Make the following transformation:
\bee
M^{(3)}(y,t;z)
=\begin{cases}
M^{(1)}(y,t;z)\left[\!\!\begin{array}{cc}
1& 0  \vspace{0.05in}\\
-\displaystyle\int_{ia}^{ib}\dfrac{ir(w)e^{-2i\theta(w)}}{\pi(z-w)}dw& 1
\end{array}\!\!\right],\quad z~\mathrm{within}~\gamma_{1+},\vspace{0.05in}\\
M^{(1)}(y,t;z)\left[\!\!\begin{array}{cc}
1& 0  \vspace{0.05in}\\
-\displaystyle\int_{\frac{i}{b}}^{\frac{i}{a}}\dfrac{ir(w)e^{-2i\theta(w)}}{\pi(z-w)}dw& 1
\end{array}\!\!\right],\quad z~\mathrm{within}~\gamma_{2+},\vspace{0.05in}\\
\end{cases}
\ene
\bee
M^{(3)}(y,t;z)
=\begin{cases}
M^{(1)}(y,t;z)\left[\!\!\begin{array}{cc}
1& \displaystyle\int_{-ib}^{-ia}\dfrac{ir(w)e^{2i\theta(w)}}{\pi(z-w)}dw  \vspace{0.05in}\\
0& 1
\end{array}\!\!\right],\quad z~\mathrm{within}~\gamma_{1-},\v\\
M^{(1)}(y,t;z)\left[\!\!\begin{array}{cc}
1&\displaystyle\int_{-\frac{i}{a}}^{-\frac{i}{b}}\dfrac{ir(w)e^{2i\theta(w)}}{\pi(z-w)}dw  \vspace{0.05in}\\
0& 1
\end{array}\!\!\right],\quad z~\mathrm{within}~\gamma_{2-},\v\\
M^{(1)}(y,t;z),\quad \mathrm{otherwise}.
\end{cases}
\ene

Then matrix function $M^{(3)}(x,t;z)$ satisfies the following Riemann-Hilbert problem.

\begin{prop}\label{RH4}
Find a $2\times 2$ matrix function $M^{(3)}(y,t;z)$ that satisfies the following properties:

\begin{itemize}

 \item {} Analyticity: $M^{(3)}(y,t;z)$ is analytic in $\mathbb{C}\setminus((ia,ib)\cup(\frac{i}{b},\frac{i}{a})\cup(-ib,-ia)\cup(-\frac{i}{a},-\frac{i}{b}))$ and takes continuous boundary values on $(ia,ib)\cup(\frac{i}{b},\frac{i}{a})\cup(-ib,-ia)\cup(-\frac{i}{a},-\frac{i}{b})$(The directions of these open intervals are all facing upwards).

 \item {} Jump condition: The boundary values on the jump contour $(ia,ib)\cup(\frac{i}{b},\frac{i}{a})\cup(-ib,-ia)\cup(-\frac{i}{a},-\frac{i}{b})$ are defined as
 \bee
M^{(3)}_{+}(y,t;z)=M^{(3)}_{-}(y,t;z)V_3(y,t;z),\quad z\in(ia,ib)\cup(\frac{i}{b},\frac{i}{a})\cup(-ib,-ia)\cup(-\frac{i}{a},-\frac{i}{b}),
\ene
where
\bee
V_3(y,t;z)
=\begin{cases}
\left[\!\!\begin{array}{cc}
1& 0  \vspace{0.05in}\\
-2r(z)e^{-2i\theta(z)}& 1
\end{array}\!\!\right],\quad z\in(ia,ib),\vspace{0.05in}\\
\left[\!\!\begin{array}{cc}
1& 0  \vspace{0.05in}\\
-2r(z)e^{-2i\theta(z)}& 1
\end{array}\!\!\right],\quad z\in(\frac{i}{b},\frac{i}{a}),\vspace{0.05in}\\
\left[\!\!\begin{array}{cc}
1& 2r(z)e^{2i\theta(z)}  \vspace{0.05in}\\
0& 1
\end{array}\!\!\right],\quad z\in(-ib,-ia),\v\\
\left[\!\!\begin{array}{cc}
1&  2r(z)e^{2i\theta(z)}  \vspace{0.05in}\\
0& 1
\end{array}\!\!\right],\quad z\in(-\frac{i}{a},-\frac{i}{b}).
\end{cases}
\ene

 \item {} Normalization:
\bee
 M^{(3)}=\left\{\begin{array}{ll}
    \mathbb{I}_2+O\left(1/z\right),  & z\to\infty, \v\\
    \dfrac{(q+1)^2}{m^2+(q+1)^2}\left(\begin{array}{cc} 1&  \frac{im}{q+1} \v\\
\frac{im}{q+1} & 1 \end{array}\right)\left(\mathbb{I}_2+\mu_1^{(0)}(z-i)\right)e^{\frac12 h_+\sigma_3}+\mathcal{O}((z-i)^2), & z\to i.
\end{array}\right.
\ene
\end{itemize}
\end{prop}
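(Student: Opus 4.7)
The plan is to verify the three listed properties of $M^{(3)}$ (analyticity, jump condition, normalization) by direct computation from its definition in terms of $M^{(1)}$, using the Plemelj--Sokhotski formula as the single analytic tool.

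First I would check that each of the four triangular multipliers used to define $M^{(3)}$ inside the small curves $\gamma_{1\pm}, \gamma_{2\pm}$ coincides exactly with the inverse of the corresponding jump matrix of $M^{(1)}$ on that curve (as read off from $V_1|_{N\to\infty}$ via Lemma \ref{le3}). Consequently, the jump of $M^{(3)}$ on each $\gamma_{j\pm}$ vanishes: if $z\in\gamma_{1+}$ and $+$ denotes the side containing the interval $(ia,ib)$, then $M^{(3)}_{+}=M^{(1)}_{+}L=M^{(1)}_{-}V_1 L=M^{(1)}_{-}=M^{(3)}_{-}$. This leaves only the four intervals $(ia,ib)$, $(i/b,i/a)$, $(-ib,-ia)$, $(-i/a,-i/b)$ as possible jump contours, which gives the analyticity statement.

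Next I would compute the jumps across these intervals using Plemelj--Sokhotski. Setting $f(w)=ir(w)e^{-2i\theta(w)}$, the integral $A(z):=\int_{ia}^{ib} f(w)/[\pi(z-w)]\,dw$ can be rewritten as $-2i\,\mathcal{C}[f](z)$, where $\mathcal{C}[f]$ denotes the standard Cauchy transform, so its boundary values on $(ia,ib)$ (oriented upward) satisfy $A_{+}(z)-A_{-}(z)=-2if(z)=2r(z)e^{-2i\theta(z)}$. In the $N\to\infty$ limit $M^{(1)}$ extends analytically across the interval (the discrete residues have been absorbed into the Cauchy integral whose only singularities lie on the interval itself), so $M^{(1)}_{+}=M^{(1)}_{-}$ there, and therefore
\begin{equation*}
(M^{(3)}_{-})^{-1}M^{(3)}_{+} \;=\; L_{-}^{-1}L_{+} \;=\; \begin{pmatrix} 1 & 0 \\ -(A_{+}-A_{-}) & 1 \end{pmatrix} \;=\; \begin{pmatrix} 1 & 0 \\ -2r(z)e^{-2i\theta(z)} & 1 \end{pmatrix},
\end{equation*}
matching $V_3$ on $(ia,ib)$. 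An entirely parallel calculation treats $(i/b,i/a)$ via $\gamma_{2+}$ and the upper-triangular variants on $(-ib,-ia)$ and $(-i/a,-i/b)$ via $\gamma_{1-},\gamma_{2-}$, producing the $+2r(z)e^{2i\theta(z)}$ entries in $V_3$.

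Finally, for the normalization I would observe that as $z\to\infty$ each Cauchy integral defining $L$ is $O(1/z)$, so $L\to\mathbb{I}_2$ and $M^{(3)}=\mathbb{I}_2+O(1/z)$ is inherited from $M^{(1)}$. At $z=i$, the hypothesis $b>a>1$ places all four intervals strictly above or below $i$ on the imaginary axis, so $i$ lies outside each of the small encircling curves $\gamma_{j\pm}$; the transformation is therefore trivial in a neighborhood of $i$, and the expansion of $M^{(3)}$ at $z=i$ coincides with that of $M^{(1)}$. The main obstacle I anticipate is bookkeeping the four cases consistently: aligning the orientations of the enclosing loops $\gamma_{j\pm}$ with the upward orientation of the corresponding intervals, and tracking whether $L_{\pm}$ sits on the left or the right of the $L_{-}^{-1}L_{+}$ product---especially for the upper-triangular transformations in the lower half plane, where one must verify that the Plemelj signs combine to give $+2r(z)e^{2i\theta(z)}$ rather than its negative.
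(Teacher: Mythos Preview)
Your proposal is correct and follows exactly the approach implicit in the paper's construction. The paper itself does not write out a proof of RH Problem~\ref{RH4}: it simply defines $M^{(3)}$ by the triangular transformation of $M^{(1)}$ and then states the resulting RH problem, leaving the Plemelj--Sokhotski verification of the interval jumps and the cancellation on $\gamma_{j\pm}$ to the reader---precisely the details you supply.
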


According to Eq.~(\ref{fanyan-1}), we can recover $u(x,t)$ by the following formula:
\bee\label{fanyan-2}
u(x,t)=\lim\limits_{z\rightarrow i}\frac{1}{z-i}\left(1-\dfrac{(M^{(3)}_{11}(z)+M^{(3)}_{21}(z))(M^{(3)}_{12}(z)+M^{(3)}_{22}(z))}{(M^{(3)}_{11}(i)+M^{(3)}_{21}(i))(M^{(3)}_{12}(i)+M^{(3)}_{22}(i))}\right).
\ene
where
\bee\no
x(y,t)=y-\ln\left(\dfrac{M^{(3)}_{12}(i)+M^{(3)}_{22}(i)}{M^{(3)}_{11}(i)+M^{(3)}_{21}(i)}\right).
\ene

Below, we will study the asymptotic behavior of  $u(x):=u(x,0)$. Firstly, we construct the following Riemann-Hilbert problem:

\begin{prop}\label{RH5}
Find a $2\times 2$ matrix function $N(y;z)$ that satisfies the following properties:

\begin{itemize}

 \item {} Analyticity: $N(y;z)$ is analytic in $\mathbb{C}\setminus((ia,ib)\cup(\frac{i}{b},\frac{i}{a})\cup(-ib,-ia)\cup(-\frac{i}{a},-\frac{i}{b}))$ and takes continuous boundary values on $(ia,ib)\cup(\frac{i}{b},\frac{i}{a})\cup(-ib,-ia)\cup(-\frac{i}{a},-\frac{i}{b})$(The directions of these open intervals are all facing upwards).

 \item {} Jump condition: The boundary values on the jump contour $(ia,ib)\cup(\frac{i}{b},\frac{i}{a})\cup(-ib,-ia)\cup(-\frac{i}{a},-\frac{i}{b})$ are defined as
 \bee
N_{+}(y;z)=N_{-}(y;z)J(y;z),\quad z\in(ia,ib)\cup(\frac{i}{b},\frac{i}{a})\cup(-ib,-ia)\cup(-\frac{i}{a},-\frac{i}{b}),
\ene
where
\bee
J(y;z)
=\begin{cases}
\left[\!\!\begin{array}{cc}
1& 0  \vspace{0.05in}\\
-2r(z)e^{\frac{i}{2}(z-\frac{1}{z})y}& 1
\end{array}\!\!\right],\quad z\in(ia,ib),\vspace{0.05in}\\
\left[\!\!\begin{array}{cc}
1& 0  \vspace{0.05in}\\
-2r(z)e^{\frac{i}{2}(z-\frac{1}{z})y}& 1
\end{array}\!\!\right],\quad z\in(\frac{i}{b},\frac{i}{a}),\vspace{0.05in}\\
\left[\!\!\begin{array}{cc}
1& 2r(z)e^{-\frac{i}{2}(z-\frac{1}{z})y}  \vspace{0.05in}\\
0& 1
\end{array}\!\!\right],\quad z\in(-ib,-ia),\v\\
\left[\!\!\begin{array}{cc}
1&  2r(z)e^{-\frac{i}{2}(z-\frac{1}{z})y}  \vspace{0.05in}\\
0& 1
\end{array}\!\!\right],\quad z\in(-\frac{i}{a},-\frac{i}{b}).
\end{cases}
\ene

 \item {} Normalization:
\bee
N=\left\{\begin{array}{ll}
    \mathbb{I}_2+O\left(1/z\right),  & z\to\infty, \v\\
    \dfrac{(q+1)^2}{m^2+(q+1)^2}\left(\begin{array}{cc} 1&  \frac{im}{q+1} \v\\
\frac{im}{q+1} & 1 \end{array}\right)\left(\mathbb{I}_2+\mu_1^{(0)}(z-i)\right)e^{\frac12 h_+\sigma_3}+\mathcal{O}((z-i)^2), & z\to i.
\end{array}\right.
\ene

\end{itemize}
\end{prop}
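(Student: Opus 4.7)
The plan is to identify $N(y;z)$ with the specialization $M^{(3)}(y,0;z)$ of RH Problem \ref{RH4} to $t=0$, so that existence and uniqueness for Proposition \ref{RH5} follow from the well-posedness of the earlier problem together with a standard Liouville argument. In other words, rather than solving Proposition \ref{RH5} ab initio, I would exhibit a candidate that visibly satisfies every clause.

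First I would substitute $t=0$ into the phase $\theta(x(y,t),t;z)=\frac{i}{2}k(z)(y-2\lambda^{-2}(z)t)$. With $k(z)=\frac{i}{2}(z-1/z)$ this collapses to $\theta|_{t=0}=-\frac{1}{4}(z-1/z)y$, so that $-2i\theta|_{t=0}=\frac{i}{2}(z-1/z)y$ and $2i\theta|_{t=0}=-\frac{i}{2}(z-1/z)y$. Substituting these into the four jumps $V_3$ of RH Problem \ref{RH4} reproduces verbatim the four jumps $J$ of Proposition \ref{RH5}. The four contours $(ia,ib)$, $(i/b,i/a)$, $(-ib,-ia)$, $(-i/a,-i/b)$ and their upward orientations are unchanged, and the normalization at $z\to\infty$ together with the two-term expansion at $z\to i$ carries over directly, once $h_{+}|_{t=0}$ is viewed as a function of $y$ alone.

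Next I would verify uniqueness. Each jump matrix in Proposition \ref{RH5} is upper- or lower-triangular with unit diagonal, so $\det J\equiv 1$ on every arc; combined with $N\to\mathbb{I}_2$ at $\infty$, this forces $\det N\equiv 1$ throughout $\mathbb{C}$. Given two candidates $N_{1},N_{2}$, their ratio $N_{1}N_{2}^{-1}$ extends to an entire function (the jumps cancel), has at worst a removable singularity at $z=i$ (the prescribed two-term expansions cancel in the ratio), and tends to $\mathbb{I}_{2}$ at infinity, hence equals $\mathbb{I}_{2}$ by Liouville's theorem.

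The main obstacle I expect is reconciling the residual symmetries with the prescribed asymptotics at $z=i$. Specifically, one must confirm that the imposed symmetries $r(z^{*})=r(z)=r(-1/z)=r(-1/z^{*})$, inherited through the $N\to\infty$ limit from the four-fold symmetry of the discrete spectrum $\{z_{j},-z_{j}^{*},1/z_{j}^{*},-1/z_{j}\}$, are mutually consistent with all four triangular jumps and force the correct $\sigma_{1}$-conjugation structure of the normalization matrix at $z\to i$. A secondary technical point is controlling the endpoint behavior of $N(y;z)$ at $\pm ia,\pm ib,\pm i/a,\pm i/b$; this is handled by invoking Hölder regularity of $r$ there, with the strict separation $1<a<b$ ensuring that the Cauchy-type kernels defining the jumps remain bounded in a neighborhood of the branch point $z=i$ so that the prescribed expansion is not spoiled.
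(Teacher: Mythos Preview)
Your proposal is correct and matches the paper's approach exactly: the paper introduces RH Problem~\ref{RH5} simply as the specialization of RH Problem~\ref{RH4} to $t=0$ (writing ``Below, we will study the asymptotic behavior of $u(x):=u(x,0)$. Firstly, we construct the following Riemann-Hilbert problem''), with no further justification. Your phase computation $-2i\theta|_{t=0}=\tfrac{i}{2}(z-1/z)y$ is precisely the substitution that turns $V_3$ into $J$; the Liouville uniqueness argument and the remarks on endpoint regularity and symmetry that you add are sound but go beyond what the paper supplies.
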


According to Eq.~(\ref{fanyan-2}), we can recover $u(x)$ by the following formula:
\bee\label{fanyan-3}
u(x)=\lim\limits_{z\rightarrow i}\frac{1}{z-i}\left(1-\dfrac{(N_{11}(z)+N_{21}(z))(N_{12}(z)+N_{22}(z))}{(N_{11}(i)+N_{21}(i))(N_{12}(i)+N_{22}(i))}\right).
\ene
where
\bee\no
x(y)=y-\ln\left(\dfrac{N_{12}(i)+N_{22}(i)}{N_{11}(i)+N_{21}(i)}\right).
\ene

Make the following transformation:
\bee
N^{(1)}(z)=N(iz),\quad r_2(z)=2r(iz).
\ene

\begin{figure}[!t]
    \centering
 \vspace{-0.15in}
  {\scalebox{0.55}[0.55]{\includegraphics{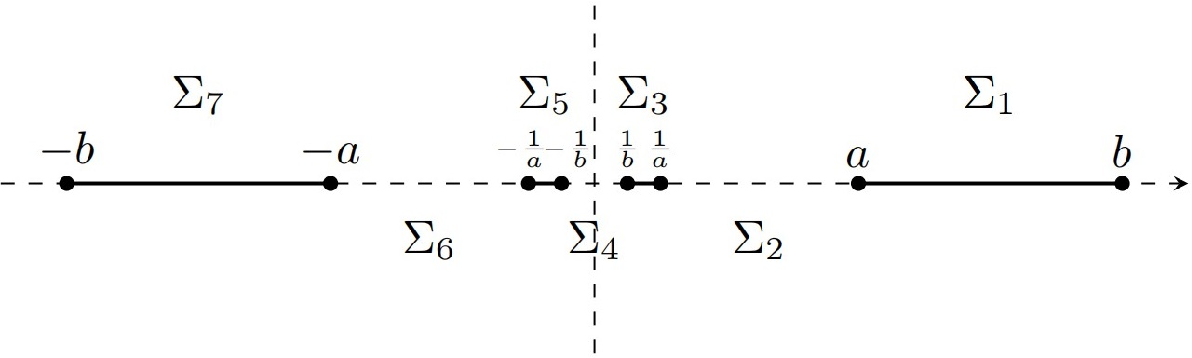}}}\hspace{-0.35in}
\vspace{0.05in}
\caption{Distribution of discrete spectrum for the line region.}
   \label{fig3}
\end{figure}

Then matrix function $N^{(1)}(z)$ satisfies the following Riemann-Hilbert problem.

\begin{prop}\label{RH6}
Find a $2\times 2$ matrix function $N^{(1)}(y;z)$ that satisfies the following properties:

\begin{itemize}

 \item {} Analyticity: $N^{(1)}(z)$ is analytic in $\mathbb{C}\setminus((a,b)\cup(\frac{1}{b},\frac{1}{a})\cup(-b,-a)\cup(-\frac{1}{a},-\frac{1}{b}))$ and takes continuous boundary values on $(a,b)\cup(\frac{1}{b},\frac{1}{a})\cup(-b,-a)\cup(-\frac{1}{a},-\frac{1}{b})$(The directions of these open intervals are all facing upwards(see Figure \ref{fig3})).

 \item {} Jump condition: The boundary values on the jump contour $(a,b)\cup(\frac{1}{b},\frac{1}{a})\cup(-b,-a)\cup(-\frac{1}{a},-\frac{1}{b})$ are defined as
 \bee
N^{(1)}_{+}(z)=N^{(1)}_{-}(z)J_1(z),\quad z\in(a,b)\cup(\frac{1}{b},\frac{1}{a})\cup(-b,-a)\cup(-\frac{1}{a},-\frac{1}{b}),
\ene
where
\bee
J_1(z)
=\begin{cases}
\left[\!\!\begin{array}{cc}
1& 0  \vspace{0.05in}\\
-r_2(z)e^{-\frac{1}{2}(z+\frac{1}{z})y}& 1
\end{array}\!\!\right],\quad z\in(a,b),\vspace{0.05in}\\
\left[\!\!\begin{array}{cc}
1& 0  \vspace{0.05in}\\
-r_2(z)e^{-\frac{1}{2}(z+\frac{1}{z})y}& 1
\end{array}\!\!\right],\quad z\in(\frac{1}{b},\frac{1}{a}),\vspace{0.05in}\\
\left[\!\!\begin{array}{cc}
1& r_2(z)e^{\frac{1}{2}(z+\frac{1}{z})y}  \vspace{0.05in}\\
0& 1
\end{array}\!\!\right],\quad z\in(-b,-a),\v\\
\left[\!\!\begin{array}{cc}
1&  r_2(z)e^{\frac{1}{2}(z+\frac{1}{z})y}  \vspace{0.05in}\\
0& 1
\end{array}\!\!\right],\quad z\in(-\frac{1}{a},-\frac{1}{b}).
\end{cases}
\ene

 \item {} Normalization:
\bee
N^{(1)}(z)=\left\{\begin{array}{ll}
    \mathbb{I}_2+O\left(1/z\right)  & z\to\infty, \v\\
    \dfrac{(q+1)^2}{m^2+(q+1)^2}\left(\begin{array}{cc} 1&  \frac{im}{q+1} \v\\
\frac{im}{q+1} & 1 \end{array}\right)\left(\mathbb{I}_2+\mu_1^{(0)}(iz-i)\right)e^{\frac12 h_+\sigma_3} \v\\
 +\mathcal{O}((z-1)^2), & z\to 1.
\end{array}\right.
\ene

\end{itemize}
\end{prop}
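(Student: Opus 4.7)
The plan is to derive the RH problem of Proposition~\ref{RH6} directly from Proposition~\ref{RH5} by applying the change of variables $w = iz$, so that $N^{(1)}(z) = N(y; iz)$. Since $w \mapsto iz$ is an entire biholomorphism (rotation by $\pi/2$), analyticity is automatically preserved: $N^{(1)}(z)$ is analytic exactly where $N(y; w)$ was, namely in the complement of the preimages of the original four jump segments.

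First, I would track how the jump contour transforms. The four vertical intervals on the imaginary $w$-axis, each oriented upwards, pull back under $w = iz$ to the real intervals $(a,b)$, $(\tfrac{1}{b},\tfrac{1}{a})$, $(-b,-a)$, and $(-\tfrac{1}{a},-\tfrac{1}{b})$. Parametrizing $w = it$ with $t$ increasing corresponds to $z = t$ increasing, so each real interval inherits the left-to-right orientation (which is what the phrase ``facing upwards'' in the statement should be read to mean once one rotates back to the original $w$-picture).

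Second, I would substitute into the jump matrices. The key computation is
\[
\left.e^{\pm\frac{i}{2}\left(w-\frac{1}{w}\right)y}\right|_{w=iz}
= e^{\pm\frac{i}{2}\left(iz+\frac{i}{z}\right)y}
= e^{\mp\frac{1}{2}\left(z+\frac{1}{z}\right)y},
\]
together with $2r(w)|_{w=iz} = 2r(iz) = r_2(z)$ by the definition of $r_2$. Inserting this into each of the four jump matrices of Proposition~\ref{RH5} produces exactly the four matrices appearing in Proposition~\ref{RH6}. The $\pm$ labelling of the boundary values is inherited without change because the rotation $w\mapsto iz$ is orientation-preserving, so the relation $N_+ = N_- J$ carries over to $N^{(1)}_+ = N^{(1)}_- J_1$.

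Third, I would verify the two-point normalization. As $z\to\infty$, also $w=iz\to\infty$, and the $w$-side expansion $\mathbb{I}_2 + O(1/w)$ becomes $\mathbb{I}_2 + O(1/z)$. The expansion at $w=i$ translates to $z=1$ via $w-i = i(z-1)$, so the prefactor $\bigl[\mathbb{I}_2 + \mu_1^{(0)}(w-i)\bigr]e^{\frac{1}{2}h_+\sigma_3}$ becomes $\bigl[\mathbb{I}_2 + \mu_1^{(0)}(iz-i)\bigr]e^{\frac{1}{2}h_+\sigma_3}$ with error $\mathcal{O}((iz-i)^2) = \mathcal{O}((z-1)^2)$, matching the claimed form. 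The main obstacle is really just careful orientation bookkeeping across the four intervals and checking that the symmetry $r(z^*)=r(z)=r(-1/z)=r(-1/z^*)$ passes through the rotation to give a symmetry of $r_2$ compatible with the four jump contours; once that is done the RH problem for $N^{(1)}$ follows by direct substitution, with uniqueness of the normalized RH problem guaranteeing that $N^{(1)}(z)$ so defined is indeed the object described.
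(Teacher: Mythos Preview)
Your proposal is correct and follows exactly the route the paper takes: the paper simply declares the substitution $N^{(1)}(z)=N(iz)$, $r_2(z)=2r(iz)$ and asserts that the RH problem~\ref{RH6} results, without writing out the verification. Your elaboration of the exponent computation $\tfrac{i}{2}(w-\tfrac{1}{w})\big|_{w=iz}=-\tfrac{1}{2}(z+\tfrac{1}{z})$, the contour pull-back, and the translation of the normalization at $w=i$ to $z=1$ is precisely the routine check the paper omits; your remark on orientation (left-to-right in $z$ corresponding to upward in $w$, so that $\pm$ boundary values are preserved) is also correct and indeed matches the paper's later explicit convention that the $\Sigma_j$ are oriented to the right.
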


According to Eq.~(\ref{fanyan-3}), we can recover $u(x)$ by the following formula:
\bee\label{fanyan-4}
u(x)=\lim\limits_{z\rightarrow 1}\frac{i}{1-z}\left(1-\dfrac{(N^{(1)}_{11}(z)+N^{(1)}_{21}(z))(N^{(1)}_{12}(z)+N^{(1)}_{22}(z))}{(N^{(1)}_{11}(1)+N^{(1)}_{21}(1))
(N^{(1)}_{12}(1)+N^{(1)}_{22}(1))}\right).
\ene
where
\bee\no
x(y)=y-\ln\left(\dfrac{N^{(1)}_{12}(1)+N^{(1)}_{22}(1)}{N^{(1)}_{11}(1)+N^{(1)}_{21}(1)}\right).
\ene

We set $\Sigma_1:=(a,b),\,\Sigma_3:=(\frac1b,\frac1a),\,\Sigma_5:=(-\frac1a,-\frac1b),\,\Sigma_7:=(-b,-a)$ and $\Sigma_2:=[\frac1a,a],\,\Sigma_4:=[-\frac1b,\frac1b],\,\Sigma_6:=[-a,-\frac1a]$. The directions of $\Sigma_j,j=1,\cdots,7$ are both to the right. To solve the Riemann-Hilbert problem \ref{RH5}, we make the following transformation:
\bee\label{T1}
N^{(2)}(y;z)=N^{(1)}(y;z)e^{yg(z)\sigma_3},
\ene
where $g(z)$ is scalar function to be determined by the following RH problem:

\begin{prop}\label{RH-g}
Find a scalar function $g(z)$ that satisfies the following properties:
\begin{itemize}

 \item {} Analyticity: $g(z)$ is analytic in $\mathbb{C}\setminus(-b,b)$ and takes continuous boundary values on $(-b,b)$;

 \item {} Jump condition: The boundary values on the jump contour are defined as
\begin{eqnarray}
&& \label{g}
g_+(z)+g_-(z)=\frac{1}{2}(z+\frac{1}{z}),\quad z\in \Sigma_1\cup\Sigma_3\cup\Sigma_5\cup\Sigma_7, \v\\
&& \label{g1}
g_+(z)-g_-(z)=m_1,\quad z\in \Sigma_2, \v\\
&& \label{g2}
g_+(z)-g_-(z)=m_2,\quad z\in \Sigma_4, \v\\
&&\label{g3}
g_+(z)-g_-(z)=m_3,\quad z\in \Sigma_6, \v\\
\end{eqnarray}
where $m_1,m_2,m_3$ are undetermined constants;

 \item {} Normalization:
 \bee\no
g(z)=\mathcal{O}(z^{-1}),\quad z\rightarrow\infty.
\ene

\end{itemize}
\end{prop}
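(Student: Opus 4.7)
The strategy is to reduce to a problem for the derivative $h(z):=g'(z)$. Differentiating (\ref{g})--(\ref{g3}) removes the three constant gap jumps, leaving $h$ analytic in $\mathbb{C}\setminus(\Sigma_{1}\cup\Sigma_{3}\cup\Sigma_{5}\cup\Sigma_{7})$ with a single additive band condition
\begin{equation*}
h_{+}(z)+h_{-}(z)=\tfrac{1}{2}\bigl(1-z^{-2}\bigr),\qquad z\in\Sigma_{1}\cup\Sigma_{3}\cup\Sigma_{5}\cup\Sigma_{7},
\end{equation*}
and $h(z)=\mathcal{O}(z^{-2})$ at infinity. This is a scalar additive RH problem on four bands.

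To solve it, introduce
\begin{equation*}
R(z)=\sqrt{(z^{2}-a^{2})(z^{2}-b^{2})(z^{2}-a^{-2})(z^{2}-b^{-2})},
\end{equation*}
with branch cuts on the four bands, normalized so that $R(z)/z^{4}\to 1$ as $z\to\infty$. Then $R_{+}=-R_{-}$ on each band while $R$ is analytic across the gaps, so $F(z):=h(z)R(z)$ is analytic off the bands with band jump $F_{+}-F_{-}=\tfrac{1}{2}(1-z^{-2})R_{+}(z)$. The Plemelj--Sokhotski formula then gives
\begin{equation*}
F(z)=\frac{1}{4\pi i}\sum_{j\in\{1,3,5,7\}}\int_{\Sigma_{j}}\frac{(1-\zeta^{-2})R_{+}(\zeta)}{\zeta-z}\,d\zeta+P(z),
\end{equation*}
for some entire $P$. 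The estimate $F=\mathcal{O}(z^{2})$ (coming from $h=\mathcal{O}(z^{-2})$ and $R\sim z^{4}$) forces $\deg P\le 2$, and since $R$ is even, $1/R$ has only even-power terms at infinity; hence for any such $P$ the function $h:=F/R$ has vanishing $z^{-1}$ coefficient in its Laurent expansion at infinity.

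Define $g(z):=\int_{\infty}^{z}h(\zeta)\,d\zeta$ along any path in $\mathbb{C}\setminus(-b,b)$. Path-independence follows from the vanishing residue of $h$ at infinity noted above (equivalently, $\oint h\,d\zeta=0$ around $(-b,b)$), and the $|z-e|^{-1/2}$ endpoint singularities of $h$ are integrable, so $g$ is a well-defined analytic function on $\mathbb{C}\setminus(-b,b)$ with $g(z)=\mathcal{O}(z^{-1})$ at infinity. Integrating the band identity recovers (\ref{g}) up to an additive constant, which is fixed by the $z\mapsto -z$ symmetry of the jump data (forcing $g(-z)=-g(z)$). On each gap $\Sigma_{2k}$, $h$ is analytic, so $g_{+}-g_{-}$ is locally constant there, yielding (\ref{g1})--(\ref{g3}) with
\begin{equation*}
m_{k}=\oint_{\gamma_{k}}h(\zeta)\,d\zeta,\qquad k=1,2,3,
\end{equation*}
where $\gamma_{k}$ is a small positively-oriented loop around any band adjacent to $\Sigma_{2k}$.

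The primary technical hurdle is verifying that $g$ is genuinely single-valued on $\mathbb{C}\setminus(-b,b)$, equivalently that the period $\oint h\,d\zeta$ around the interval $(-b,b)$ vanishes. This reduces to the vanishing of the residue of $h$ at infinity, which is secured by the even parity of $R$ at infinity. No further solvability constraints arise, since the constants $m_{1},m_{2},m_{3}$ are permitted to be arbitrary; as a consistency check, one may also verify that the entire construction is invariant under $z\mapsto 1/z$ (which maps the pair of bands $(a,b)\leftrightarrow(1/b,1/a)$ to itself and preserves $\tfrac{1}{2}(z+z^{-1})$), so that $g(1/z)$ produces the same solution up to a sign.
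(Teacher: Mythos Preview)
Your overall strategy---differentiate to obtain an additive scalar RHP for $h=g'$, solve it using the square root $R(z)$ and a free polynomial, then integrate back---is exactly the paper's route (its RH Problem for $g'$ and the explicit formula $g'(z)=\tfrac14(1-z^{-2})-\tfrac{z^4+m_4z^2+m_5}{4R(z)}$). The single-valuedness argument via the vanishing residue at infinity is also fine.

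The gap is in the sentence ``Integrating the band identity recovers (\ref{g}) up to an additive constant, which is fixed by the $z\mapsto -z$ symmetry.'' Integration yields a \emph{separate} additive constant $c_j$ on \emph{each} of the four bands $\Sigma_1,\Sigma_3,\Sigma_5,\Sigma_7$, not one global constant. Imposing the odd symmetry $g(-z)=-g(z)$ only pairs them ($c_7=-c_1$, $c_5=-c_3$), so two independent conditions remain: you must still force $c_1=0$ and $c_3=0$. These are genuine constraints on the free coefficients of your polynomial $P(z)$; they are not consequences of symmetry, and your assertion that ``no further solvability constraints arise, since the constants $m_1,m_2,m_3$ are permitted to be arbitrary'' conflates the \emph{gap} constants $m_k$ (which are indeed free) with the \emph{band} constants $c_j$ (which must vanish exactly).

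In the paper this is precisely the content of Lemma~\ref{le4}: the two conditions
\[
\int_{\Sigma_2}\frac{\zeta^4+m_4\zeta^2+m_5}{R(\zeta)}\,d\zeta=0,\qquad
\int_{\Sigma_4}\frac{\zeta^4+m_4\zeta^2+m_5}{R(\zeta)}\,d\zeta=0
\]
are derived from requiring (\ref{g}) to hold on $\Sigma_3$ and $\Sigma_5$ (not just on $\Sigma_1$), and they determine $m_4,m_5$. Without the analogue of this step, your construction produces a two-parameter family of functions that satisfy (\ref{g}) only on one pair of symmetric bands, not on all four.
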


To solve the scalar RH problem \ref{RH-g}, we first provide another RH problem about the $g$-derivative function.

\begin{prop}\label{RH-gd}
Find a scalar function $g'(z)$ that satisfies the following properties:
\begin{itemize}

 \item {} Analyticity: $g'(z)$ is analytic in $\mathbb{C}\setminus(-b,b)$ and takes continuous boundary values on $(-b,b)$;

 \item {} Jump condition: The boundary values on the jump contour are defined as
\begin{align}\label{g-d}
\begin{array}{ll}
g'_+(z)+g'_-(z)=\dfrac{1}{2}(1-\dfrac{1}{z^2}),& z\in \Sigma_1\cup\Sigma_3\cup\Sigma_5\cup\Sigma_7,\v\\
g'_+(z)-g'_-(z)=0, & z\in \Sigma_2\cup\Sigma_4\cup\Sigma_6;
\end{array}
\end{align}

 \item {} Normalization:
 \bee\label{g-d-N}
g'(z)=\mathcal{O}(z^{-2}),\quad z\rightarrow\infty.
\ene

\end{itemize}
\end{prop}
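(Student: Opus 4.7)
The plan is to diagonalize the ``plus''-type jump on the arcs by a square-root prefactor, reducing the problem to a standard Cauchy-transform calculation that can be closed in finite form.

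First I would introduce the algebraic function
\[
R(z)=\sqrt{(z^2-a^2)(z^2-b^2)(z^2-1/a^2)(z^2-1/b^2)},
\]
with branch cuts fixed along $\Sigma_1\cup\Sigma_3\cup\Sigma_5\cup\Sigma_7$ and normalized by $R(z)\sim z^4$ as $z\to\infty$. By construction $R_+(z)=-R_-(z)$ on the four arcs while $R$ extends analytically across the gaps $\Sigma_2\cup\Sigma_4\cup\Sigma_6$; the evenness $R(-z)=R(z)$ and the inversion symmetry $R(1/z)=R(z)/z^4$ are inherited from the symmetric placement of the branch points.

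Next, I would set $G(z):=g'(z)/R(z)$. Using the jumps \eqref{g-d} together with $R_+=-R_-$ on the arcs, a direct computation yields $G_+(z)-G_-(z)=(1-1/z^2)/(2R_+(z))$ on $\Sigma_1\cup\Sigma_3\cup\Sigma_5\cup\Sigma_7$ and $G_+=G_-$ on the gaps. Moreover \eqref{g-d-N} combined with $R(z)\sim z^4$ yields $G(z)=O(z^{-6})$ as $z\to\infty$. Since $G$ is sectionally analytic with only an additive jump on $\Sigma:=\Sigma_1\cup\Sigma_3\cup\Sigma_5\cup\Sigma_7$ and decays at infinity, the Plemelj--Sokhotski formula together with Liouville's theorem identifies $G$ with the Cauchy transform of its jump, giving the candidate
\[
g'(z)=\frac{R(z)}{2\pi i}\int_{\Sigma}\frac{1-1/\zeta^2}{2R_+(\zeta)(\zeta-z)}\,d\zeta.
\]

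Finally I would verify the RH problem. Analyticity off the arcs, the prescribed jump on the arcs, and continuity across the gaps follow immediately from Plemelj and from the analyticity of $R$ off its cuts. The delicate point is the sharp decay $g'(z)=O(z^{-2})$: the bare Cauchy integral is only $O(z^{-1})$, so multiplying by $R(z)$ produces a leading $O(z^3)$ contribution unless the first five moments
\[
\int_{\Sigma}\zeta^j\,\frac{1-1/\zeta^2}{2R_+(\zeta)}\,d\zeta=0,\qquad j=0,1,2,3,4,
\]
all vanish. I expect to establish these via the two built-in symmetries: $\zeta\mapsto-\zeta$ exchanges $\Sigma_1\leftrightarrow\Sigma_7$ and $\Sigma_3\leftrightarrow\Sigma_5$ with reversed orientation and flips the sign of $R_+$ while leaving $1-1/\zeta^2$ invariant, killing the odd-$j$ moments; and $\zeta\mapsto 1/\zeta$ maps the outer arcs onto the inner arcs and, through $R(1/\zeta)=R(\zeta)/\zeta^4$ and $d(1/\zeta)=-d\zeta/\zeta^2$, turns each even-$j$ moment into a $(4-j)$-th moment, collapsing the remaining three into zero.

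The main obstacle will be the orientation- and sign-bookkeeping in the symmetry step: each of the four arcs contributes with a specific orientation and $R_+$ carries branch-dependent signs that differ from arc to arc, so keeping these signs consistent across the $\zeta\mapsto-\zeta$ and $\zeta\mapsto 1/\zeta$ transformations is where an error would most easily creep in. Once the moment vanishing is secured, uniqueness of $g'$ is immediate from Liouville's theorem applied to $G$, which finishes the construction.
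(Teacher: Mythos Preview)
Your moment-vanishing argument is where the proposal breaks down, and it is a genuine gap rather than a bookkeeping nuisance. Track the map $\zeta\mapsto-\zeta$ carefully: the integrand $\zeta^{j}(1-1/\zeta^{2})/R_{+}(\zeta)$ picks up a factor $(-1)^{j}$ from $\zeta^{j}$, a factor $-1$ from $R_{+}(-\zeta)=R_{-}(\zeta)=-R_{+}(\zeta)$, and another $-1$ from $d\zeta$ together with the orientation reversal, so the net parity is $(-1)^{j}$. Hence it is the \emph{even} moments that vanish, not the odd ones. The inversion $\zeta\mapsto1/\zeta$ cannot rescue the remaining odd moments either: a direct residue evaluation of your Cauchy integral (deforming to $w=z$, $w=0$, and the cuts) gives the closed form
\[
\frac{R(z)}{2\pi i}\int_{\Sigma}\frac{1-1/\zeta^{2}}{2R_{+}(\zeta)(\zeta-z)}\,d\zeta
=\frac{1}{4}\Bigl(1-\frac{1}{z^{2}}\Bigr)+\frac{R(z)}{4R(0)\,z^{2}},
\]
which behaves like $z^{2}/(4R(0))$ as $z\to\infty$ and therefore violates the normalization $g'(z)=\mathcal O(z^{-2})$. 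In particular $M_{1}$ and $M_{3}$ are nonzero, so the bare Cauchy transform does not produce a solution of the RH problem.

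The underlying structural issue is that, once $(z-z_{0})^{-1/2}$ behaviour at the branch points is admitted (as it must be for the paper's $g'$), the RH problem is \emph{not} uniquely solvable: any term of the form $(c_{1}z^{2}+c_{2})/R(z)$ can be added without spoiling the jumps or the $\mathcal O(z^{-2})$ decay, so your Liouville-type uniqueness also fails. The paper exploits exactly this freedom: it writes the two-parameter family
\[
g'(z)=\frac{1}{4}\Bigl(1-\frac{1}{z^{2}}-\frac{z^{4}+m_{4}z^{2}+m_{5}}{R(z)}\Bigr),
\]
which satisfies the jump on the arcs, is analytic across the gaps, and is $\mathcal O(z^{-2})$ for \emph{every} choice of $m_{4},m_{5}$. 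Those constants are not fixed by the $g'$-problem at all; they are determined afterward (Lemma~\ref{le4}) by the additional requirement, coming from the $g$-problem, that $g_{+}-g_{-}$ be constant on each gap. If you want to repair your approach, you must add a polynomial numerator $P(\zeta)$ to the density $\phi(\zeta)/R_{+}(\zeta)$ precisely to kill the offending odd moments, and you will find that $P$ carries exactly the two degrees of freedom $m_{4},m_{5}$.
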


According to Eqs.~(\ref{g-d}) and (\ref{g-d-N}), we can construct the solution $g'(z)$ of the RH problem \ref{RH-gd} as follows:
\bee\label{g-d-1}
g'(z)=\frac{1}{4}\left(1-\frac{1}{z^2}-\frac{z^4+m_4z^2+m_5}{R(z)}\right),
\ene
where
\bee
R(z)=\sqrt{(z^2-\frac{1}{a^2})(z^2-\frac{1}{b^2})(z^2-a^2)(z^2-b^2)},
\ene
where $m_4,m_5$ are undetermined constants and the function $R(z)$ is positive on $(b, +\infty)$ with branch cuts on the contours $\Sigma_1\cup\Sigma_3\cup\Sigma_5\cup\Sigma_7$.

 Integrating Eq.~(\ref{g-d-1}), we have
\bee
g(z)=\frac{1}{4}\left(z+\frac{1}{z}-\int_{b}^{z}\frac{\zeta^4+m_4\zeta^2+m_5}{R(\zeta)}d\zeta\right).
\ene

\begin{lemma}\label{le4}
The parameters are determined as follows:
\begin{eqnarray}
&&m_4=\dfrac{\d\int_{\frac1a}^{a}\frac{\zeta^4}{R(\zeta)}d\zeta\int_{-\frac1b}^{\frac1b}\frac{1}{R(\zeta)}d\zeta
-\int_{\frac1a}^{a}\frac{1}{R(\zeta)}d\zeta\int_{-\frac1b}^{\frac1b}\frac{\zeta^4}{R(\zeta)}d\zeta}
{\d\int_{\frac1a}^{a}\frac{1}{R(\zeta)}d\zeta\int_{-\frac1b}^{\frac1b}\frac{\zeta^2}{R(\zeta)}d\zeta
-\int_{\frac1a}^{a}\frac{\zeta^2}{R(\zeta)}d\zeta\int_{-\frac1b}^{\frac1b}\frac{1}{R(\zeta)}d\zeta},\v\\ \label{m4} \\
&&m_5=\dfrac{\d\int_{\frac1a}^{a}\frac{\zeta^4}{R(\zeta)}d\zeta\int_{-\frac1b}^{\frac1b}\frac{\zeta^2}{R(\zeta)}d\zeta
-\int_{\frac1a}^{a}\frac{\zeta^2}{R(\zeta)}d\zeta\int_{-\frac1b}^{\frac1b}\frac{\zeta^4}{R(\zeta)}d\zeta}
{\d\int_{\frac1a}^{a}\frac{\zeta^2}{R(\zeta)}d\zeta\int_{-\frac1b}^{\frac1b}\frac{1}{R(\zeta)}d\zeta
-\int_{\frac1a}^{a}\frac{1}{R(\zeta)}d\zeta\int_{-\frac1b}^{\frac1b}\frac{\zeta^2}{R(\zeta)}d\zeta},\v\\ \label{m5} \\
&&m_1=\frac12\int_{a}^{b}\frac{\zeta^4+m_4\zeta^2+m_5}{R_+(\zeta)}d\zeta,\v\\ \label{m1} \\
&&m_2=\frac12\left(\int_{\frac1b}^{\frac1a}+\int_{a}^{b}\right)\frac{\zeta^4+m_4\zeta^2+m_5}{R_+(\zeta)}d\zeta,\v\\ \label{m2} \\
&&m_3=\frac12\left(\int_{-a}^{-\frac1a}+\int_{\frac1b}^{\frac1a}+\int_{a}^{b}\right)\frac{\zeta^4+m_4\zeta^2+m_5}{R_+(\zeta)}d\zeta.
\label{m3}
\end{eqnarray}

\end{lemma}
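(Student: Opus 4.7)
The plan is to break the lemma into two algebraic steps: first determine the free parameters $m_4,m_5$ from two compatibility relations at specific branch points, then extract $m_1,m_2,m_3$ as band-integrals arising from the explicit jump of $g'(z)$.

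To determine $m_4$ and $m_5$, integrate $g'(z)$ from the base point $z=b$, producing
\[
g(z)=\frac{1}{4}\!\left(z+\frac{1}{z}\right)-\frac{1}{4}\int_{b}^{z}\frac{\zeta^{4}+m_{4}\zeta^{2}+m_{5}}{R(\zeta)}\,d\zeta,
\]
so that $g(b)=\frac{1}{4}(b+1/b)$ already matches the band condition at the right endpoint of $\Sigma_1$. Propagate the upper boundary value $g_+$ along the real cut from $b$ leftward, alternating between bands (where $R_+$ is purely imaginary) and gaps (where $R$ is real). The continuity of $\mathrm{Re}\,g_+$ at the branch point $1/a$ forces the vanishing of the accumulated real contribution on the gap $\Sigma_2$, giving
\[
\int_{1/a}^{a}\frac{\zeta^{4}+m_{4}\zeta^{2}+m_{5}}{R(\zeta)}\,d\zeta=0,
\]
while continuity at $-1/b$ forces
\[
\int_{-1/b}^{1/b}\frac{\zeta^{4}+m_{4}\zeta^{2}+m_{5}}{R(\zeta)}\,d\zeta=0.
\]
Continuity at the remaining branch points is then automatic by the $z\mapsto -z$ symmetry, since both $R$ and the numerator are even in $\zeta$. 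These two relations form a linear system in $(m_4,m_5)^{\top}$ whose Cramer-rule solution is exactly the stated formula for $m_4$ and $m_5$.

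For $m_1,m_2,m_3$, first derive the jump of $g'$ on each band by combining $g'_{+}+g'_{-}=\tfrac{1}{2}(1-1/\zeta^{2})$ with the explicit form of $g'$ and the relation $R_+=-R_-$, obtaining
\[
g'_{+}(\zeta)-g'_{-}(\zeta)=-\frac{\zeta^{4}+m_{4}\zeta^{2}+m_{5}}{2R_{+}(\zeta)},\qquad \zeta\in\Sigma_{1}\cup\Sigma_{3}\cup\Sigma_{5}\cup\Sigma_{7}.
\]
Then each $m_k=g_{+}(\zeta)-g_{-}(\zeta)$ on the gap $\Sigma_{2k}$ equals the integral of $g'$ along a closed contour enclosing precisely those bands that separate $\Sigma_{2k}$ from the base point at $z=+\infty$. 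Collapsing each such loop onto the bands it encloses and using the above jump formula yields the one-, two-, and three-band half-integrals announced for $m_1$, $m_2$, and $m_3$. The only potential obstruction is a residue contribution at $z=0$ (inside the loops corresponding to $\Sigma_4$ and $\Sigma_6$), but this vanishes because $g'$ has only a double pole at the origin with no simple-pole term.

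The main obstacle will be cleanly organising the continuity argument: the boundary value $g_+$ accumulates a purely imaginary contribution from each band crossing (since $R_+\in i\mathbb{R}$ there) and a real contribution from each gap segment (where $R$ is real), so only the real contributions affect the matching of $\mathrm{Re}\,g_+$ across a branch point. Once this parity bookkeeping is done, the six potential continuity equations at the interior branch points collapse to just two independent conditions; after that, Cramer's rule and standard contour deformations finish the computation.
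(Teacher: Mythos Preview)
Your proposal is correct and mirrors the paper's argument: the two gap-integral conditions $\int_{1/a}^{a}(\zeta^4+m_4\zeta^2+m_5)/R\,d\zeta=0$ and $\int_{-1/b}^{1/b}(\zeta^4+m_4\zeta^2+m_5)/R\,d\zeta=0$ fix $m_4,m_5$ by Cramer's rule, and the band jumps of $g$ then yield $m_1,m_2,m_3$ exactly as you describe. The only quibble is terminological---what forces those vanishing gap integrals is not ``continuity of $\mathrm{Re}\,g_+$'' (which is automatic) but the requirement that the band relation $g_++g_-=\tfrac12(z+1/z)$ (equivalently $\mathrm{Re}\,g_+=\tfrac14(z+1/z)$ for real $z$) hold on $\Sigma_3$ and $\Sigma_5$ as well as on $\Sigma_1$; your real/imaginary bookkeeping is precisely the verification of this, so the argument is sound even if the label is loose.
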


\begin{proof}
Eq.~(\ref{g}) implies that
\bee
\int_{\frac1a}^{a}\frac{\zeta^4+m_4\zeta^2+m_5}{R(\zeta)}d\zeta=0,\quad \int_{-\frac1b}^{\frac1b}\frac{\zeta^4+m_4\zeta^2+m_5}{R(\zeta)}d\zeta=0.
\ene

Then we obtain a system of linear equations about $m_4$ and $m_5$,
\begin{align}\label{le4-1g}
\begin{array}{l}
\d \int_{\frac1a}^{a}\frac{\zeta^4}{R(\zeta)}d\zeta+m_4\int_{\frac1a}^{a}\frac{\zeta^2}{R(\zeta)}d\zeta
+m_5\int_{\frac1a}^{a}\frac{1}{R(\zeta)}d\zeta=0,\vspace{0.15in}\\
\d \int_{-\frac1b}^{\frac1b}\frac{\zeta^4}{R(\zeta)}d\zeta+m_4\int_{-\frac1b}^{\frac1b}\frac{\zeta^2}{R(\zeta)}d\zeta+m_5\int_{-\frac1b}^{\frac1b}\frac{1}{R(\zeta)}d\zeta=0.
\end{array}
\end{align}

To solve system (\ref{le4-1g}), we can show that Eqs.~\eqref{m4}) and \eqref{m5} hold.
According to Eq.~(\ref{g1}), we obtain Eq.~\eqref{m1}.
According to Eqs.~(\ref{g2}) and (\ref{m1}), we obtain Eq.~\eqref{m2}.
Finally, Using Eqs.~(\ref{g3}),(\ref{m1}) and (\ref{m2}), we obtain Eq.~\eqref{m3}.
Thus the proof is completed.
\end{proof}

Therefore, the matrix function $N^{(2)}(y;z)$ satisfies the following Riemann-Hilbert problem.

\begin{prop}\label{RH7}
Find a $2\times 2$ matrix function $N^{(2)}(y;z)$ that satisfies the following properties:

\begin{itemize}

 \item {} Analyticity: $N^{(2)}(y;z)$ is analytic in $\mathbb{C}\setminus(-b,b)$ and takes continuous boundary values on $(-b,b)$(The directions of these open intervals are all facing upwards).

 \item {} Jump condition: The boundary values on the jump contour $(-b,b)$ are defined as
 \bee
N^{(2)}_{+}(y;z)=N^{(2)}_{-}(y;z)J_2(y;z),\quad z\in(-b,b),
\ene
where
\bee
J_2(y;z)
=\begin{cases}
\left[\!\!\begin{array}{cc}
e^{x(g_+(z)-g_-(z))}& 0  \vspace{0.05in}\\
-r_2(z)& e^{-x(g_+(z)-g_-(z))}
\end{array}\!\!\right],\quad z\in\Sigma_1\cup\Sigma_3,\vspace{0.05in}\\
\left[\!\!\begin{array}{cc}
e^{x(g_+(z)-g_-(z))}& r_2(z)  \vspace{0.05in}\\
0& e^{-x(g_+(z)-g_-(z))}
\end{array}\!\!\right],\quad z\in\Sigma_5\cup\Sigma_7,\v\\
\left[\!\!\begin{array}{cc}
e^{xm_1}&  0  \vspace{0.05in}\\
0& e^{-xm_1}
\end{array}\!\!\right],\quad z\in\Sigma_2,\v\\
\left[\!\!\begin{array}{cc}
e^{xm_2}&  0  \vspace{0.05in}\\
0& e^{-xm_2}
\end{array}\!\!\right],\quad z\in\Sigma_4,\v\\
\left[\!\!\begin{array}{cc}
e^{xm_3}&  0  \vspace{0.05in}\\
0& e^{-xm_3}
\end{array}\!\!\right],\quad z\in\Sigma_6;
\end{cases}
\ene

 \item {} Normalization:
\bee
N^{(2)}(y;z)=\left\{\begin{array}{ll}
    \mathbb{I}_2+O\left(1/z\right), & z\to\infty, \v\\
    \dfrac{(q+1)^2}{m^2+(q+1)^2}\left(\begin{array}{cc} 1&  \frac{im}{q+1} \v\\
\frac{im}{q+1} & 1 \end{array}\right)\left[\mathbb{I}_2+\mu_1^{(0)}(iz-i)\right]e^{(\frac12h_++yg(z))\sigma_3} \v\\
+\mathcal{O}((z-1)^2), & z\to 1.
\end{array}\right.
\ene

\end{itemize}
\end{prop}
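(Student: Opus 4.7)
The strategy is to verify the three items of the RH problem for $N^{(2)}$ by direct substitution of the definition $N^{(2)}(y;z)=N^{(1)}(y;z)e^{yg(z)\sigma_3}$ given in \eqref{T1}, combined with the jump relations \eqref{g}--\eqref{g3} for $g(z)$ supplied by Proposition~\ref{RH-g}.

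First I would establish analyticity. Since $g(z)$ is analytic on $\mathbb{C}\setminus(-b,b)$ and $N^{(1)}(y;z)$ is analytic on $\mathbb{C}\setminus(\Sigma_1\cup\Sigma_3\cup\Sigma_5\cup\Sigma_7)$, with the latter contour contained in $(-b,b)$, the product $N^{(2)}$ is automatically analytic on $\mathbb{C}\setminus(-b,b)$ with continuous boundary values there.

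Next, for the jump relation on $(-b,b)$, I would compute piece by piece the conjugated jump $J_2=e^{-yg_-(z)\sigma_3}J_1(z)e^{yg_+(z)\sigma_3}$. On $\Sigma_1\cup\Sigma_3$, where $J_1$ is lower-triangular with off-diagonal entry $-r_2(z)e^{-\tfrac{y}{2}(z+1/z)}$, a direct $2\times 2$ multiplication produces diagonal entries $e^{\pm y(g_+-g_-)}$ and an off-diagonal entry $-r_2(z)e^{y(g_++g_-)-\tfrac{y}{2}(z+1/z)}$; the crucial identity \eqref{g}, namely $g_++g_-=\tfrac{1}{2}(z+1/z)$, collapses this off-diagonal entry to exactly $-r_2(z)$, as stated. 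The analogous computation on $\Sigma_5\cup\Sigma_7$ works by symmetry with upper-triangular $J_1$. On $\Sigma_2$, $\Sigma_4$, $\Sigma_6$, where $N^{(1)}$ has no jump so $J_1\equiv I$, the jump of $N^{(2)}$ collapses to the purely diagonal $e^{y(g_+-g_-)\sigma_3}$, and the relations \eqref{g1}--\eqref{g3} identify this with the diagonal matrices having exponents $\pm y m_1$, $\pm y m_2$, $\pm y m_3$, respectively.

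Finally, I would check the two normalization conditions. As $z\to\infty$, the decay $g(z)=\mathcal{O}(z^{-1})$ from Proposition~\ref{RH-g} forces $e^{yg(z)\sigma_3}=\mathbb{I}_2+\mathcal{O}(z^{-1})$, so multiplying the expansion $N^{(1)}=\mathbb{I}_2+\mathcal{O}(z^{-1})$ yields the stated behavior. At $z\to 1$ the factor $e^{yg(z)\sigma_3}$ is analytic and simply absorbs into the existing exponential $e^{\tfrac12 h_+\sigma_3}$ in the local expansion of $N^{(1)}$, producing the combined exponential $e^{(\tfrac12 h_++yg(z))\sigma_3}$ displayed in the statement. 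The only real obstacle is the conjugation bookkeeping on $\Sigma_1,\Sigma_3,\Sigma_5,\Sigma_7$; indeed, the very purpose of introducing $g$ with the prescribed jump $g_++g_-=\tfrac12(z+1/z)$ is precisely to cancel the $y$-dependent phases $e^{\pm\tfrac{y}{2}(z+1/z)}$ appearing in $J_1$, converting the oscillatory off-diagonal entries into the $y$-independent $\mp r_2(z)$ that makes the subsequent asymptotic analysis tractable.
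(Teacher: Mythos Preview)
Your proposal is correct and follows exactly the approach the paper intends: the paper itself provides no explicit proof here, merely writing ``Therefore, the matrix function $N^{(2)}(y;z)$ satisfies the following Riemann-Hilbert problem'' after establishing the transformation \eqref{T1} and the jump properties of $g$ in Proposition~\ref{RH-g} and Lemma~\ref{le4}. Your direct conjugation computation $J_2=e^{-yg_-\sigma_3}J_1e^{yg_+\sigma_3}$ is precisely the implicit verification the paper leaves to the reader.

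One minor imprecision: you assert that $e^{yg(z)\sigma_3}$ is analytic at $z=1$, but in fact $1\in\Sigma_2=[1/a,a]$, so $g$ carries the jump $g_+-g_-=m_1$ there. This does not damage the argument---the stated normalization at $z\to1$ simply holds with the appropriate boundary value $g_\pm(z)$, and the paper is equally loose on this point---but it is worth noting that the expansion should be read from a definite side of the cut rather than as a two-sided analytic statement.
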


Make the following transformation:
\bee\label{fz}
N^{(3)}(y;z)=N^{(2)}(y;z)f(z)^{\sigma_3},
\ene
where $f(z)$ is scalar function determined by the following RH problem:

\begin{prop} Find a scalar function $f(z)$ that satisfies the following properties:
\begin{itemize}

 \item {} Analyticity: $f(z)$ is analytic in $\mathbb{C}\setminus(-b,b)$ and takes continuous boundary values on $(-b,b)$;

 \item {} Jump condition: The boundary values on the jump contour $(-b,b)$ are defined as
\begin{eqnarray}
&& \label{f1}
f_+(z)f_-(z)=r_2^{-1}(z),\quad z\in \Sigma_1\cup\Sigma_3, \v\\
&& \label{f2}
f_+(z)f_-(z)=r_2(z),\quad z\in \Sigma_5\cup\Sigma_7, \v\\
&& \label{f3}
f_+(z)=f_-(z)e^{n_1},\quad z\in\Sigma_2\cup\Sigma_6, \v\\
&&\label{f4}
f_+(z)=f_-(z)e^{n_2},\quad z\in\Sigma_4,
\end{eqnarray}
where $n_1,n_2$ are undetermined constants;

 \item {} Normalization:
 \bee\label{f5}
f(z)=1+\mathcal{O}(z^{-1}),\quad z\rightarrow\infty;
\ene

\end{itemize}
\end{prop}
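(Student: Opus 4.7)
My plan is to convert the multiplicative scalar RH problem into an additive one by passing to the logarithm $h(z):=\log f(z)$. Under this transformation the jump conditions \eqref{f1}--\eqref{f4} become
\begin{align*}
h_+(z)+h_-(z)&=-\log r_2(z),\quad z\in\Sigma_1\cup\Sigma_3,\\
h_+(z)+h_-(z)&=\log r_2(z),\quad z\in\Sigma_5\cup\Sigma_7,\\
h_+(z)-h_-(z)&=n_1,\quad z\in\Sigma_2\cup\Sigma_6,\\
h_+(z)-h_-(z)&=n_2,\quad z\in\Sigma_4,
\end{align*}
together with $h(z)=O(z^{-1})$ at infinity. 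The mixed $+$/$-$ character of these jumps is cleared by reusing the branch function $R(z)$ introduced in the $g$-function construction: because $R_+=-R_-$ on $\Sigma_1\cup\Sigma_3\cup\Sigma_5\cup\Sigma_7$ and $R_+=R_-$ on $\Sigma_2\cup\Sigma_4\cup\Sigma_6$, the ratio $H(z):=h(z)/R(z)$ is holomorphic off $(-b,b)$ with purely additive jumps on the entire cut, whose piecewise values I denote by $\Delta(\zeta)/R_+(\zeta)$.

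Applying the Plemelj--Sokhotski formula to $H$ then yields the candidate
\[
f(z)=\exp\!\left(\frac{R(z)}{2\pi i}\int_{-b}^{b}\frac{\Delta(\zeta)\,d\zeta}{R_+(\zeta)(\zeta-z)}\right).
\]
The constants $n_1,n_2$ are fixed by imposing the normalization $f=1+O(z^{-1})$ at infinity: since $R(z)\sim z^4$, decay of $h=\log f$ like $z^{-1}$ forces the first four moments $\int_{-b}^{b}\Delta(\zeta)\zeta^{k}\,d\zeta/R_+(\zeta)$, $k=0,1,2,3$, to vanish. Invoking the symmetries $r_2(-\zeta)=r_2(\zeta)=r_2(1/\zeta)$ inherited from the hypotheses on $r(z)$, together with the evenness of $R$, automatically kills the odd moments ($k=1,3$), leaving a $2\times 2$ linear system for $n_1,n_2$ of the same flavor as \eqref{le4-1g}, which I would solve explicitly by Cramer's rule to obtain $n_1,n_2$ as explicit integrals of $\log r_2/R_+$ over $\Sigma_1\cup\Sigma_3\cup\Sigma_5\cup\Sigma_7$.

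The main obstacle is showing that the resulting $2\times 2$ system is nondegenerate, so that $n_1,n_2$ are uniquely determined; this step is analogous to the nonsingularity of the linear system in Lemma \ref{le4} and can be handled by a period-integral positivity argument on the gap arcs $\Sigma_2\cup\Sigma_6$ and $\Sigma_4$, together with the sign of $R_+$ there. A secondary subtlety is verifying single-valuedness of $f=\exp h$, which reduces to checking that the periods of the Cauchy integral around loops encircling the cuts agree with the prescribed constants $n_1,n_2$; this compatibility is automatic from the Plemelj--Sokhotski construction once the moment conditions above are imposed.
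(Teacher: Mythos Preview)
Your approach is essentially identical to the paper's: it writes $f(z)=\exp\!\left(\frac{R(z)}{2\pi i}F(z)\right)$ with $F$ the Cauchy integral of the piecewise jump data over $\Sigma_1\cup\cdots\cup\Sigma_7$, then imposes the normalization to force the first four moments of $F$ to vanish, and uses the parity $r_2(-\zeta)=r_2(\zeta)$ to reduce to a $2\times2$ linear system for $n_1,n_2$ solved by Cramer's rule. The only point you add beyond the paper is the remark on nondegeneracy of that $2\times2$ system, which the paper simply assumes.
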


\begin{lemma}\label{le5}
The parameters $n_1, n_2$ are determined as follows:
\begin{align} \label{n12}
\begin{array}{l}
n_1=\dfrac{\d\int_{\Sigma_5\cup\Sigma_7}\dfrac{2\log r_2(\zeta)}{R_+(\zeta)}d\zeta\int_{\Sigma_4}\dfrac{\zeta^2}{R(\zeta)}d\zeta
-\int_{\Sigma_4}\dfrac{1}{R(\zeta)}d\zeta\int_{\Sigma_5\cup\Sigma_7}\dfrac{2\zeta^2\log r_2(\zeta)}{R_+(\zeta)}d\zeta}
{\d\int_{\Sigma_4}\dfrac{1}{R(\zeta)}d\zeta\int_{\Sigma_2}\dfrac{\zeta^2}{R(\zeta)}d\zeta-
\int_{\Sigma_2}\dfrac{2}{R(\zeta)}d\zeta\int_{\Sigma_4}\dfrac{\zeta^2}{R(\zeta)}d\zeta},\vspace{0.15in} \\
n_2=\dfrac{\d\int_{\Sigma_5\cup\Sigma_7}\dfrac{2\log r_2(\zeta)}{R_+(\zeta)}d\zeta\int_{\Sigma_2}\dfrac{\zeta^2}{R(\zeta)}d\zeta-
\int_{\Sigma_2}\dfrac{2}{R(\zeta)}d\zeta\int_{\Sigma_5\cup\Sigma_7}\dfrac{2\zeta^2\log r_2(\zeta)}{R_+(\zeta)}d\zeta}
{\d\int_{\Sigma_2}\dfrac{2}{R(\zeta)}d\zeta\int_{\Sigma_4}\dfrac{\zeta^2}{R(\zeta)}d\zeta-\int_{\Sigma_4}\dfrac{1}{R(\zeta)}d\zeta\int_{\Sigma_2}\dfrac{\zeta^2}{R(\zeta)}d\zeta}.
\end{array}
\end{align}

\end{lemma}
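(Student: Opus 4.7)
The plan is to linearize the multiplicative scalar RH problem for $f$ via logarithms and then to invert it by the $R(z)$-factorization trick already used in Lemma~\ref{le4} for the $g$-function. Taking the log of (\ref{f1})--(\ref{f4}) and using the normalization (\ref{f5}), one obtains an additive scalar RH problem for $\log f(z)$ with $+$-sum jumps $\mp\log r_2(z)$ across the cuts $\Sigma_1\cup\Sigma_3$ and $\Sigma_5\cup\Sigma_7$, constant difference jumps $n_1$ and $n_2$ across the gaps $\Sigma_2\cup\Sigma_6$ and $\Sigma_4$, and the decay $\log f(z)=O(z^{-1})$ at infinity.

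Next I set $\log f(z)=R(z)\,h(z)$. Since $R_+=-R_-$ across each arc, the sum-jumps for $\log f$ become pure difference jumps $h_+-h_-=\mp\log r_2(\zeta)/R_+(\zeta)$. On the gaps, where $R$ is single-valued and analytic, the constant $\log f$-jumps translate into $h_+-h_-=n_j/R(\zeta)$. Sokhotski--Plemelj then yields
\begin{equation*}
h(z)=\frac{1}{2\pi i}\!\left[\int_{\Sigma_5\cup\Sigma_7}\!\!\frac{\log r_2(\zeta)\,d\zeta}{R_+(\zeta)(\zeta-z)}-\int_{\Sigma_1\cup\Sigma_3}\!\!\frac{\log r_2(\zeta)\,d\zeta}{R_+(\zeta)(\zeta-z)}+n_1\!\int_{\Sigma_2\cup\Sigma_6}\!\!\frac{d\zeta}{R(\zeta)(\zeta-z)}+n_2\!\int_{\Sigma_4}\!\!\frac{d\zeta}{R(\zeta)(\zeta-z)}\right].
\end{equation*}

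The two free constants $n_1,n_2$ are pinned down by the normalization. Because $R(z)=z^4+O(z^2)$ at infinity and $(\zeta-z)^{-1}=-\sum_{k\ge 0}\zeta^{k}z^{-k-1}$, the requirement $R(z)h(z)=O(z^{-1})$ forces the four integral moments of orders $k=0,1,2,3$ of the above Plemelj density to vanish. The symmetries $r_2(-\zeta)=r_2(\zeta)$ (which follow from $r(z^{*})=r(z)=r(-1/z)$ together with the definition $r_2(z)=2r(iz)$), $R(-\zeta)=R(\zeta)$ and $R_+(-\zeta)=-R_+(\zeta)$ on the arcs, together with the $\zeta\mapsto-\zeta$-invariance of $\Sigma_2\cup\Sigma_6$, of $\Sigma_4$, and of $\Sigma_1\cup\Sigma_3\cup\Sigma_5\cup\Sigma_7$, imply that the two odd-order ($k=1,3$) moment conditions are satisfied automatically, while the two even-order ($k=0,2$) conditions reduce to a $2\times 2$ linear system in $(n_1,n_2)$. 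Solving this system by Cramer's rule produces the closed-form expressions (\ref{n12}).

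The main obstacle will be careful sign and orientation bookkeeping: determining the branch of $R_+$ picked up on each of the four arcs $\Sigma_1,\Sigma_3,\Sigma_5,\Sigma_7$, tracking how $\zeta\mapsto-\zeta$ identifies $\int_{\Sigma_1\cup\Sigma_3}$ with $\pm\int_{\Sigma_5\cup\Sigma_7}$ so that the logarithmic integrands correctly combine into the factor of $2$ appearing in the numerators of (\ref{n12}), and verifying non-vanishing of the Wronskian-type determinant in the denominator. The latter is the standard non-degeneracy of the period matrix of the holomorphic differentials $d\zeta/R(\zeta)$ and $\zeta^{2}\,d\zeta/R(\zeta)$ on the underlying hyperelliptic curve, and can be deduced by a direct positivity argument on the real integrals over the disjoint bounded intervals $\Sigma_2$ and $\Sigma_4$.
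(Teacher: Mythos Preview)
Your proposal is correct and follows essentially the same route as the paper: both write $\log f(z)=\dfrac{R(z)}{2\pi i}F(z)$ with $F$ the Cauchy integral over arcs and gaps, impose the four vanishing moment conditions from the normalization $f(z)\to 1$, use the parity $r_2(-\zeta)=r_2(\zeta)$ and the evenness of $R$ to kill the odd-order conditions, and solve the remaining $2\times2$ system for $n_1,n_2$ by Cramer's rule. Your added remarks on the non-vanishing of the period determinant and on orientation bookkeeping go slightly beyond what the paper spells out, but the core argument is identical.
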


\begin{proof}
According to Eqs.~(\ref{f1})-(\ref{f4}), we can construct $f(z)$ as follows:
\bee
f(z)=\exp\left(\frac{R(z)}{2\pi i}F(z)\right),
\ene
with
\bee\label{Fz}
\begin{array}{rl}
F(z)=&\d \int_{\Sigma_1\cup\Sigma_3}\dfrac{\log r_2^{-1}(\zeta)}{R_+(\zeta)(\zeta-z)}d\zeta
+\int_{\Sigma_5\cup\Sigma_7}\dfrac{\log r_2(\zeta)}{R_+(\zeta)(\zeta-z)}d\zeta \v\v\\
& \d +\int_{\Sigma_2\cup\Sigma_6}\dfrac{n_1}{R(\zeta)(\zeta-z)}d\zeta+\int_{\Sigma_4}\dfrac{n_2}{R(\zeta)(\zeta-z)}d\zeta.
\end{array}
\ene

According to Eq.~(\ref{f5}), we know that the coefficients of the function $F(z)$ about $z^{-1},z^{-2},z^{-3},z^{-4}$ are all equal to zero, that is,
\begin{eqnarray}
&& \no
\d\int_{\Sigma_1\cup\Sigma_3}\dfrac{\log r_2^{-1}(\zeta)}{R_+(\zeta)}d\zeta+\int_{\Sigma_5\cup\Sigma_7}\dfrac{\log r_2(\zeta)}{R_+(\zeta)}d\zeta
+\int_{\Sigma_2\cup\Sigma_6}\dfrac{n_1}{R(\zeta)}d\zeta+\int_{\Sigma_4}\dfrac{n_2}{R(\zeta)}d\zeta=0, \vspace{0.15in} \\ \no\\
&&\no \d\int_{\Sigma_1\cup\Sigma_3}\dfrac{\zeta\log r_2^{-1}(\zeta)}{R_+(\zeta)}d\zeta+\int_{\Sigma_5\cup\Sigma_7}\dfrac{\zeta\log r_2(\zeta)}{R_+(\zeta)}d\zeta
+\int_{\Sigma_2\cup\Sigma_6}\dfrac{n_1\zeta}{R(\zeta)}d\zeta+\int_{\Sigma_4}\dfrac{n_2\zeta}{R(\zeta)}d\zeta=0,\vspace{0.15in}\\ \no\\
&&\no \d\int_{\Sigma_1\cup\Sigma_3}\dfrac{\zeta^2\log r_2^{-1}(\zeta)}{R_+(\zeta)}d\zeta+\int_{\Sigma_5\cup\Sigma_7}\dfrac{\zeta^2\log r_2(\zeta)}{R_+(\zeta)}d\zeta
+\int_{\Sigma_2\cup\Sigma_6}\dfrac{n_1\zeta^2}{R(\zeta)}d\zeta+\int_{\Sigma_4}\dfrac{n_2\zeta^2}{R(\zeta)}d\zeta=0,\vspace{0.15in}\\\no\\
&& \d\int_{\Sigma_1\cup\Sigma_3}\dfrac{\zeta^3\log r_2^{-1}(\zeta)}{R_+(\zeta)}d\zeta+\int_{\Sigma_5\cup\Sigma_7}\dfrac{\zeta^3\log r_2(\zeta)}{R_+(\zeta)}d\zeta
+\int_{\Sigma_2\cup\Sigma_6}\dfrac{n_1\zeta^3}{R(\zeta)}d\zeta+\int_{\Sigma_4}\dfrac{n_2\zeta^3}{R(\zeta)}d\zeta=0.\qquad
  \label{Fz-1}
\end{eqnarray}

Using the fact $r_2(z)=r_2(-z)$, Eqs.~(\ref{Fz-1}) can be rewrite as:
\begin{align}\label{Fz-2}
\begin{array}{l}
\d \int_{\Sigma_5\cup\Sigma_7}\dfrac{2\log r_2(\zeta)}{R_+(\zeta)}d\zeta
+n_1\int_{\Sigma_2}\dfrac{2}{R(\zeta)}d\zeta+n_2\int_{\Sigma_4}\dfrac{1}{R(\zeta)}d\zeta=0,\vspace{0.2in} \\
\d \int_{\Sigma_5\cup\Sigma_7}\dfrac{2\zeta^2\log r_2(\zeta)}{R_+(\zeta)}d\zeta
+n_1\int_{\Sigma_2}\dfrac{\zeta^2}{R(\zeta)}d\zeta+n_2\int_{\Sigma_4}\dfrac{\zeta^2}{R(\zeta)}d\zeta=0.
\end{array}
\end{align}

By solving system (\ref{Fz-2}), we can deduce Eq.~\eqref{n12}.
Thus the proof is completed.
\end{proof}

Thus, the matrix function $N^{(3)}(x;z)$ satisfies the following Riemann-Hilbert problem.

\begin{prop}\label{RH8}
Find a $2\times 2$ matrix function $N^{(3)}(y;z)$ that satisfies the following properties:

\begin{itemize}

 \item {} Analyticity: $N^{(3)}(y;z)$ is analytic in $\mathbb{C}\setminus(-b,b)$ and takes continuous boundary values on $(-b,b)$(The directions of these open intervals are all facing upwards).

 \item {} Jump condition: The boundary values on the jump contour $(-b,b)$ are defined as
 \bee
N^{(3)}_{+}(y;z)=N^{(3)}_{-}(y;z)J_3(y;z),\quad z\in(-b,b),
\ene
where
\bee
J_3(y;z)
=\begin{cases}
\left[\!\!\begin{array}{cc}
e^{y(g_+(z)-g_-(z))}\dfrac{f_+(z)}{f_-(z)}& 0  \vspace{0.05in}\\
-1& e^{-y(g_+(z)-g_-(z))}\dfrac{f_-(z)}{f_+(z)}
\end{array}\!\!\right],\quad z\in\Sigma_1\cup\Sigma_3,\vspace{0.05in}\\
\left[\!\!\begin{array}{cc}
e^{y(g_+(z)-g_-(z))}\dfrac{f_+(z)}{f_-(z)}& 1  \vspace{0.05in}\\
0& e^{-y(g_+(z)-g_-(z))}\dfrac{f_-(z)}{f_+(z)}
\end{array}\!\!\right],\quad z\in\Sigma_5\cup\Sigma_7,\v\\
\left[\!\!\begin{array}{cc}
e^{xm_1+n_1}&  0  \vspace{0.05in}\\
0& e^{-(xm_1+n_1)}
\end{array}\!\!\right],\quad z\in\Sigma_2,\v\\
\left[\!\!\begin{array}{cc}
e^{xm_2+n_2}&  0  \vspace{0.05in}\\
0& e^{-(xm_2+n_2)}
\end{array}\!\!\right],\quad z\in\Sigma_4,\v\\
\left[\!\!\begin{array}{cc}
e^{xm_3+n_1}&  0  \vspace{0.05in}\\
0& e^{-(xm_3+n_1)}
\end{array}\!\!\right],\quad z\in\Sigma_6;
\end{cases}
\ene

 \item {} Normalization:
\bee
N^{(3)}=\left\{\begin{array}{ll}
    \mathbb{I}_2+O\left(1/z\right),  & z\to\infty, \v\\
    \dfrac{(q+1)^2}{m^2+(q+1)^2}\left(\begin{array}{cc} 1&  \frac{im}{q+1} \v\\
\frac{im}{q+1} & 1 \end{array}\right)\left(\mathbb{I}_2+\mu_1^{(0)}(iz-i)\right)e^{(\frac12 h_++yg(z))\sigma_3}f(z)^{\sigma_3} & \v\\
\quad +\mathcal{O}((z-1)^2), & z\to 1.
\end{array}\right.
\ene

\end{itemize}
\end{prop}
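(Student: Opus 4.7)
The plan is to verify each of the three defining conditions of the RH problem for $N^{(3)}(y;z)$ directly from the transformation $N^{(3)}(y;z)=N^{(2)}(y;z)f(z)^{\sigma_3}$, combining Proposition \ref{RH7} with the scalar RH conditions (\ref{f1})--(\ref{f5}) already established for $f(z)$. Analyticity is immediate: $N^{(2)}(y;z)$ is analytic in $\mathbb{C}\setminus(-b,b)$ by Proposition \ref{RH7}, and both $f(z)$ and $f(z)^{-1}$ are analytic on the same domain (the latter because $f=\exp(\tfrac{R(z)}{2\pi i}F(z))$ is nowhere vanishing), so $N^{(3)}$ is analytic there with continuous one-sided limits on $(-b,b)$.

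For the jump relation, writing $N^{(2)}_\pm=N^{(3)}_\pm f_\pm^{-\sigma_3}$ and substituting into $N^{(2)}_+=N^{(2)}_- J_2$ produces the universal conjugation formula
\[
J_3(y;z)=f_-(z)^{-\sigma_3}\,J_2(y;z)\,f_+(z)^{\sigma_3}.
\]
I would then verify this matrix identity piece by piece. On $\Sigma_1\cup\Sigma_3$ the matrix $J_2$ is lower-triangular, so the diagonal entries of $J_3$ become $e^{\pm y(g_+-g_-)}(f_+/f_-)^{\pm 1}$, while the $(2,1)$-entry collapses to $-r_2(z)f_+(z)f_-(z)=-1$ thanks to (\ref{f1}). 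A symmetric computation on $\Sigma_5\cup\Sigma_7$ uses the upper-triangular structure of $J_2$ together with (\ref{f2}) to reduce the $(1,2)$-entry to $r_2(z)/(f_+f_-)=1$. On $\Sigma_2,\,\Sigma_4,\,\Sigma_6$ the matrix $J_2$ is diagonal, so conjugation simply multiplies the diagonal entries by $f_+/f_-$: applying (\ref{f3}) on $\Sigma_2\cup\Sigma_6$ yields $e^{\pm(xm_j+n_1)}$ with $j=1$ and $j=3$ respectively, and applying (\ref{f4}) on $\Sigma_4$ yields $e^{\pm(xm_2+n_2)}$, matching the jumps in the statement.

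The normalizations are then obtained by multiplying the expansions of $N^{(2)}$ through by $f(z)^{\sigma_3}$: at $z\to\infty$, property (\ref{f5}) gives $f(z)^{\sigma_3}=\mathbb{I}_2+\mathcal{O}(1/z)$, so $N^{(3)}=\mathbb{I}_2+\mathcal{O}(1/z)$; at $z\to 1$, the local expansion of $N^{(2)}$ from Proposition \ref{RH7} simply absorbs the additional rightmost factor $f(z)^{\sigma_3}$. The step that requires the most care, and which I expect to be the main technical obstacle, is precisely this last one: since $1\in\Sigma_2$, $f(z)$ has a nontrivial jump across $z=1$ governed by (\ref{f3}), so the expansion at $z\to 1$ must be interpreted as a one-sided statement, and one must track how $f(z)^{\sigma_3}$ transitions across $z=1$ consistently with the chosen orientation. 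Once this interpretation is fixed, every assertion in the proposition reduces to a direct algebraic verification using the jumps of $g$ and $f$.
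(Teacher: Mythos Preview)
Your proposal is correct and follows exactly the route the paper takes: the paper introduces the transformation $N^{(3)}=N^{(2)}f^{\sigma_3}$, records the scalar RH conditions (\ref{f1})--(\ref{f5}) for $f$, and then simply writes ``Thus, the matrix function $N^{(3)}$ satisfies the following Riemann-Hilbert problem'' without spelling out the conjugation $J_3=f_-^{-\sigma_3}J_2f_+^{\sigma_3}$ or the entry-by-entry reductions you carry out. Your observation that $1\in\Sigma_2$ forces a one-sided reading of the $z\to 1$ expansion is a genuine subtlety that the paper leaves implicit.
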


\subsection{Opening lenses}

Let opening lenses $O_1$ pass through points $z=a$ and $z=b$ with boundary $O_1^{\pm}:=O_1\cap\mathbb{C}_{\pm}$ and opening lenses $O_2$ pass through points $z=\frac1b$ and $z=\frac1a$ with boundary $O_2^{\pm}:=O_2\cap\mathbb{C}_{\pm}$ and opening lenses $O_3$ pass through points $z=-\frac1a$ and $z=-\frac1b$ with boundary $O_3^{\pm}:=O_3\cap\mathbb{C}_{\pm}$ and opening lenses $O_4$ pass through points $z=-b$ and $z=-a$ with boundary $O_4^{\pm}:=O_4\cap\mathbb{C}_{\pm}$ (see Fig. \ref{fig1}). All boundaries $\partial O_j,j=1,2,3,4$ are in a clockwise direction. We define a new function as follows:
\bee
r_{3\pm}(z):=\pm r_2(z),\quad z\in\Sigma_1\cup\Sigma_3\cup\Sigma_5\cup\Sigma_7.
\ene

Note that the jump matrix $J_3(y;z)|_{\Sigma_1\cup\Sigma_3}$ has the following decomposition:
\begin{eqnarray}
&& \no J_3(y;z)=\left[\!\!\begin{array}{cc}
e^{y(g_+(z)-g_-(z))}\dfrac{f_+(z)}{f_-(z)}& 0  \vspace{0.05in}\\
-1& e^{-y(g_+(z)-g_-(z))}\dfrac{f_-(z)}{f_+(z)}
\end{array}\!\!\right]\vspace{0.05in}\\ \no\\
&&\no \qquad\quad\,\,\, =\left[\!\!\begin{array}{cc}
1& -\dfrac{e^{y(g_+(z)-g_-(z))}f_+(z)}{f_-(z)}  \vspace{0.05in}\\
0& 1
\end{array}\!\!\right]\left[\!\!\begin{array}{cc}
0& 1  \vspace{0.05in}\\
-1& 0
\end{array}\!\!\right]\left[\!\!\begin{array}{cc}
1& -\dfrac{e^{-y(g_+(z)-g_-(z))}f_-(z)}{f_+(z)}  \vspace{0.05in}\\
0& 1
\end{array}\!\!\right]\v\\ \no\\
&& \qquad\quad\,\,\, =\left[\!\!\begin{array}{cc}
1& \dfrac{e^{y(g_+(z)-g_-(z))}}{r_{3-}(z)f_-^2(z)}  \vspace{0.05in}\\
0& 1
\end{array}\!\!\right]\left[\!\!\begin{array}{cc}
0& 1  \vspace{0.05in}\\
-1& 0
\end{array}\!\!\right]\left[\!\!\begin{array}{cc}
1& -\dfrac{e^{-y(g_+(z)-g_-(z))}}{r_{3+}(z)f_+^2(z)}  \vspace{0.05in}\\
0& 1
\end{array}\!\!\right],\quad z\in \Sigma_1\cup\Sigma_3,\quad
\end{eqnarray}
and the jump matrix $J_3(x;z)|_{\Sigma_5\cup\Sigma_7}$ has the following decomposition:
\begin{align}
\begin{aligned}
J_3(x;z)&=\left[\!\!\begin{array}{cc}
e^{y(g_+(z)-g_-(z))}\dfrac{f_+(z)}{f_-(z)}& 1  \vspace{0.05in}\\
0& e^{-y(g_+(z)-g_-(z))}\dfrac{f_-(z)}{f_+(z)}
\end{array}\!\!\right]\vspace{0.05in}\\
&=\left[\!\!\begin{array}{cc}
1& 0  \vspace{0.05in}\\
\dfrac{e^{-y(g_+(z)-g_-(z))}f_-(z)}{f_+(z)}& 1
\end{array}\!\!\right]\left[\!\!\begin{array}{cc}
0& 1  \vspace{0.05in}\\
-1& 0
\end{array}\!\!\right]\left[\!\!\begin{array}{cc}
1& 0  \vspace{0.05in}\\
\dfrac{e^{y(g_+(z)-g_-(z))}f_+(z)}{f_-(z)}& 1
\end{array}\!\!\right]\v\\
&=\left[\!\!\begin{array}{cc}
1& 0  \vspace{0.05in}\\
-\dfrac{e^{-y(g_+(z)-g_-(z))}f_-^2(z)}{r_{3-}(z)}& 1
\end{array}\!\!\right]\left[\!\!\begin{array}{cc}
0& 1  \vspace{0.05in}\\
-1& 0
\end{array}\!\!\right]\left[\!\!\begin{array}{cc}
1& 0  \vspace{0.05in}\\
\dfrac{e^{y(g_+(z)-g_-(z))}f_+^2(z)}{r_{3+}(z)}& 1
\end{array}\!\!\right],\quad z\in\Sigma_5\cup\Sigma_7.
\end{aligned}
\end{align}

To eliminate the jumping of the characteristic function  on $\Sigma_5\cup\Sigma_7\cup\Sigma_5\cup\Sigma_7$, we make the following transformation:
\bee\label{N4}
N^{(4)}(x;z)=\begin{cases}
N^{(3)}\left[\!\!\begin{array}{cc}
1& \dfrac{e^{-y(2g(z)-\frac12(z+\frac1z))}}{r_{3}(z)f^2(z)}  \vspace{0.05in}\\
0& 1
\end{array}\!\!\right],\quad \mathrm{in~the~upper~lens}~O_1\cup O_2,\vspace{0.05in}\\
N^{(3)}\left[\!\!\begin{array}{cc}
1& \dfrac{e^{-y(2g(z)-\frac12(z+\frac1z))}}{r_{3}(z)f^2(z)}  \vspace{0.05in}\\
0& 1
\end{array}\!\!\right],\quad \mathrm{in~the~lower~lens}~O_1\cup O_2,\v\\
N^{(3)}\left[\!\!\begin{array}{cc}
1& 0  \vspace{0.05in}\\
-\dfrac{e^{y(2g(z)-\frac12(z+\frac1z))}f^2(z)}{r_{3}(z)}& 1
\end{array}\!\!\right],\quad \mathrm{in~the~upper~lens}~O_3\cup O_4,\vspace{0.05in}\\
N^{(3)}\left[\!\!\begin{array}{cc}
1& 0  \vspace{0.05in}\\
-\dfrac{e^{y(2g(z)-\frac12(z+\frac1z))}f^2(z)}{r_{3}(z)}& 1
\end{array}\!\!\right],\quad \mathrm{in~the~lower~lens}~O_3\cup O_4,\v\\
N^{(3)},\quad \mathrm{otherwise}.
\end{cases},
\ene

\begin{figure}[!t]
    \centering
 \vspace{-0.15in}
  {\scalebox{0.65}[0.6]{\includegraphics{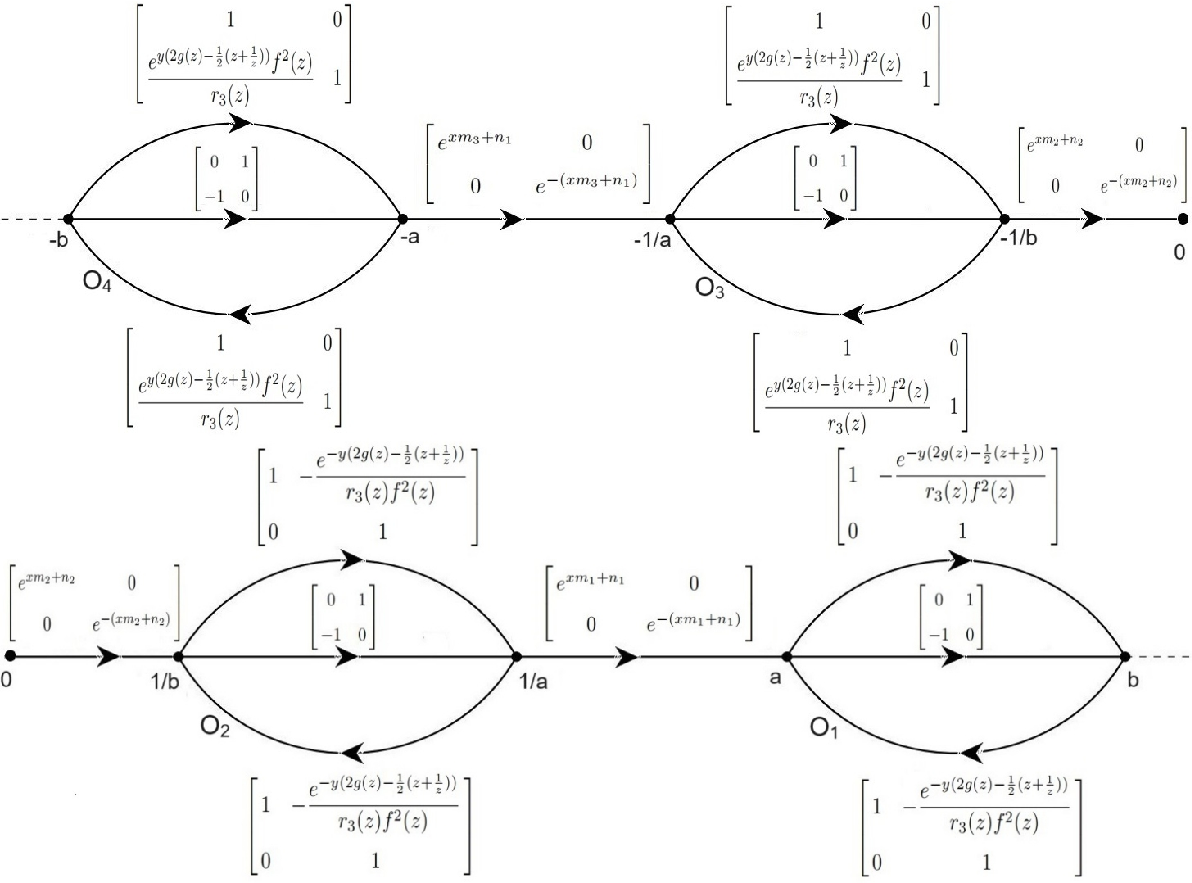}}}\hspace{-0.35in}
\vspace{0.1in}
\caption{The RH problem \ref{RH9} for the matrix function $N^{(4)}(y;z)$. Opening lenses $O_1$, $O_2$, $O_3$ and $O_4$.}
   \label{fig4}
\end{figure}

Then the matrix function $N^{(4)}(y;z)$ satisfies the following Riemann-Hilbert problem.

\begin{prop}\label{RH9}
Find a $2\times 2$ matrix function $N^{(4)}(y;z)$ that satisfies:

\begin{itemize}

 \item {} Analyticity: $N^{(4)}(y;z)$ is analytic in $\mathbb{C}\setminus(-b,b)$ and takes continuous boundary values on $(-b,b)$(The directions of these open intervals are all facing upwards(see Figure \ref{fig4})).

 \item {} Jump condition: The boundary values on the jump contour $(-b,b)$ are defined as
 \bee
N^{(4)}_{+}(y;z)=N^{(4)}_{-}(y;z)J_4(y;z),\quad z\in(-b,b),
\ene
where for $z\in O_j^{\pm},j=1,2,3,4$
\bee\label{J4}
J_4(y;z)
=\begin{cases}
\left[\!\!\begin{array}{cc}
1& -\dfrac{e^{-y(2g(z)-\frac12(z+\frac1z))}}{r_{3}(z)f^2(z)}  \vspace{0.05in}\\
0& 1
\end{array}\!\!\right],\quad z\in O_1^{\pm}\cup O_2^{\pm},\vspace{0.05in}\\
\left[\!\!\begin{array}{cc}
1& 0  \vspace{0.05in}\\
\dfrac{e^{y(2g(z)-\frac12(z+\frac1z))}f^2(z)}{r_{3}(z)}& 1
\end{array}\!\!\right],\quad z\in O_3^{\pm}\cup O_4^{\pm},
\end{cases}
\ene
and for $z\in (-b,b)$
\bee
J_4(y;z)
=\begin{cases} \,\,
i\sigma_2,\quad z\in \Sigma_1\cup\Sigma_3\cup\Sigma_5\cup\Sigma_7,\vspace{0.05in}\\
\left[\!\!\begin{array}{cc}
e^{xm_1+n_1}&  0  \vspace{0.05in}\\
0& e^{-(xm_1+n_1)}
\end{array}\!\!\right],\quad z\in\Sigma_2,\v\\
\left[\!\!\begin{array}{cc}
e^{xm_2+n_2}&  0  \vspace{0.05in}\\
0& e^{-(xm_2+n_2)}
\end{array}\!\!\right],\quad z\in\Sigma_4,\v\\
\left[\!\!\begin{array}{cc}
e^{xm_3+n_1}&  0  \vspace{0.05in}\\
0& e^{-(xm_3+n_1)}
\end{array}\!\!\right],\quad z\in\Sigma_6;
\end{cases}
\ene

 \item {} Normalization:
\bee
N^{(4)}(y;z)=\left\{\begin{array}{ll}
    \mathbb{I}_2+O\left(1/z\right),  & z\to\infty, \v\\
    \dfrac{(q+1)^2}{m^2+(q+1)^2}\left(\begin{array}{cc} 1&  \frac{im}{q+1} \v\\
\frac{im}{q+1} & 1 \end{array}\right)\left(\mathbb{I}_2+\mu_1^{(0)}(iz-i)\right)e^{(\frac12 h_++yg(z))\sigma_3}f(z)^{\sigma_3} & \v\\
\quad +\mathcal{O}((z-1)^2), & z\to 1.
\end{array}\right.\no
\ene
\end{itemize}
\end{prop}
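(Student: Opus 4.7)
The plan is to verify the three defining conditions of Proposition~\ref{RH9} by direct computation from the transformation~(\ref{N4}) together with the properties of $N^{(3)}(y;z)$, $g(z)$, $f(z)$, and $r_3(z)$ that have already been established.

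First I would address analyticity. Outside the closure of the lens region $O_1\cup O_2\cup O_3\cup O_4$, the transformation~(\ref{N4}) is trivial, so $N^{(4)}=N^{(3)}$ is analytic off $(-b,b)$ by Proposition~\ref{RH8}. Within each open half-lens $O_j\cap\mathbb{C}_{\pm}$ the lens multiplier is triangular with a single off-diagonal entry built from $g$, $f$, and $r_3$; since the only branch cut of each of these functions inside $O_j$ lies on $\Sigma_j$ along the real axis, the entries extend analytically into each open half-lens, and therefore so does $N^{(4)}$.

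Next, the new jumps on the lens boundary arcs $O_j^{\pm}$ come entirely from the multiplier in~(\ref{N4}), since $N^{(3)}$ is analytic across those curves. With the clockwise orientation of $\partial O_j$ declared in the text, the side outside the lens is the $+$ side and the side inside is the $-$ side; one has $N^{(4)}_+=N^{(3)}$ and $N^{(4)}_-=N^{(3)}\cdot T$, where $T$ is the lens factor, so the jump is $T^{-1}$. A direct inspection shows that this agrees with~(\ref{J4}) on $O_j^{\pm}$. The jumps on $\Sigma_2$, $\Sigma_4$, and $\Sigma_6$ are inherited unchanged from $N^{(3)}$ because those intervals lie outside every lens.

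The delicate step is the cancellation on $\Sigma_1\cup\Sigma_3$ (and analogously on $\Sigma_5\cup\Sigma_7$). For $z_0\in\Sigma_1$, write $T_{\pm}$ for the lens factor applied in the upper, respectively lower, half-lens, evaluated at $z_0\pm i0$; then $N^{(4)}_{\pm}=N^{(3)}_{\pm}T_{\pm}$, so using $N^{(3)}_+=N^{(3)}_-J_3$ the new jump is $J_4=T_-^{-1}J_3 T_+$. Applying the three identities $g_+(z)+g_-(z)=\tfrac12(z+z^{-1})$, $f_+(z)f_-(z)=r_2^{-1}(z)$, and $r_{3\pm}(z)=\pm r_2(z)$, one finds that $T_-$ coincides with the leftmost triangular factor in the decomposition of $J_3$ displayed immediately before~(\ref{N4}), while $T_+^{-1}$ coincides with its rightmost factor. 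The outer factors thus telescope, leaving only the middle factor $i\sigma_2$, which is exactly the stated $J_4$ on $\Sigma_1\cup\Sigma_3$. On $\Sigma_5\cup\Sigma_7$ the same mechanism works with the lower-triangular decomposition of $J_3$ together with the alternate identity $f_+(z)f_-(z)=r_2(z)$. Finally, for the normalization, both $z=\infty$ and $z=1$ lie outside every lens (since $1\in\Sigma_2\subset[1/a,a]$), so $N^{(4)}\equiv N^{(3)}$ in neighborhoods of these points and the expansions carry over from Proposition~\ref{RH8}. The main obstacle is the sign bookkeeping in the $\Sigma_1\cup\Sigma_3$ step: aligning the $\pm$ conventions for $g_{\pm}$ and $f_{\pm}$, the sign flip in $r_{3\pm}$, and the clockwise lens orientation so that the telescoping is exact; once these are correctly matched, the cancellation is immediate.
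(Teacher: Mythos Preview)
Your proposal is correct and follows essentially the same route as the paper: the paper establishes the triangular factorizations of $J_3$ on $\Sigma_1\cup\Sigma_3$ and $\Sigma_5\cup\Sigma_7$ immediately before defining~(\ref{N4}), and the RH problem~\ref{RH9} is then stated as the direct consequence of that factorization together with the transformation. Your write-up simply makes explicit the verification steps (analyticity in the half-lenses, the jump $T^{-1}$ on the lens arcs under the clockwise orientation, the telescoping $T_-^{-1}J_3T_+=i\sigma_2$ via the $g$, $f$, and $r_{3\pm}$ identities, and the inheritance of the normalizations since $z=\infty$ and $z=1\in\Sigma_2$ lie outside all lenses) that the paper leaves implicit.
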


\begin{lemma}\label{le6} The following inequalities hold:
\begin{align}
\begin{aligned}
&\mathrm{Re}(2g(z)-\frac12(z+\frac1z))<0,\quad z\in O_1^{\pm}\cup O_2^{\pm},\v\\
&\mathrm{Re}(2g(z)-\frac12(z+\frac1z))>0,\quad z\in O_3^{\pm}\cup O_4^{\pm}. \\
\end{aligned}
\end{align}
\end{lemma}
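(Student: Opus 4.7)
The plan is to set $\phi(z):=2g(z)-\tfrac{1}{2}(z+1/z)$ and reduce the claimed inequalities to a sign analysis of $\mathrm{Re}\,\phi$ near the arcs $\Sigma_1\cup\Sigma_3\cup\Sigma_5\cup\Sigma_7$, where I expect $\mathrm{Re}\,\phi\equiv0$. First I would verify the Schwarz reflection $\phi(\bar z)=\overline{\phi(z)}$: since the constants $m_4,m_5$ from Lemma~\ref{le4} are real and $R(\bar z)=\overline{R(z)}$ for the chosen branch, the explicit formula $g'(z)=\tfrac{1}{4}\bigl(1-1/z^2-P(z)/R(z)\bigr)$ with $P(z)=z^4+m_4z^2+m_5$ respects reflection, and so does $\phi$ after integration with the normalisation $g(\infty)=0$. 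Combining this with the jump condition $g_++g_-=\tfrac12(z+1/z)$ on the arcs yields $2\,\mathrm{Re}\,g_+=\tfrac12(x+1/x)$ there, hence $\mathrm{Re}\,\phi_+\equiv0$ on all four arcs. The sign of $\mathrm{Re}\,\phi$ in each lens $O_j^\pm$ is then governed by the upward normal derivative $\partial_y\mathrm{Re}\,\phi|_{y=0^+}=-\mathrm{Im}\,\phi'_+(x)$ together with harmonicity of $\mathrm{Re}\,\phi$ off the cuts.

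Next I would compute $\mathrm{Im}\,\phi'_+$ explicitly on each arc using $\phi'=-P/(2R)$. Factorising $R(z)=\prod_{c\in\{a,b,1/a,1/b\}}\sqrt{z^2-c^2}$ with each factor cut on $[-c,c]$ and positive on $(c,\infty)$, a direct branch check yields
\[
R_+=+i|R|\ \text{on}\ \Sigma_1\cup\Sigma_5,\qquad R_+=-i|R|\ \text{on}\ \Sigma_3\cup\Sigma_7.
\]
In parallel, positivity of the weight $1/R$ on the gaps $\Sigma_2,\Sigma_4$ and the elementary moment comparison $\langle\zeta^{2k}\rangle_{\Sigma_2}>\langle\zeta^{2k}\rangle_{\Sigma_4}$ for $k=1,2$, applied to the explicit quotients in Lemma~\ref{le4}, force $m_5>0$ and $m_4<0$. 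The conditions $\int_{\Sigma_2}P/R\,d\zeta=\int_{\Sigma_4}P/R\,d\zeta=0$ together with $P$ being even of degree four then pin the four real roots of $P$ to $\pm x_1\in\Sigma_2\cup\Sigma_6$ and $\pm x_2\in\Sigma_4$, giving the sign pattern $P>0$ on $\Sigma_1\cup\Sigma_7$ and $P<0$ on $\Sigma_3\cup\Sigma_5$. Substituting into $\phi'_+=-P/(2R_+)$ I obtain $\mathrm{Im}\,\phi'_+>0$ on $\Sigma_1\cup\Sigma_3$ and $\mathrm{Im}\,\phi'_+<0$ on $\Sigma_5\cup\Sigma_7$, so $\partial_y\mathrm{Re}\,\phi|_{y=0^+}<0$ above $\Sigma_1\cup\Sigma_3$ and $>0$ above $\Sigma_5\cup\Sigma_7$.

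To finish I would upgrade this local sign information to the full lenses. Schwarz symmetry $\mathrm{Re}\,\phi(\bar z)=\mathrm{Re}\,\phi(z)$ immediately transfers each sign to the reflected lens $O_j^-$. Within the simply-connected component $O_j^\pm$, $\mathrm{Re}\,\phi$ is harmonic and vanishes on $\Sigma_j$, so provided the lens is drawn narrow enough the maximum principle applied to $\pm\mathrm{Re}\,\phi$ promotes the boundary-strip sign to the entire component; equivalently one may redefine the lens as a level region $\{|\mathrm{Re}\,\phi|<\eta\}$ for small $\eta>0$. The hard part will be the sign analysis in the second paragraph: the branch tracking for $R_+$ is error-prone (the principal branch cut of $\sqrt{z^2-c^2}$ lies on the imaginary axis, not on $[-c,c]$, so one must use the factorised branch and compute each $R_{c,+}$ individually), and one must verify that the variational constraints of Lemma~\ref{le4} admit no additional sign changes of $P$ outside the gaps $\Sigma_2,\Sigma_4,\Sigma_6$. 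Once both signs are pinned down on each arc, the rest of the argument is standard Deift--Zhou $g$-function theory.
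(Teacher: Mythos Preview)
Your proposal takes a genuinely different route from the paper. In fact the paper does not prove this lemma at all: immediately after the statement the authors write ``Notice that these two inequalities can not be rigorously shown. But, we can check the results hold by using the numerical calculation.'' Their ``proof'' is a numerical verification only.

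Your outline, by contrast, is the standard Deift--Zhou $g$-function sign analysis, and it is essentially correct. The reduction $\phi'=-P/(2R)$, the Schwarz symmetry yielding $\mathrm{Re}\,\phi_\pm\equiv 0$ on the arcs, the Cauchy--Riemann relation $\partial_y\mathrm{Re}\,\phi=-\mathrm{Im}\,\phi'$, and your branch assignments $R_+=\pm i|R|$ on the four arcs all check out (the evenness $R(-z)=R(z)$ together with $R_\pm(-x)=R_\mp(x)$ carries the signs from $\Sigma_1,\Sigma_3$ over to $\Sigma_7,\Sigma_5$). The root location for $P$ actually follows more directly from the gap conditions $\int_{\Sigma_2}P/R=\int_{\Sigma_4}P/R=0$ than from your moment comparison: since $R$ has fixed sign on each gap, $P$ must change sign on each, and evenness plus degree four then pins the roots to $\pm x_1\in\Sigma_2\cup\Sigma_6$ and $\pm x_2\in\Sigma_4$, giving the sign pattern you state without a separate argument for $m_4,m_5$. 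Complex roots of $P$ are excluded because a real-coefficient even quartic with non-real roots is strictly positive on $\mathbb{R}$, contradicting the $\Sigma_2$ condition. The extension to the full (sufficiently narrow) lens via harmonicity and the one-sided normal derivative is routine. In short, your approach supplies precisely the analytical argument that the paper declines to give.
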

Notice that these two inequalities can not be  rigorously shown. But, we can check the results hold by using the numerical calculation.

According to Lemma \ref{le6}, we know that the off-diagonal entries of the jump matrix defined by Eq.~(\ref{J4}) along the
upper and lower lenses $O_1,O_2,O_3,O_4$ are exponentially decay when $y\to-\infty$.

\subsection{The outer parametrix}

We construct a new Riemann-Hilbert problem which only has the jump on $(-b,b)$.

\begin{prop}\label{RH10}
Find a $2\times 2$ matrix function $N^{o}(y;z)$ that satisfies the following properties:

\begin{itemize}

 \item {} Analyticity: $N^o(y;z)$ is analytic in $\mathbb{C}\setminus(-b,b)$ and takes continuous boundary values on $(-b,b)$.

 \item {} Jump condition: The boundary values on the jump contour are defined as
 \bee
N^o_{+}(y;z)=N^o_{-}(y;z)J^o(y;z),\quad z\in (-b,b),
\ene
where
\bee
J^o(y;z)
=\begin{cases}
\left[\!\!\begin{array}{cc}
0& 1  \vspace{0.05in}\\
-1& 0
\end{array}\!\!\right],\quad z\in \Sigma_1\cup\Sigma_3\cup\Sigma_5\cup\Sigma_7,\vspace{0.05in}\\
\left[\!\!\begin{array}{cc}
e^{xm_1+n_1}&  0  \vspace{0.05in}\\
0& e^{-(xm_1+n_1)}
\end{array}\!\!\right],\quad z\in\Sigma_2,\v\\
\left[\!\!\begin{array}{cc}
e^{xm_2+n_2}&  0  \vspace{0.05in}\\
0& e^{-(xm_2+n_2)}
\end{array}\!\!\right],\quad z\in\Sigma_4,\v\\
\left[\!\!\begin{array}{cc}
e^{xm_3+n_1}&  0  \vspace{0.05in}\\
0& e^{-(xm_3+n_1)}
\end{array}\!\!\right],\quad z\in\Sigma_6;
\end{cases}
\ene

 \item {} Normalization:
\bee
J^o(y;z)=\left\{\begin{array}{ll}
    \mathbb{I}_2+O\left(1/z\right),  & z\to\infty, \v\\
    \dfrac{(q+1)^2}{m^2+(q+1)^2}\left(\begin{array}{cc} 1&  \frac{im}{q+1} \v\\
\frac{im}{q+1} & 1 \end{array}\right)\left(\mathbb{I}_2+\mu_1^{(0)}(iz-i)\right)e^{(\frac12 h_++yg(z))\sigma_3}f(z)^{\sigma_3} & \v\\
\quad +\mathcal{O}((z-1)^2), & z\to 1.
\end{array}\right.
\ene
\end{itemize}
\end{prop}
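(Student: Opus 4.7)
The plan is to construct $N^{o}(y;z)$ explicitly as a Baker--Akhiezer function on the genus-three hyperelliptic Riemann surface $\mathcal{X}:=\{w^{2}=(z^{2}-a^{-2})(z^{2}-b^{-2})(z^{2}-a^{2})(z^{2}-b^{2})\}$, whose eight branch points are precisely the endpoints of the four band arcs $\Sigma_{1}\cup\Sigma_{3}\cup\Sigma_{5}\cup\Sigma_{7}$ across which the jump is the constant rotation $i\sigma_{2}$. The three gap intervals $\Sigma_{2},\Sigma_{4},\Sigma_{6}$ carry only diagonal jumps whose exponents are affine in $x$ (with coefficients $m_{1},m_{2},m_{3}$ and $n_{1},n_{2}$ already pinned by Lemmas \ref{le4} and \ref{le5}); this is precisely the input of a standard finite-gap model problem solvable by theta functions attached to $\mathcal{X}$.

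I would first decouple the off-diagonal rotation on the bands via a Szeg\H{o}-type substitution
\begin{equation*}
N^{o}(z)=E(z)\,\Psi(z)\,D(z),
\end{equation*}
where $E(z)$ is a constant invertible matrix diagonalizing $i\sigma_{2}$, $\Psi(z)$ is a scalar $\sigma_{3}$-conjugated factor built from $R(z)$ of Lemma \ref{le4} with $\Psi(\infty)=\mathbb{I}_{2}$ and designed so that $\Psi_{+}\sigma_{3}\Psi_{-}^{-1}$ produces the required band jump, and $D(z)=\operatorname{diag}(\Theta_{+}(z),\Theta_{-}(z))$ is the diagonal Baker--Akhiezer factor. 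After this reduction the jumps on $\Sigma_{1}\cup\Sigma_{3}\cup\Sigma_{5}\cup\Sigma_{7}$ become the identity and on each gap the remaining jump is diagonal of the form $e^{(xm_{j}+n_{k})\sigma_{3}}$. I would then define
\begin{equation*}
\Theta_{\pm}(z)=\frac{\theta\!\bigl(\mathcal{A}(z)\mp d(x)-K\bigr)}{\theta\!\bigl(\mathcal{A}(z)\mp d_{0}-K\bigr)},
\end{equation*}
where $\theta$ is the Riemann theta function of $\mathcal{X}$, $\mathcal{A}$ is the Abel map based at a fixed regular point, $K$ is the Riemann vector of constants, and the vectors $d_{0},d(x)\in\mathbb{C}^{3}$ are determined by the requirement that the $b$-period increments of the logarithm reproduce the prescribed exponents $xm_{j}+n_{k}$ and that $\Theta_{\pm}(\infty)=1$. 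The symmetry $r_{2}(z)=r_{2}(-z)$ forces the identification $n_{1}$ on both $\Sigma_{2}$ and $\Sigma_{6}$ and reduces the number of independent shift parameters in $d(x)$ to three, matching the genus $g=3$.

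The main obstacle is the \emph{double normalization}: $N^{o}$ must simultaneously reduce to $\mathbb{I}_{2}+O(1/z)$ at $z=\infty$ and to the prescribed matrix form at $z=1$ involving $q,m,h_{+}$. Unlike the usual outer parametrix construction, in which only the behaviour at infinity is imposed, here $z=1$ is singled out because it is the image under $z\mapsto iz$ of the branch point $z=i$ of the original Lax pair and is where the reconstruction of $u$ takes place. The resolution I would pursue is that the normalization at $z=1$ is not an independent constraint on $N^{o}$ but the \emph{definition} of the macroscopic quantities $h_{+},q,m$ extracted from $N^{o}$ through the reconstruction formula (\ref{fanyan-4}) together with the reciprocal-variable relation $x(y)=y-\ln\bigl((N^{o}_{12}+N^{o}_{22})/(N^{o}_{11}+N^{o}_{21})\bigr)\big|_{z=1}$. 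What must be checked, and is the delicate point of the argument, is the generic non-vanishing $\theta(\mathcal{A}(1)\mp d(x)-K)\neq 0$ so that $N^{o}$ remains bounded and invertible at $z=1$; this follows from non-speciality of the associated divisor, enforced by the reality of $d(x)$ on the Jacobian that is inherited from the symmetries $r_{2}(z)=r_{2}(-z)=r_{2}(1/z)$ of the spectral density.
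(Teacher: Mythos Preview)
Your approach is essentially the same as the paper's: a theta-function (Baker--Akhiezer) construction on the genus-three hyperelliptic curve $w^{2}=R(z)^{2}$ determined by the eight band endpoints, following the outer-model machinery of Kamvissis--McLaughlin--Miller. The paper does not derive the solution; it simply cites RH~Problem~4.2.3 and Theorem~4.3.1 of that reference and writes down the four entries of $N^{o}$ as ratios of theta functions with prefactors $b^{\pm}(z)/\beta(z)$, so your outline is in fact more detailed than what the paper provides.

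One point worth flagging: your proposed decomposition $N^{o}=E\,\Psi\,D$ with $E$ a constant matrix diagonalising $i\sigma_{2}$ is not the usual route. The standard construction, and the one implicit in the paper's $b^{\pm}(z)/\beta(z)$ factors, builds the off-diagonal part from the quartic-root function $\gamma(z)=\bigl(\prod(z-z_{k})/\prod(z-z_{k}')\bigr)^{1/4}$ (so that $\gamma_{+}=\pm i\gamma_{-}$ on the bands) and the combinations $\tfrac{1}{2}(\gamma\pm\gamma^{-1})$; diagonalising $i\sigma_{2}$ by a constant $E$ would not by itself produce the correct fourth-root singularities at the branch points. Otherwise your plan is sound, and your observation that the $z=1$ normalisation is not an additional constraint on $N^{o}$ but rather the \emph{definition} of $q,m,h_{+}$ via the reconstruction formula is exactly right---the paper does not make this point explicit.
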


According to the RH Problem 4.2.3 (Outer model problem) and Theorem 4.3.1 in Ref.~\cite{Kamvissis}, we can solve the RH problem \ref{RH10} by the Riemann theta function
\bee\no
\Theta(w)=\sum_{n\in\mathbb{Z}^G}\exp(\frac12n^\mathrm{T}Hn+n^\mathrm{T}w),\quad w\in\mathbb{C}^G,
\ene
where $G$ is an even nonnegative integer, and $H$ is a $G\times G$ matrix. The matrix function $N^{o}(y;z)$ is given by the formulae
\bee
\begin{array}{l}
\d N^{o}_{11}(y;z)=\dfrac{b^-(z)}{\beta(z)}\frac{\Theta(A^{\mathrm{cut}}(\infty)-V_1)}{\Theta(A^{\mathrm{cut}}(z)-V_1)}
\frac{\Theta(A^{\mathrm{cut}}(z)-V_1+iU/\hbar)}{\Theta(A^{\mathrm{cut}}(\infty)-V_1+iU/\hbar)},\v\v\\
\d N^{o}_{12}(y;z)=\dfrac{b^+(z)}{\beta(z)}e^{2ih_+(z_0)/\hbar}\frac{\Theta(A^{\mathrm{cut}}(\infty)-V_1)}{\Theta(-A^{\mathrm{cut}}(z)-V_1)}
\frac{\Theta(-A^{\mathrm{cut}}(z)-V_1+iU/\hbar)}{\Theta(A^{\mathrm{cut}}(\infty)-V_1+iU/\hbar)},\v\v\\
\d N^{o}_{21}(y;z)=\dfrac{b^+(z)}{\beta(z)}e^{-2ih_+(z_0)/\hbar}\frac{\Theta(-A^{\mathrm{cut}}(\infty)-V_2)}{\Theta(A^{\mathrm{cut}}(z)-V_2)}
\frac{\Theta(A^{\mathrm{cut}}(z)-V_2+iU/\hbar)}{\Theta(-A^{\mathrm{cut}}(\infty)-V_2+iU/\hbar)},\v\v\\
\d N^{o}_{11}(y;z)=\dfrac{b^-(z)}{\beta(z)}\frac{\Theta(-A^{\mathrm{cut}}(\infty)-V_2)}{\Theta(-A^{\mathrm{cut}}(z)-V_2)}
\frac{\Theta(-A^{\mathrm{cut}}(z)-V_2+iU/\hbar)}{\Theta(-A^{\mathrm{cut}}(\infty)-V_2+iU/\hbar)},
\end{array}
\ene
where the parameters are detailed in Ref.~\cite{Kamvissis}.

\subsection{The local parametrix }

In this section, we will construct a local matrix parametrix $N^{b}(y;z)$ as $z\in\delta^{b}:=\{z||z-b|<\epsilon,~\epsilon~\mathrm{is~a~suitable~small~ positive~parameter} \}$. Similarly, we can define regions $\delta^{a}$, $\delta^{-b},\delta^{-a},\delta^{\frac{1}{a}},\delta^{\frac{1}{b}},\delta^{-\frac{1}{b}}$, and $\delta^{-\frac{1}{a}}$. Note that
\bee
z+\frac1z-4g_{\pm}(z)=\mathcal{O}(\sqrt{z-b}),\quad z\to b.
\ene

Then, we define the following conformal map:
\bee
\zeta_b:=\dfrac{y^2(g(z)-\frac14(z+\frac1z))^2}{4},\quad z\in\delta^{b}.
\ene

Make the following transformation:
\bee
Y^{(1)}(y;z)
=\begin{cases}
N^{(4)}(y;z)\left(\dfrac{e^{\frac{i\pi}{4}}}{\sqrt{r_3(z)}f(z)}\right)^{\sigma_3}e^{-2\zeta_b^{\frac12}\sigma_3}\left[\!\!\begin{array}{cc}
0& -i \\
i& 0
\end{array}\!\!\right],\quad z\in\mathbb{C}_+\cap\delta^{b},\v\\
N^{(4)}(y;z)\left(\dfrac{e^{\frac{i\pi}{4}}}{\sqrt{-r_3(z)}f(z)}\right)^{\sigma_3}e^{-2\zeta_b^{\frac12}\sigma_3}\left[\!\!\begin{array}{cc}
0& -i \\
i& 0
\end{array}\!\!\right],\quad z\in\mathbb{C}_-\cap\delta^{b}.
\end{cases}
\ene

\begin{lemma} Matrix function $N^{(1)}(y;z)$ satisfies the following jump conditions:
\bee
Y_+^{(1)}(y;z)
=\begin{cases}
Y_-^{(1)}(y;z)\left[\!\!\begin{array}{cc}
1& 0 \\
-i& 1
\end{array}\!\!\right],\quad z\in\delta^{b}\cap\{\mathrm{upper~and~lower~lenses}\},\v\\
Y_-^{(1)}(y;z)\left[\!\!\begin{array}{cc}
0& i \\
i& 0
\end{array}\!\!\right],\quad z\in\delta^{b}\cap[0,+\infty).
\end{cases}
\ene

\end{lemma}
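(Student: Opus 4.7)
The plan is to verify both jumps by direct substitution, reducing each to an algebraic identity in the scalar factors $\alpha(z):=e^{i\pi/4}/(\sqrt{r_3(z)}f(z))$, the branch $\zeta_b^{1/2}$, and the jumps of $N^{(4)}$ listed in Problem~\ref{RH9}. First I would record the key identity
\begin{equation*}
y\bigl(2g(z)-\tfrac12(z+z^{-1})\bigr)=4\zeta_b^{1/2}(z),
\end{equation*}
which is immediate from $\zeta_b=y^2(g(z)-\tfrac14(z+z^{-1}))^2/4$ once the branch of $\zeta_b^{1/2}$ is fixed with cut on $\Sigma_1\cap\delta^b$ (so that $\zeta_{b+}^{1/2}=-\zeta_{b-}^{1/2}$ there). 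In particular $e^{-y(2g-\frac12(z+z^{-1}))}=e^{-4\zeta_b^{1/2}}$, so the $(1,2)$ entry of $J_4|_{O_1^{\pm}}$ becomes $-e^{-4\zeta_b^{1/2}}/(r_3 f^2)$.

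For the lens jump, since $O_1^{\pm}\cap\delta^b$ lies in a single half-plane the conjugating factor $C:=\alpha^{\sigma_3}e^{-2\zeta_b^{1/2}\sigma_3}\sigma_2$ is continuous across it, so $Y^{(1)}$ inherits the jump $C^{-1}J_4 C$. Using $\alpha^2=i/(r_3 f^2)$, three successive conjugations finish the job: conjugation by $\alpha^{\sigma_3}$ replaces the $(1,2)$ entry by $-e^{-4\zeta_b^{1/2}}/(r_3 f^2)\cdot\alpha^{-2}=ie^{-4\zeta_b^{1/2}}$; conjugation by $e^{-2\zeta_b^{1/2}\sigma_3}$ multiplies this entry by $e^{4\zeta_b^{1/2}}$, leaving $i$; and finally conjugation by $\sigma_2$ converts upper-triangular to lower-triangular with a sign flip, producing exactly $\bigl(\begin{smallmatrix} 1 & 0 \\ -i & 1\end{smallmatrix}\bigr)$.

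For the jump on $\Sigma_1\cap\delta^b\subset[0,+\infty)\cap\delta^b$, the factor $C$ is itself discontinuous, and one must combine $J_{N^{(4)}}=i\sigma_2$ with the boundary-value jumps of the three scalars. The relevant data are: $\zeta_{b+}^{1/2}=-\zeta_{b-}^{1/2}$; the relations $r_{3\pm}=\pm r_2$, which give $\sqrt{r_{3+}}=\sqrt{r_2}=\sqrt{-r_{3-}}$ so that the upper- and lower-half-plane definitions of $\alpha$ mesh consistently on the cut; and $f_+f_-=r_2^{-1}$ from the $f$-RH problem. Multiplying these out yields
\begin{equation*}
\alpha_+\alpha_-=\frac{e^{i\pi/2}}{r_2 f_+f_-}=i.
\end{equation*}
Using the anti-commutation $\sigma_2 e^{a\sigma_3}=e^{-a\sigma_3}\sigma_2$ to migrate the $\sigma_2$'s past the diagonal exponentials, the product $C_-^{-1}(i\sigma_2)C_+$ collapses to $i(\alpha_+\alpha_-)^{\sigma_3}\sigma_2=\bigl(\begin{smallmatrix} 0 & i\\ -1/i & 0\end{smallmatrix}\bigr)=\bigl(\begin{smallmatrix} 0 & i\\ i & 0\end{smallmatrix}\bigr)$, which is the second claim.

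The main obstacle is the branch bookkeeping: fixing $\zeta_b^{1/2}$ with cut on $\Sigma_1\cap\delta^b$, pinning down the square roots $\sqrt{r_3}|_+$ and $\sqrt{-r_3}|_-$ so that both reduce to $\sqrt{r_2}$ on $\Sigma_1$, and checking these conventions are compatible with the $f$- and $g$-RH problems of the previous subsection. Once these branches are locked in, every step above is purely algebraic and reproduces the standard Airy-parametrix jump structure in a neighborhood of $b$; the remaining endpoints $a,-a,-b,\pm 1/a,\pm 1/b$ will be handled by the same recipe, modulo the obvious sign and orientation changes.
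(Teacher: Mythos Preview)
Your proof is correct and follows essentially the same route as the paper: both arguments compute the jump of $Y^{(1)}$ as $C_-^{-1}J_4C_+$ (with $C=\alpha^{\sigma_3}e^{-2\zeta_b^{1/2}\sigma_3}\sigma_2$) and reduce the resulting matrix products algebraically, though you are more explicit about isolating the scalar identities $\alpha_+\alpha_-=i$ and $\zeta_{b+}^{1/2}=-\zeta_{b-}^{1/2}$ that drive the simplification. One minor slip: the local model here is the \emph{Bessel} parametrix (as the paper goes on to invoke), not Airy; this does not affect the lemma itself.
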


\begin{proof}
If $z\in\delta^{b}\cap[0,+\infty)$, we have

\begin{align}\no
\begin{array}{rl}
Y_+^{(1)}(y;z)&=N^{(4)}_+(y;z)\left(\dfrac{e^{\frac{i\pi}{4}}}{\sqrt{r_3(z)}f(z)}\right)^{\sigma_3}e^{-2\zeta_b^{\frac12}\sigma_3}\left[\!\!\begin{array}{cc}
0& -i \vspace{0.05in} \\
i& 0
\end{array}\!\!\right]  \vspace{0.15in}\\
&=N^{(4)}_-(y;z)\left[\!\!\begin{array}{cc}
0& 1  \vspace{0.05in}\\
-1& 0
\end{array}\!\!\right]\left(\dfrac{e^{\frac{i\pi}{4}}}{\sqrt{r_3(z)}f(z)}\right)^{\sigma_3}e^{-2\zeta_b^{\frac12}\sigma_3}\left[\!\!\begin{array}{cc}
0& -i \vspace{0.05in}\\
i& 0
\end{array}\!\!\right] \vspace{0.15in}\\
&=Y_-^{(1)}(y;z)\left[\!\!\begin{array}{cc}
0& -i \vspace{0.05in}\\
i& 0
\end{array}\!\!\right]e^{2\zeta_b^{\frac12}\sigma_3}\left(\dfrac{e^{\frac{i\pi}{4}}}{\sqrt{r_3(z)}f(z)}\right)^{-\sigma_3}\left[\!\!\begin{array}{cc}
0& 1  \vspace{0.05in}\\
-1& 0
\end{array}\!\!\right] \vspace{0.15in} \\
& \qquad \times
 \left(\dfrac{e^{\frac{i\pi}{4}}}{\sqrt{r_3(z)}f(z)}\right)^{\sigma_3}e^{-2\zeta_b^{\frac12}\sigma_3}\left[\!\!\begin{array}{cc}
0& -i \vspace{0.05in}\\
i& 0
\end{array}\!\!\right]  \vspace{0.15in} \\
&=Y_-^{(1)}(y;z)\left[\!\!\begin{array}{cc}
0& i \\
i& 0
\end{array}\!\!\right].
\end{array}
\end{align}

If $z\in\delta^{b}\cap\{\mathrm{upper~lens}\}$, we have
\begin{align}\no
\begin{array}{rl}
Y_+^{(1)}(y;z)&=N^{(4)}_+(y;z)\left(\dfrac{e^{\frac{i\pi}{4}}}{\sqrt{r_3(z)}f(z)}\right)^{\sigma_3}e^{-2\zeta_b^{\frac12}\sigma_3}\left[\!\!\begin{array}{cc}
0& -i \vspace{0.05in}\\
i& 0
\end{array}\!\!\right] \vspace{0.15in} \\
&=N^{(4)}_-(y;z)\left[\!\!\begin{array}{cc}
1& -\dfrac{e^{-2y(g(z)-\frac14(z+\frac1z))}}{r_3(z)f^2(z)}  \vspace{0.15in}\\
0& 1
\end{array}\!\!\right]\left(\dfrac{e^{\frac{i\pi}{4}}}{\sqrt{r_3(z)}f(z)}\right)^{\sigma_3}
e^{-2\zeta_b^{\frac12}\sigma_3}\left[\!\!\begin{array}{cc}
0& -i \vspace{0.05in}\\
i& 0
\end{array}\!\!\right]  \vspace{0.15in} \\
&=Y_-^{(1)}(y;z)\left[\!\!\begin{array}{cc}
0& -i \vspace{0.05in}\\
i& 0
\end{array}\!\!\right]e^{2\zeta_b^{\frac12}\sigma_3}\left(\dfrac{e^{\frac{i\pi}{4}}}{\sqrt{r_3(z)}f(z)}\right)^{-\sigma_3}\left[\!\!\begin{array}{cc}
1& -\dfrac{e^{-2y(g(z)-\frac14(z+\frac1z))}}{r_3(z)f^2(z)}  \vspace{0.05in}\\
0& 1
\end{array}\!\!\right] \vspace{0.15in} \\
&\quad \times \left(\dfrac{e^{\frac{i\pi}{4}}}{\sqrt{r_3(z)}f(z)}\right)^{\sigma_3}e^{-2\zeta_b^{\frac12}\sigma_3}\left[\!\!\begin{array}{cc}
0& -i \vspace{0.05in}\\
i& 0
\end{array}\!\!\right] \vspace{0.15in} \\
&=Y_-^{(1)}(y;z)\left[\!\!\begin{array}{cc}
1& 0 \vspace{0.05in}\\
-i& 1
\end{array}\!\!\right].
\end{array}
\end{align}

As the same way, we can obtain the jump condition of matrix function $Y^{(1)}(y;z)$ as $z\in\delta^{b}\cap\{\mathrm{lower~lens}\}$. Thus the proof is completed.
\end{proof}

\begin{figure}[!t]
    \centering
 \vspace{-0.15in}
  {\scalebox{0.2}[0.2]{\includegraphics{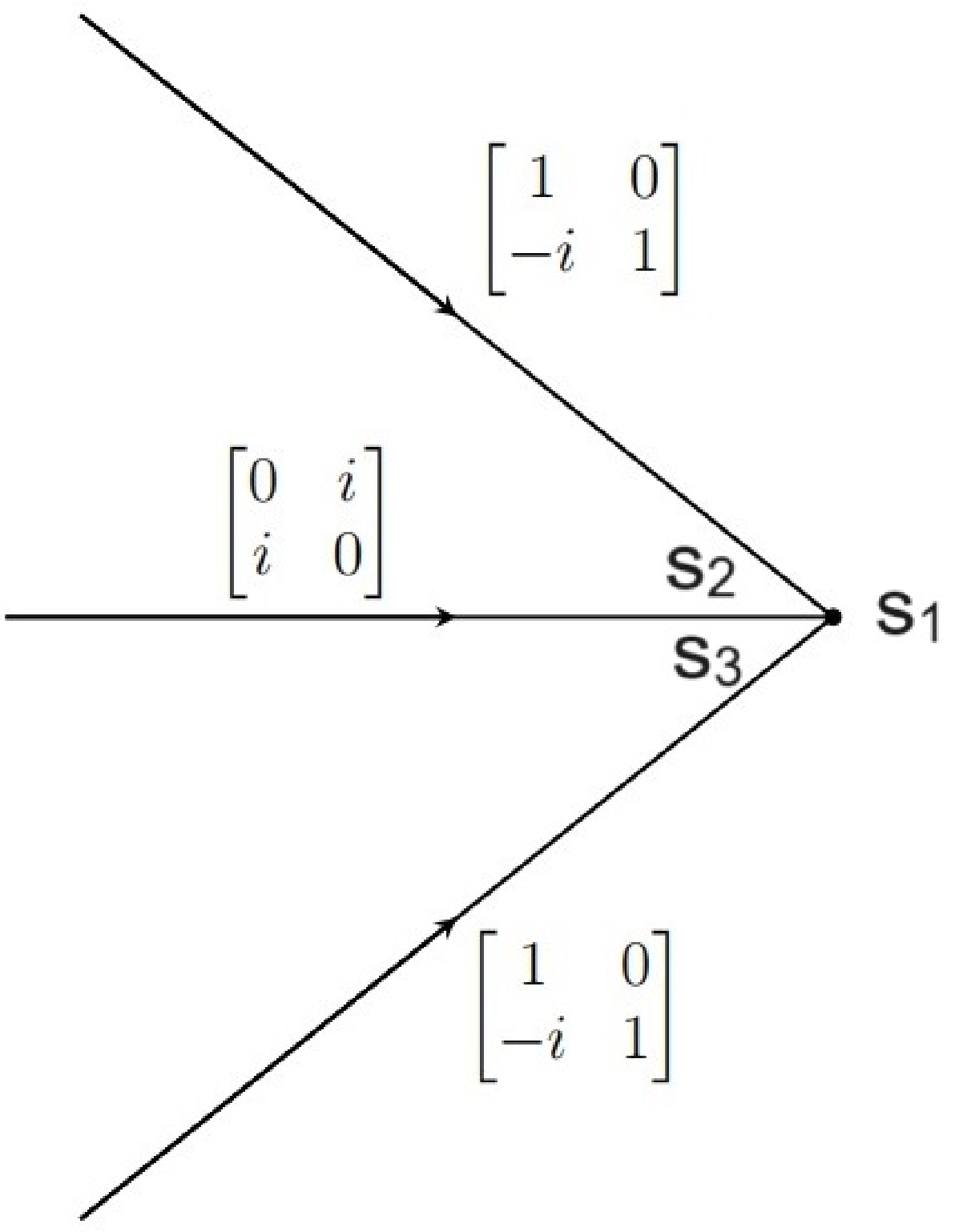}}}\hspace{-0.35in}
\vspace{0.05in}
\caption{Regional division for the Bessel model $U_{Bes}(z)$.}
   \label{fig2}
\end{figure}

Next, we will introduce the Bessel model $U_{Bes}(z)$ which satisfies the following Riemann-Hilbert problem:
\begin{prop}\label{BES}
Find a $2\times 2$ matrix function $U_{Bes}(z)$ that satisfies the following properties:

\begin{itemize}

 \item {} Analyticity: $U_{Bes}(z)$ is analytic for $z$ in the three regions shown in Figure \ref{fig2}, namely, $S_1: |\mathrm{arg}(z)|<\frac{2\pi}{3}$, $S_2: \frac{2\pi}{3}<\mathrm{arg}(z)<\pi$ and $S_3: -\pi<\mathrm{arg}(z)<-\frac{2\pi}{3}$ where $-\pi<\mathrm{arg}(z)\leq\pi$. It takes continuous boundary values on the excluded rays and at the origin from each sector.

 \item {} Jump condition: The boundary values on the jump contour $\gamma_U$ where $\gamma_U:=\gamma_{\pm}\cup\gamma_0$ with $\gamma_{\pm}=\{
 \mathrm{arg}(z)=\pm\frac{2\pi}{3}\}$ and $\gamma_0=\{
 \mathrm{arg}(z)=\pi\}$ are defined as
 \bee
U_{Bes+}(z)=U_{Bes-}(z)V_{B}(z),\quad z\in \gamma_U,
\ene
where
\bee
V_B(z)
=\begin{cases}
\left[\!\!\begin{array}{cc}
1& 0 \vspace{0.05in}\\
-i& 1
\end{array}\!\!\right],\quad z\in \gamma_{\pm},\vspace{0.05in}\\
\left[\!\!\begin{array}{cc}
0& i \vspace{0.05in}\\
i& 0
\end{array}\!\!\right],\quad z\in \gamma_0;
\end{cases}
\ene

 \item {} Normalization:
 \bee\no
U_{Bes}(z)=\left[\!\!\begin{array}{cc}
\mathcal{O}(\ln(|z|))& \mathcal{O}(\ln(|z|)) \vspace{0.05in}\\
\mathcal{O}(\ln(|z|))& \mathcal{O}(\ln(|z|))
\end{array}\!\!\right],\quad z\rightarrow 0.
\ene

\end{itemize}

\end{prop}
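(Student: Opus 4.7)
The plan is to construct $U_{Bes}(z)$ explicitly in terms of the modified Bessel functions $I_0(2\sqrt{z})$ and $K_0(2\sqrt{z})$, following the classical Bessel parametrix recipe (as in Kuijlaars--McLaughlin--Van Assche--Vanlessen and Chapter~6 of Deift's book). The jump data of Proposition \ref{BES} differ from the textbook normalization only by factors of $i$, so the whole construction reduces to a standard Bessel solution dressed by a constant diagonal conjugation.

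\textbf{Step 1.} First I would fix the principal branch $-\pi<\arg z\leq \pi$ of $\sqrt{z}$ and, in the sector $S_1=\{|\arg z|<2\pi/3\}$, define
\begin{equation*}
U_{Bes}(z)=e^{-\frac{i\pi}{4}\sigma_3}
\begin{pmatrix} I_0(2\sqrt{z}) & \dfrac{i}{\pi}K_0(2\sqrt{z}) \\[2pt] 2\pi i\sqrt{z}\,I_0'(2\sqrt{z}) & -2\sqrt{z}\,K_0'(2\sqrt{z}) \end{pmatrix} e^{\frac{i\pi}{4}\sigma_3}.
\end{equation*}
The outer diagonal factors turn the standard jump $\bigl(\begin{smallmatrix}1&0\\1&1\end{smallmatrix}\bigr)$ into $\bigl(\begin{smallmatrix}1&0\\-i&1\end{smallmatrix}\bigr)$ and the standard $i\sigma_2$ into the prescribed $\bigl(\begin{smallmatrix}0&i\\i&0\end{smallmatrix}\bigr)$. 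I would then extend $U_{Bes}$ to $S_2$ and $S_3$ by right-multiplication by suitable unit-triangular factors so that, by construction, the jump across each ray $\gamma_{\pm}=\{\arg z=\pm 2\pi/3\}$ equals $\bigl(\begin{smallmatrix}1&0\\-i&1\end{smallmatrix}\bigr)$: concretely set $U_{Bes}|_{S_2}=U_{Bes}|_{S_1}\cdot\bigl(\begin{smallmatrix}1&0\\i&1\end{smallmatrix}\bigr)$ and $U_{Bes}|_{S_3}=U_{Bes}|_{S_1}\cdot\bigl(\begin{smallmatrix}1&0\\-i&1\end{smallmatrix}\bigr)^{-1}$, with orientations chosen to match Figure \ref{fig2}.

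\textbf{Step 2.} To verify the jump across $\gamma_0=\{\arg z=\pi\}$, I would use the standard monodromy/connection formulas for modified Bessel functions as $z$ traverses the cut, namely
\begin{equation*}
I_0(2e^{i\pi/2}\sqrt{z_-})=I_0(2\sqrt{z_-}),\qquad K_0(2e^{i\pi/2}\sqrt{z_-})=K_0(2\sqrt{z_-})-i\pi\,I_0(2\sqrt{z_-}),
\end{equation*}
together with their derivative analogues, and likewise the clockwise version on the lower side. Carrying these substitutions through the matrix in Step~1, combined with the triangular dressings used to extend to $S_2,S_3$, produces exactly $V_B(z)=i\sigma_1$ on $\gamma_0$. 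This reduces, after the $e^{\pm i\pi\sigma_3/4}$ conjugation, to the well-known cancellation in the standard Bessel parametrix.

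\textbf{Step 3.} For the normalization at $z=0$, I would use the small-argument expansions $I_0(2\sqrt{z})=1+\mathcal{O}(z)$ and $K_0(2\sqrt{z})=-\log\sqrt{z}+\gamma_E+\mathcal{O}(z\log z)$ (with $\gamma_E$ Euler's constant), and the corresponding expansions of the derivatives. Each entry of $U_{Bes}(z)$ is then a linear combination of a bounded term and a $\log|z|$ term (the $\sqrt{z}\,K_0'(2\sqrt{z})$ entry is in fact bounded, but the $\log|z|$ estimate in the statement still holds), so $U_{Bes}(z)=\mathcal{O}(\log|z|)$ entrywise as $z\to 0$.

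\textbf{Main obstacle.} The principal difficulty is bookkeeping: correctly tracking branches of $\sqrt{z}$ across the three rays, orienting $\gamma_{\pm},\gamma_0$ as in Figure \ref{fig2}, and confirming that the $i$-factors in the prescribed $V_B$ come out exactly right rather than their complex conjugates or negatives. I expect that the cleanest way to check this is to verify the jump on $\gamma_+$ directly from the definition (a one-line triangular multiplication), then deduce $\gamma_-$ by the symmetry $z\mapsto \bar z$, and finally obtain the jump on $\gamma_0$ as the product of the three around the origin, which by the no-monodromy property of $U_{Bes}$ must equal the identity—this gives $V_B(\gamma_0)$ in terms of $V_B(\gamma_\pm)$ and fixes all signs uniquely.
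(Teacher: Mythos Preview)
Your proposal is correct and follows exactly the approach the paper invokes: the paper does not prove this RH~Problem at all but simply states it and then writes ``The solution of the Riemann-Hilbert Problem~\ref{BES} can be expressed explicitly according to the Bessel functions~[Kuijlaars--McLaughlin--Van Assche--Vanlessen],'' recording only the large-$z$ asymptotic form of $U_{Bes}$. Your explicit construction via $I_0,K_0$ with a constant $e^{\pm i\pi\sigma_3/4}$ conjugation is precisely the content of that citation, so you are supplying the details the paper chose to outsource; your identification of the sign/orientation bookkeeping as the only genuine issue is accurate.
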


The solution of the Riemann-Hilbert Problem \ref{BES} can be expressed explicitly according to the Bessel functions~\cite{Kuijlaars-1}.  In particular, the solution satisfies:
\bee
U_{Bes}(z)=\frac{\sqrt{2}}{2}(2\pi z^{\frac12})^{-\frac12\sigma_3}\left[\!\!\begin{array}{cc}
1& -1 \vspace{0.05in}\\
1& 1
\end{array}\!\!\right]\left(\mathbb{I}+\mathcal{O}(z^{-\frac12})\right)e^{2z^{\frac12}\sigma_3}.
\ene
uniformly as $z\in\infty$ in the complex plane $\mathbb{C}$ aside from the jumps.

Then the local matrix parametrix $N^{b}(y;z)$ can be expressed as:
\bee
N^{b}(y;z)=N^{b1}(y;z)U_{Bes}(\zeta_b)\left[\!\!\begin{array}{cc}
0& -i \vspace{0.05in}\\
i& 0
\end{array}\!\!\right]e^{2\zeta_b^{\frac12}\sigma_3}\left(\dfrac{e^{\frac{i\pi}{4}}}{\sqrt{\pm r_3(z)}f(z)}\right)^{-\sigma_3},\quad z\in\delta^{b}\cap\mathbb{C_{\pm}}
\ene
where
\bee\no
N^{b1}(y;z)=\frac{\sqrt{2}}{2}N^o(y;z)\left(\dfrac{e^{\frac{i\pi}{4}}}{\sqrt{\pm r_3(z)}f(z)}\right)^{\sigma_3}\left[\!\!\begin{array}{cc}
i& -i \vspace{0.05in}\\
i& i
\end{array}\!\!\right](2\pi\zeta_b^{\frac12})^{\frac12\sigma_3}.
\ene

Then, we have
\bee
N^{b}(y;z)\left(N^o(y;z)\right)^{-1}=\mathbb{I}+\mathcal{O}(|y|^{-1}),\quad y\to-\infty,z\in\partial\delta^{b}\setminus\gamma_U.
\ene

Similarly, we can construct local matrix functions $N^{a}(y;z),N^{-a}(y;z)$ and $N^{-b}(y;z)$ which satisfy the following properties:
\begin{align}\no
\begin{aligned}
&N^{a}(y;z)\left(N^o(y;z)\right)^{-1}=\mathbb{I}+\mathcal{O}(|y|^{-1}),\quad y\to-\infty,z\in\partial\delta^{a}\setminus\gamma_U,\v\\
&N^{-a}(y;z)\left(N^o(y;z)\right)^{-1}=\mathbb{I}+\mathcal{O}(|y|^{-1}),\quad y\to-\infty,z\in\partial\delta^{-a}\setminus\gamma_U,\v\\
&N^{-b}(y;z)\left(N^o(y;z)\right)^{-1}=\mathbb{I}+\mathcal{O}(|y|^{-1}),\quad y\to-\infty,z\in\partial\delta^{-b}\setminus\gamma_U,\v\\
&N^{-\frac1a}(y;z)\left(N^o(y;z)\right)^{-1}=\mathbb{I}+\mathcal{O}(|y|^{-1}),\quad y\to-\infty,z\in\partial\delta^{-\frac1a}\setminus\gamma_U,\v\\
&N^{-\frac1b}(y;z)\left(N^o(y;z)\right)^{-1}=\mathbb{I}+\mathcal{O}(|y|^{-1}),\quad y\to-\infty,z\in\partial\delta^{-\frac1b}\setminus\gamma_U,\v\\
&N^{\frac1a}(y;z)\left(N^o(y;z)\right)^{-1}=\mathbb{I}+\mathcal{O}(|y|^{-1}),\quad y\to-\infty,z\in\partial\delta^{\frac1a}\setminus\gamma_U,\v\\
&N^{\frac1b}(y;z)\left(N^o(y;z)\right)^{-1}=\mathbb{I}+\mathcal{O}(|y|^{-1}),\quad y\to-\infty,z\in\partial\delta^{\frac1b}\setminus\gamma_U,
\end{aligned}
\end{align}

Then we construct a matrix function as follows:
\bee
Y_3(y;z)
=\begin{cases}
N^o(y;z),\quad z\in \mathbb{C}\setminus(\delta^{a}\cup\delta^{b}\cup\delta^{-a}\cup\delta^{-b}\cup\delta^{\frac1a}\cup\delta^{\frac1b}\cup\delta^{-\frac1a}\cup\delta^{-\frac1b}),\vspace{0.05in}\\
N^{a}(y;z),\quad z\in\delta^{a},\vspace{0.05in}\\
N^{b}(y;z),\quad z\in\delta^{b},\vspace{0.05in}\\
N^{-a}(y;z),\quad z\in\delta^{-a},\vspace{0.05in}\\
N^{-b}(y;z),\quad z\in\delta^{-b},\vspace{0.05in}\\
N^{-\frac1a}(y;z),\quad z\in\delta^{-\frac1a},\vspace{0.05in}\\
N^{-\frac1b}(y;z),\quad z\in\delta^{-\frac1b},\vspace{0.05in}\\
N^{\frac1a}(y;z),\quad z\in\delta^{\frac1a},\vspace{0.05in}\\
N^{\frac1b}(y;z),\quad z\in\delta^{\frac1b},
\end{cases}
\ene
which has the following jump condition:
 \bee
Y_{3+}(y;z)=Y_{3-}(y;z)J_5(y;z).
\ene

\subsection{Small-norm Riemann-Hilbert problem}

Define the error matrix function $E(y;z)$ as follow:
\bee
E(y;z)=N^{(4)}(y;z)Y_3^{-1}(y;z).
\ene
Then the error matrix function $E(y;z)$ satisfies the following Riemann-Hilbert problem.

\begin{prop}\label{RH11}
Find a $2\times 2$ matrix function $E(y;z)$ that satisfies the following properties:

\begin{itemize}

 \item {} Analyticity: $E(y;z)$ is analytic in $\mathbb{C}\setminus\gamma_3$ where $\gamma_3:=O_1\cup O_2\cup O_3\cup O_4\cup\partial\delta^{a}\cup\partial\delta^{b}\cup\partial\delta^{-a}\cup\partial\delta^{-b}\cup\partial\delta^{\frac1a}\cup\partial\delta^{\frac1b}\cup\partial\delta^{-\frac1a}\cup\partial\delta^{-\frac1b}
     \setminus(\delta^{a}\cup\delta^{b}\cup\delta^{-a}\cup\delta^{-b}\cup\delta^{\frac1a}\cup\delta^{\frac1b}\cup\delta^{-\frac1a}\cup\delta^{-\frac1b})$ and takes continuous boundary values on $\gamma_3$.

 \item {} Jump condition: The boundary values on the jump contour are defined as
 \bee
E_{+}(y;z)=E_{-}(y;z)V_E(y;z),\quad z\in \gamma_3,
\ene
where
\bee
V_E(y;z)=Y_{3-}(y;z)J_4(y;z)J_5^{-1}(y;z)Y_{3-}^{-1}(y;z),
\ene
which satisfies the following properties:
\bee\label{VE}
V_E
=\left\{\begin{array}{ll}
\mathbb{I}+\mathcal{O}(e^{-c|y|}), & z\in \cup_{j=1}^4O_j\setminus(\overline{\delta}^a\cup\overline{\delta}^b\cup\overline{\delta}^{-a}\cup\overline{\delta}^{-b}\cup
\overline{\delta}^{\frac1a}\cup\overline{\delta}^{\frac1b}\cup\overline{\delta}^{-\frac1a}\cup\overline{\delta}^{-\frac1b}),\v\\
\mathbb{I}+\mathcal{O}(\dfrac{1}{|y|}),& z\in\partial\delta^{a}\cup\partial\delta^{b}\cup\partial\delta^{-a}\cup\partial\delta^{-b}\cup\partial\delta^{\frac1a}\cup\partial\delta^{\frac1b}\cup\partial\delta^{-\frac1a}\cup\partial\delta^{-\frac1b},
\end{array}
\right.
\ene
with $c$ being a suitable positive real parameter.

 \item {} Normalization:
 \bee\no
E(y;z)=\mathbb{I}+\mathcal{O}(z^{-1}),\quad z\rightarrow\infty.
\ene

\end{itemize}
\end{prop}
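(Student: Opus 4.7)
The plan is to verify the three defining features of the small-norm problem in turn. First I would establish analyticity by showing that across every arc of the total contour of $N^{(4)}$ and $Y_3$ not contained in $\gamma_3$, the two functions carry identical jumps, so that $E = N^{(4)} Y_3^{-1}$ extends analytically across that arc. Inside each disk $\delta^{\pm a}, \delta^{\pm b}, \delta^{\pm 1/a}, \delta^{\pm 1/b}$, this is exactly what the local parametrices $N^{a}, N^{b}, N^{-a}, \ldots$ were constructed to do. Outside the disks, on $\Sigma_1 \cup \Sigma_3 \cup \Sigma_5 \cup \Sigma_7$, both $N^{(4)}$ (Proposition~\ref{RH9}) and $N^o$ (Proposition~\ref{RH10}) carry the common jump $i\sigma_2$, while on $\Sigma_2, \Sigma_4, \Sigma_6$ they share the identical diagonal jumps $e^{(ym_j + n_k)\sigma_3}$. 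In every such case the jumps cancel in $E_-^{-1} E_+$, so only $\gamma_3$ survives.

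Next I would derive the jump formula by direct substitution. On each arc of $\gamma_3$ the relations $N^{(4)}_+ = N^{(4)}_- J_4$ and $Y_{3+} = Y_{3-} J_5$ give
\begin{equation*}
E_+ = N^{(4)}_+ Y_{3+}^{-1} = N^{(4)}_- J_4 J_5^{-1} Y_{3-}^{-1} = E_- \cdot Y_{3-} J_4 J_5^{-1} Y_{3-}^{-1},
\end{equation*}
which is the stated expression for $V_E$. The bound~(\ref{VE}) then splits into two regimes. On any lens arc $O_j^{\pm}$ lying outside the local disks, $Y_{3-} = N^o$ is uniformly bounded in $y$, and the off-diagonal entries of $J_4$ displayed in~(\ref{J4}) carry the factor $\exp(\mp y(2g(z) - \tfrac{1}{2}(z+1/z)))$; Lemma~\ref{le6} fixes the sign of the exponent so that this factor decays like $e^{-c|y|}$ uniformly on the compact portion of the arc as $y \to -\infty$. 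Conjugation by the bounded matrix $Y_{3-}$ preserves the exponential decay and yields the first line of~(\ref{VE}). On each disk boundary $\partial \delta^{\ast}$, $J_5$ is the identity, so $V_E = N^{\ast} (N^o)^{-1}$, and the Bessel-model construction provides the matching relation $N^{\ast}(N^o)^{-1} = \mathbb{I} + \mathcal{O}(|y|^{-1})$ recorded at the end of Section~5, which is the second line. The normalization $E = \mathbb{I} + \mathcal{O}(z^{-1})$ at $z \to \infty$ is immediate from the matching normalizations of $N^{(4)}$ and $N^o$.

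The principal obstacle is making the $\mathcal{O}(|y|^{-1})$ estimate on the disk boundaries quantitative. This requires checking that the conformal map $\zeta_b(z) = y^2(g(z) - \tfrac14(z+1/z))^2/4$ grows like $y^2$ uniformly on $\partial \delta^{b}$, so that $\zeta_b^{-1/2}$ contributes the decay factor $|y|^{-1}$; one then combines the full Bessel-model expansion with the prefactor $N^{b1}$ and the conjugator $e^{-2\zeta_b^{1/2}\sigma_3}$ appearing in the definition of $N^b$, using the boundedness of $f(z)^{\pm 1}$ and $r_3(z)^{\pm 1/2}$ on the closed disk to control the remaining factors. Repeating the same argument at the other seven endpoints $z = \pm a, -b, \pm 1/a, \pm 1/b$ completes the verification of~(\ref{VE}) and hence the proof.
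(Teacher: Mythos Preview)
Your approach is correct and matches the paper's (largely implicit) justification: the paper does not give a separate proof of this RH problem but relies on exactly the two ingredients you invoke, namely Lemma~\ref{le6} for the exponential decay on the lens arcs and the local-parametrix matching estimates $N^{\ast}(N^o)^{-1}=\mathbb{I}+\mathcal{O}(|y|^{-1})$ on the disk boundaries.

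One small slip to fix: on each disk boundary $\partial\delta^{\ast}$ it is $J_4$ that is the identity (since $N^{(4)}$ is analytic across $\partial\delta^{\ast}$), not $J_5$; the parametrix $Y_3$ does jump there, from $N^o$ to $N^{\ast}$. With $J_4=\mathbb{I}$ one gets $V_E=Y_{3-}J_5^{-1}Y_{3-}^{-1}=Y_{3-}Y_{3+}^{-1}$, which is $N^o(N^{\ast})^{-1}$ or $N^{\ast}(N^o)^{-1}$ depending on orientation, and in either case equals $\mathbb{I}+\mathcal{O}(|y|^{-1})$ as you conclude.
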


Using Eq.~(\ref{VE}), we have
\bee
E_{+}(y;z)=E_{-}(y;z)\left(\mathbb{I}+\mathcal{O}(\frac{1}{|y|})\right),\quad z\in \gamma_3.
\ene

According to the standard theory of small-norm Riemann-Hilbert Problem, it follows that
\bee \label{Ey}
E(y;z)=\mathbb{I}+\mathcal{O}(\frac{1}{|y|}),\quad z\in \gamma_3.
\ene

\begin{theorem}
When $y\to-\infty$, the potential function $u(x)$ has the following asymptotic behavior:
\begin{align}
\begin{aligned}
u(x)=&\lim\limits_{z\rightarrow 1}\frac{i}{1-z}\left(1-\dfrac{(N^{(o)}_{11}(z)+N^{(o)}_{21}(z))(N^{(o)}_{12}(z)+N^{(o)}_{22}(z))}
{(N^{(o)}_{11}(1)+N^{(o)}_{21}(1))(N^{(o)}_{12}(1)+N^{(o)}_{22}(1))}\right)+\mathcal{O}(\frac{1}{|y|}),
\end{aligned}
\end{align}
where
\bee\label{xy}
x(y)=y-\ln\left(\dfrac{N^{(o)}_{12}(1)+N^{(o)}_{22}(1)}
{N^{(o)}_{11}(1)+N^{(o)}_{21}(1)}f^2(1)e^{2yg(1)}+\mathcal{O}(\frac{1}{|y|})\right).
\ene

\end{theorem}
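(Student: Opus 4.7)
The plan is to reconstruct $u(x)$ via formula (\ref{fanyan-4}) and systematically unwind the chain of transformations $N^{(1)}\mapsto N^{(2)}\mapsto N^{(3)}\mapsto N^{(4)}$, so as to express the answer, modulo $\mathcal{O}(|y|^{-1})$, purely in terms of the outer parametrix $N^{o}$. First I would fix an open neighbourhood $U$ of $z=1$ whose closure is disjoint from every lens $O_{1}^{\pm},\ldots,O_{4}^{\pm}$ and every small disc $\delta^{\pm a},\delta^{\pm b},\delta^{\pm 1/a},\delta^{\pm 1/b}$ (possible since $a>1$ and $b>a$). For $z\in U$, the definitions (\ref{N4}), (\ref{fz}) and (\ref{T1}) compose into the clean conjugation identity
\[
N^{(1)}(y;z)=N^{(4)}(y;z)\,f(z)^{-\sigma_{3}}\,e^{-yg(z)\sigma_{3}},
\]
while $Y_{3}(y;z)=N^{o}(y;z)$ on $U$, so from $N^{(4)}=E\,Y_{3}$ together with the small-norm estimate (\ref{Ey}) one obtains $N^{(4)}(y;z)=\bigl(\mathbb{I}+\mathcal{O}(|y|^{-1})\bigr)N^{o}(y;z)$ uniformly on $U$ as $y\to-\infty$.

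The second step is to exploit a cancellation specific to the symmetric row-sums appearing in (\ref{fanyan-4}). Reading off the columns gives
\[
N^{(1)}_{i1}(z)=N^{(4)}_{i1}(z)\,f(z)^{-1}e^{-yg(z)},\qquad N^{(1)}_{i2}(z)=N^{(4)}_{i2}(z)\,f(z)\,e^{yg(z)},\quad i=1,2,
\]
so that in the product $\bigl(N^{(1)}_{11}+N^{(1)}_{21}\bigr)\bigl(N^{(1)}_{12}+N^{(1)}_{22}\bigr)$ the scalar factors $f^{-1}e^{-yg}$ and $fe^{yg}$ cancel, leaving exactly $\bigl(N^{(4)}_{11}+N^{(4)}_{21}\bigr)\bigl(N^{(4)}_{12}+N^{(4)}_{22}\bigr)$. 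The same cancellation occurs at $z=1$, so the full ratio inside (\ref{fanyan-4}) is invariant under the diagonal conjugation. Substituting $N^{(4)}=(\mathbb{I}+\mathcal{O}(|y|^{-1}))N^{o}$ and passing to the limit $z\to 1$ then reproduces the claimed formula for $u(x)$ with remainder $\mathcal{O}(|y|^{-1})$.

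For $x(y)$ there is no analogous cancellation; instead, one directly computes
\[
\frac{N^{(1)}_{12}(1)+N^{(1)}_{22}(1)}{N^{(1)}_{11}(1)+N^{(1)}_{21}(1)}=\frac{N^{(4)}_{12}(1)+N^{(4)}_{22}(1)}{N^{(4)}_{11}(1)+N^{(4)}_{21}(1)}\,f(1)^{2}\,e^{2yg(1)}=\Bigl(\frac{N^{o}_{12}(1)+N^{o}_{22}(1)}{N^{o}_{11}(1)+N^{o}_{21}(1)}+\mathcal{O}(|y|^{-1})\Bigr)f(1)^{2}\,e^{2yg(1)},
\]
which after taking the logarithm produces (\ref{xy}).

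The main obstacle, I expect, lies in managing the $z\to 1$ limit in the reconstruction: the definition of $u$ involves $\lim_{z\to 1}(1-z)^{-1}(\cdots)$, so a pointwise bound $E=\mathbb{I}+\mathcal{O}(|y|^{-1})$ is not enough; one must upgrade it to a uniform bound on $E$ and $E'$ in a full (punctured) neighbourhood of $z=1$. Because $E$ is analytic on $U$ and the small-norm argument already delivers $L^{2}\cap L^{\infty}$ control of $E-\mathbb{I}$ on $\partial U$, the Cauchy integral formula transfers this to pointwise and derivative bounds at $z=1$, which legitimizes the $\mathcal{O}(|y|^{-1})$ remainder after the limit. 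A secondary, more structural difficulty is that the sign conditions of Lemma \ref{le6} underlying the exponential decay in (\ref{VE}) are only checked numerically; a fully rigorous proof would require analytic control of $\mathrm{Re}(2g(z)-\tfrac12(z+z^{-1}))$ on the opened lenses, which one can in principle obtain from the explicit representation of $g'$ in (\ref{g-d-1}).
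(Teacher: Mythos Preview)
Your proposal is correct and follows essentially the same route as the paper: unwind the chain $N^{(1)}=N^{(4)}f^{-\sigma_3}e^{-yg\sigma_3}$ near $z=1$, observe that the diagonal factors cancel in the product $(N_{11}+N_{21})(N_{12}+N_{22})$, and then replace $N^{(4)}$ by $N^{o}$ via the small-norm estimate $E=\mathbb{I}+\mathcal{O}(|y|^{-1})$. Your version is in fact more careful than the paper's, which simply substitutes and cancels; your remarks about needing derivative control on $E$ near $z=1$ (via Cauchy's formula) and the merely numerical status of Lemma~\ref{le6} are valid points that the paper does not address.
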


\begin{proof}
Substituting Eqs.~(\ref{T1}),(\ref{fz}),(\ref{N4}) and (\ref{Ey}) into Eq.~(\ref{fanyan-4}), we have
\begin{align}\no
\begin{aligned}
u(x)=&\lim\limits_{z\rightarrow 1}\frac{i}{1-z}\left(1-\dfrac{(N^{(1)}_{11}(z)+N^{(1)}_{21}(z))(N^{(1)}_{12}(z)+N^{(1)}_{22}(z))}{(N^{(1)}_{11}(1)+N^{(1)}_{21}(1))
(N^{(1)}_{12}(1)+N^{(1)}_{22}(1))}\right)\\
=&
\lim\limits_{z\rightarrow 1}\frac{i}{1-z}\left(1-\dfrac{(N^{(o)}_{11}(z)f^{-1}(z)e^{-yg(z)}+N^{(o)}_{21}(z)f^{-1}(z)e^{-yg(z)})
}
{(N^{(o)}_{11}(1)f^{-1}(1)e^{-yg(1)}+N^{(o)}_{21}(1)f^{-1}(1)e^{-yg(1)})} \right. \\
& \left.\times \dfrac{(N^{(o)}_{12}(z)f(z)e^{yg(z)}+N^{(o)}_{22}(z)f(z)e^{yg(z)})}
{(N^{(o)}_{12}(1)f(1)e^{yg(1)}+N^{(o)}_{22}(1)f(1)e^{yg(1)})}
\right)+\mathcal{O}(\frac{1}{|y|}), \\
=& \lim\limits_{z\rightarrow 1}\frac{i}{1-z}\left(1-\dfrac{(N^{(o)}_{11}(z)+N^{(o)}_{21}(z))(N^{(o)}_{12}(z)+N^{(o)}_{22}(z))}
{(N^{(o)}_{11}(1)+N^{(o)}_{21}(1))(N^{(o)}_{12}(1)+N^{(o)}_{22}(1))}\right)+\mathcal{O}(\frac{1}{|y|}),
\end{aligned}
\end{align}
where $x(y)$ is given by Eq.\eqref{xy}.
\end{proof}

\section{$N_{\infty}$-soliton asymptotics: the elliptic domain}

In this section, we care about the $N_{\infty}$-asymptotic situation of $N$-soliton solution, under the additional assumptions:
\begin{itemize}

 \item {} The discrete spectrums $z_j,j=1,\cdots,N$ with the norming constants $c_j,j=1,\cdots,N$ fill uniformly compact domain $\Omega_2$, that is,
\bee
\Omega_2:=\{z|~\dfrac{(2y_1-a_1-a_2)^2}{4b_1^2}+\frac{x_1^2}{b_2^2}<1,~z=x_1+iy_1\},
\ene
where $ia_1$ and $ia_2$($a_2>a_1$) are the focal points of the ellipse $\partial\Omega_2$,  $b_1=\sqrt{b_2^2+(\frac{a_2-a_1}{2})^2}$, and $b_2$
is sufficiently small so that $\Omega_2\subset \{z\in\mathbb{C}|~0<\mathrm{arg}z<\pi,|z|>1\}$.

 \item {} The norming constants $c_j,j=1,\cdots,N$ have the following form:
\bee
c_j=\frac{|\Omega_2|r_4}{N\pi},
\ene
where $|\Omega_2|$ means the area of the domain $\Omega_2$ and $r_4(z)$ is a constant.

\end{itemize}

We define a closed curve $\Gamma_{5+}(\Gamma_{6+})$ with a very small radius encircling the poles $\{z_j\}_{j=1}^N(-\frac{1}{z_j})$ counterclockwise in the upper half plane $\mathbb{C}^+$, and a closed curve $\Gamma_{5-}(\Gamma_{6-})$ with a very small radius encircling the poles $\{-z_j\}_{j=1}^N(\frac{1}{z_j})$ counterclockwise in the lower half plane $\mathbb{C}^-$. According to Lemmas \ref{le1} and \ref{le2}, we obtain a Riemann-Hilbert problem $M_2(y,t;z):=\lim\limits_{N\to\infty}M^{(1)}(y,t;z)$.

\begin{prop}\label{RH12}
Find a $2\times 2$ matrix function $M_2(y,t;z)$ that satisfies the following properties:

\begin{itemize}

 \item {} Analyticity: $M_2(y,t;z)$ is analytic in $\mathbb{C}\setminus(\Gamma_{5\pm}\cup\Gamma_{6\pm})$ and takes continuous boundary values on $\Gamma_{5\pm}\cup\Gamma_{6\pm}$.

 \item {} Jump condition: The boundary values on the jump contour $\Gamma_{1+}\cup\Gamma_{1-}$ are defined as
 \bee
M_{2+}(y,t;z)=M_{2-}(y,t;z)V_4(y,t;z),\quad \lambda\in\Gamma_{5\pm}\cup\Gamma_{6\pm},
\ene
where
\bee\label{v4}
V_4(y,t;z)
=\begin{cases}
\left[\!\!\begin{array}{cc}
1& 0  \vspace{0.05in}\\
-\d\int_{\partial{\Omega_2}}\frac{\zeta^*r_4e^{-2i\theta(\zeta)}}{2\pi i(z-\zeta)}d\zeta& 1
\end{array}\!\!\right],\quad z\in\Gamma_{5+},\vspace{0.05in}\\
\left[\!\!\begin{array}{cc}
1& 0  \vspace{0.05in}\\
\d\int_{\partial{\Omega_2}}\frac{\zeta^*r_4e^{-2i\theta(-\frac{1}{\zeta})}}{2\pi i(z+\frac{1}{\zeta})\zeta^2}d\zeta& 1
\end{array}\!\!\right],\quad z\in\Gamma_{6+},\vspace{0.05in}\\
\left[\!\!\begin{array}{cc}
1&  \d\int_{\partial{\Omega_2}}\frac{\zeta^*r_4e^{2i\theta(-\zeta)}}{2\pi i(z+\zeta)}d\zeta \vspace{0.05in}\\
0& 1
\end{array}\!\!\right],\quad z\in\Gamma_{5-},\v\\
\left[\!\!\begin{array}{cc}
1& -\d\int_{\partial{\Omega_2}}\frac{\zeta^*r_4e^{2i\theta(\frac{1}{\zeta})}}{2\pi i(z-\frac{1}{\zeta})\zeta^2}d\zeta  \vspace{0.05in}\\
0& 1
\end{array}\!\!\right],\quad z\in\Gamma_{6-}.
\end{cases}
\ene

 \item {} Normalization:
\bee
 M_2(y, t; z)=\left\{\begin{array}{ll}
    \mathbb{I}_2+O\left(1/z\right)  & z\to\infty, \v\\
 \dfrac{(q+1)^2}{m^2+(q+1)^2}\left(\begin{array}{cc} 1&  \frac{im}{q+1} \v\\
\frac{im}{q+1} & 1 \end{array}\right)\left(\mathbb{I}_2+\mu_1^{(0)}(z-i)\right)e^{\frac12c_+\sigma_3} & \v\\
\quad +\mathcal{O}((z-i)^2), & z\to i.
\end{array}\right.
\ene
\end{itemize}
\end{prop}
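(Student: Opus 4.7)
The plan is to mirror the quadrature-domain derivation of Proposition \ref{RH3}, applying Lemma \ref{le1} and then a Stokes'-theorem reduction analogous to Lemma \ref{le2}, but adapted to the constant density $r_4$ and to the elliptic geometry of $\Omega_2$. Starting from the RH problem for $M^{(1)}$ in Proposition \ref{RH2}, I would treat each of the four Cauchy-type sums in the jump matrix $V_1$ on the small loops surrounding the pole families $\{z_j\}$, $\{-1/z_j\}$, $\{-z_j\}$, $\{1/z_j\}$ as a Riemann sum. With $c_j=|\Omega_2|r_4/(N\pi)$ and the $z_j$ uniformly filling $\Omega_2$, the convergence argument of Lemma \ref{le1} gives, uniformly on compact subsets of $\mathbb{C}\setminus\overline{\Omega_2}$,
\bee\no
\lim_{N\to\infty}\sum_{j=1}^{N}\frac{c_j e^{-2i\theta(z_j)}}{z-z_j}=\iint_{\Omega_2}\frac{r_4\, e^{-2i\theta(\zeta)}}{2\pi i(z-\zeta)}\,d\zeta^*\wedge d\zeta,
\ene
together with analogous identities for the three remaining sums produced by the substitutions $z_j\mapsto -z_j$, $1/z_j$, $-1/z_j$.

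Next I would convert each double integral into a boundary integral on $\partial\Omega_2$ by the Stokes'-theorem device used in Lemma \ref{le2}, replacing the polynomial weight of the quadrature case by the elementary identity $r_4=\bar\partial(\zeta^* r_4)$. Since the Cauchy kernel is $\zeta$-holomorphic on $\Omega_2$ whenever $z\notin\overline{\Omega_2}$, one obtains
\bee\no
\iint_{\Omega_2}\frac{r_4\, e^{-2i\theta(\zeta)}}{2\pi i(z-\zeta)}\,d\zeta^*\wedge d\zeta=\int_{\partial\Omega_2}\frac{\zeta^* r_4\, e^{-2i\theta(\zeta)}}{2\pi i(z-\zeta)}\,d\zeta,
\ene
which reproduces precisely the $(2,1)$ entry of $V_4$ on $\Gamma_{5+}$ in \eqref{v4}. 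The entries on $\Gamma_{6\pm}$ and $\Gamma_{5-}$ then follow from the same reduction applied to the remaining three double integrals; the factors $\zeta^{-2}$ arise from the changes of variable $\zeta\mapsto\mp 1/\zeta$ inside the phases $\theta(\mp 1/\zeta)$ through the chain rule, while the sign reversals encode the induced counterclockwise orientation of $\partial\Omega_2$.

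The remaining RH-problem data for $M_2$ transfer directly from $M^{(1)}$. Analyticity away from $\Gamma_{5\pm}\cup\Gamma_{6\pm}$ is inherited from $M^{(1)}$, since the defining transformation from $M^{(0)}$ to $M^{(1)}$ acts as the identity outside the small loops and this identification is preserved in the $N\to\infty$ limit; the normalization at $z=\infty$ and the expansion at $z=i$ likewise carry over unchanged from Proposition \ref{RH2}, because the upper- and lower-triangular dressings do not alter the leading behaviors at those two points.

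The main obstacle I anticipate is the bookkeeping in the double-integral-to-boundary-integral conversion: one must verify that the orientation of $\partial\Omega_2$, the Jacobians coming from the substitutions in $\theta(\mp 1/\zeta)$, and the reality constraint on $r_4$ imposed by the symmetry $r(z^*)=r(z)$ all combine consistently so that the four entries of $V_4$ in \eqref{v4} are reproduced exactly, without sign errors or missed factors of $\zeta^{-2}$. Once this bookkeeping is settled, Lemmas \ref{le1} and \ref{le2} supply the entire substantive analytic content of the proof.
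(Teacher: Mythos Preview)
Your proposal is correct and is essentially the paper's own derivation spelled out in full: the paper simply writes ``According to Lemmas \ref{le1} and \ref{le2}, we obtain a Riemann-Hilbert problem $M_2(y,t;z):=\lim_{N\to\infty}M^{(1)}(y,t;z)$'' and then states RH Problem \ref{RH12}, and you have correctly unpacked what this citation means in the elliptic setting with constant density $r_4$, namely the Riemann-sum passage to the double integral followed by the Stokes reduction via $r_4=\bar\partial(\zeta^* r_4)$. One small correction to your commentary: the $\zeta^{-2}$ factors in the $\Gamma_{6\pm}$ entries of $V_4$ are inherited directly from the residue data $c_j/z_j^{2}$ in \eqref{V1-1}--\eqref{V1-1g}, not from a change of variable inside $\theta$; this does not affect your argument but simplifies the bookkeeping you were worried about.
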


\begin{lemma}\label{le4} The following identities hold:
\begin{eqnarray}
&& \no \d\int_{\partial{\Omega_2}}\frac{\zeta^*r_4e^{-2i\theta(\zeta)}}{2\pi i(z-\zeta)}d\zeta=\int_{ia_1}^{ia_2}\frac{\Delta F(\zeta)r_4e^{-2i\theta(\zeta)}}{2\pi i(z-\zeta)}d\zeta
,\v\v\\ \no\\
&& \no \d\int_{\partial{\Omega_2}}\frac{\zeta^*r_4e^{-2i\theta(-\frac{1}{\zeta})}}{2\pi i(z+\frac{1}{\zeta})\zeta^2}d\zeta=
\int_{ia_1}^{ia_2}\frac{\Delta F(\zeta)r_4e^{-2i\theta(-\frac{1}{\zeta})}}{2\pi i(z+\frac{1}{\zeta})\zeta^2}d\zeta,\v\v\\ \no\\
&& \no \d\int_{\partial{\Omega_2}}\frac{\zeta^*r_4e^{2i\theta(-\zeta)}}{2\pi i(z+\zeta)}d\zeta=
\int_{ia_1}^{ia_2}\frac{\Delta F(\zeta)r_4e^{2i\theta(-\zeta)}}{2\pi i(z+\zeta)}d\zeta,\v\v\\ \no\\
&& \d\int_{\partial{\Omega_2}}\frac{\zeta^*r_4e^{2i\theta(\frac{1}{\zeta})}}{2\pi i(z-\frac{1}{\zeta})\zeta^2}d\zeta=
\int_{ia_1}^{ia_2}\frac{\Delta F(\zeta)r_4e^{2i\theta(\frac{1}{\zeta})}}{2\pi i(z-\frac{1}{\zeta})\zeta^2}d\zeta,
\label{le4-1}
\end{eqnarray}
where $\Delta F(z)=-F_+(z)+F_-(z)$.
\end{lemma}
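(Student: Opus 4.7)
My plan is to establish all four identities simultaneously by a single Cauchy deformation argument based on the Schwarz function of the ellipse $\partial\Omega_2$. Each identity in the lemma has the form
\[
\oint_{\partial\Omega_2}\zeta^{*}\,g_j(\zeta)\,d\zeta\;=\;\int_{ia_1}^{ia_2}\Delta F(\zeta)\,g_j(\zeta)\,d\zeta,\qquad j=1,2,3,4,
\]
where $g_j(\zeta)$ denotes the remaining (analytic-in-$\zeta$) factor in the $j$-th integrand of the four jump-matrix entries in \eqref{v4}. The strategy is: (i) replace the antiholomorphic quantity $\zeta^{*}$ on $\partial\Omega_2$ by an honestly analytic object $F(\zeta)$ agreeing with $\bar\zeta$ on the ellipse; (ii) collapse the counterclockwise contour $\partial\Omega_2$ inward onto the focal chord $[ia_1,ia_2]$; (iii) read off the limit as the jump $\Delta F=-F_++F_-$.

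The first step is the construction of $F(z)$. Writing $\alpha:=i(a_1+a_2)/2$ and $c:=(a_2-a_1)/2$, so that the foci are $\alpha\pm ic=ia_{1,2}$, the ellipse equation $\tilde y^2/b_1^2+\tilde x^2/b_2^2=1$ with $\tilde x+i\tilde y=z-\alpha$ becomes, after rewriting in the variables $u=z-\alpha$ and $v=\bar z-\bar\alpha$,
\[
c^{2}v^{2}+2(b_1^2+b_2^2)\,u\,v+c^{2}u^{2}-4b_1^{2}b_2^{2}=0,
\]
which, using $(b_1^2+b_2^2)^{2}-c^{4}=4b_1^{2}b_2^{2}$ and $u^{2}+c^{2}=(z-ia_1)(z-ia_2)$, solves to
\[
F(z)\;:=\;\bar\alpha\,-\,\frac{b_1^{2}+b_2^{2}}{c^{2}}(z-\alpha)\,+\,\frac{2b_1 b_2}{c^{2}}\sqrt{(z-ia_1)(z-ia_2)}.
\]
With the branch of the square root fixed so that $F$ is single-valued on $\mathbb{C}\setminus[ia_1,ia_2]$, this is the Schwarz function of $\partial\Omega_2$: it is analytic in a full neighborhood of the ellipse and in the interior $\Omega_2$ minus the focal chord, it reproduces $\bar\zeta$ on $\partial\Omega_2$, and its boundary values on the cut differ by precisely $F_-(\zeta)-F_+(\zeta)=\Delta F(\zeta)$.

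Next I verify analyticity of $g_j(\zeta)$ on $\overline{\Omega_2}$ for $z$ in the relevant complement. The potential singularities inside $\Omega_2$ come from three sources: (a) the essential points of $e^{\pm 2i\theta(\cdot)}$ located at $\zeta=0,\pm i$, all excluded by the hypothesis $\Omega_2\subset\{0<\arg\zeta<\pi,\,|\zeta|>1\}$; (b) the pole $\zeta^{-2}$ appearing in the $\Gamma_{6\pm}$-entries, whose only pole is at the excluded point $\zeta=0$; and (c) the Cauchy-type kernels $1/(z-\zeta)$ and $1/((z\pm 1/\zeta)\zeta^{2})$, whose poles $\zeta=z$ and $\zeta=\mp 1/z$ sit outside $\Omega_2$ precisely when $z$ avoids $\overline{\Omega_2}$ and $-1/\overline{\Omega_2}$, respectively. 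Under these conditions each $F(\zeta)g_j(\zeta)$ is analytic in $\Omega_2\setminus[ia_1,ia_2]$, so Cauchy's theorem permits deforming the counterclockwise contour $\partial\Omega_2$ inward onto the focal chord, yielding
\[
\oint_{\partial\Omega_2}F(\zeta)g_j(\zeta)\,d\zeta
=\int_{ia_1}^{ia_2}\bigl(F_-(\zeta)-F_+(\zeta)\bigr)g_j(\zeta)\,d\zeta
=\int_{ia_1}^{ia_2}\Delta F(\zeta)\,g_j(\zeta)\,d\zeta,
\]
which is the asserted identity. For $z$ in the small regions where one of the Cauchy kernels picks up a pole inside $\Omega_2$, the identity extends by analytic continuation in $z$, since both sides are meromorphic in $z$ with coincident singularities on $\overline{\Omega_2}\cup(-1/\overline{\Omega_2})$.

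\emph{Main obstacle.} The delicate point is the branch/orientation bookkeeping for $F$: one must pin down the sign of the square root so that the single function $F$ coincides with $\bar\zeta$ on all of $\partial\Omega_2$, and must track the orientation of $[ia_1,ia_2]$ (inherited from the collapse of the counterclockwise ellipse) to confirm that the signed combination produced on the cut is indeed $F_--F_+=\Delta F$ rather than its negative. Once these conventions are settled, the four identities reduce to one application of Cauchy's theorem.
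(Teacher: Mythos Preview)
Your proposal is correct and follows essentially the same approach as the paper: the paper's proof consists solely of writing down the Schwarz-function representation $\zeta^{*}=\bigl(1-\tfrac{8b_1^{2}}{(a_2-a_1)^{2}}\bigr)(\zeta-\tfrac{i(a_1+a_2)}{2})+\tfrac{8b_1b_2}{(a_2-a_1)^{2}}F(\zeta)-\tfrac{i(a_1+a_2)}{2}$ on $\partial\Omega_2$ and then asserting the result, so you have supplied exactly the contour-collapse argument the paper leaves implicit. The only discrepancy is notational---in the paper $F(\zeta)$ denotes the bare square root $((\zeta-ia_1)(\zeta-ia_2))^{1/2}$ rather than the full Schwarz function you use, so your $\Delta F$ and the paper's differ by the constant factor $8b_1b_2/(a_2-a_1)^{2}$ (the polynomial part having no jump).
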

\begin{proof}

The boundary of $\Omega_2^*$, the complex conjugate domain of $\Omega_2$, is described by
\bee\label{le4-2}
z^*=(1-\frac{8b_1^2}{(a_2-a_1)^2})(z-\frac{i(a_1+a_2)}{2})+\frac{8b_1b_2}{(a_2-a_1)^2}F(z)-\frac{i(a_1+a_2)}{2},\quad z\in\partial\Omega_2,
\ene
where $F(z)=((z-ia_1)(z-ia_2))^{\frac12}$.
Using Eq.~(\ref{le4-2}), we can get Eq.~(\ref{le4-1}).
\end{proof}

According to Lemma \ref{le4}, we find that RH problem \ref{RH12} of matrix function $M_2(y,t;z)$ is equivalent to the case of line region.

\section{Conclusions and discussions}

We have explored the $N_\infty$-soliton asymptotic behaviors of the mCH equation with linear dispersion and zero boundaries.
We have chosen the following scattering data: the discrete spectra $K:=\{z_j,-z_j^*,\frac{1}{z_j^*},-\frac{1}{z_j}\}_{j=1}^N\cup K^*$, and norming constants
$\{c_j\}_{j=1}^N$, which correspond to discrete spectra. By constraining the discrete spectra $z_j$'s and theri corresponding norming constants $c_j$'s, we can obtain different types of $N_\infty$-soliton asymptotic behaviors. Case 1) The corresponding $N_\infty$-soliton asymptotic behavior is a one-soliton solution, where the discrete spectrum is located the center of the region when the region is a disk. Case 2) The corresponding $N_\infty$-soliton asymptotic behavior is an $n$-soliton solution when the region is a quadrature domain with $m=n$. This wave phenomenon, which represents a finite number of soliton interactions, is called soliton shielding. When the discrete spectra lie in the line region, we get its corresponding Riemann-Hilbert problem. When the discrete spectra lie in an ellipse region, it is equivalent to the case of the line region. These results on $N_\infty$-soliton asymptotic behaviors can provide a theoretical basis for related physical experiments.

Moreover, it is also important subject to
study the interactions of mulit-breather solutions with non-zero backgrounds. We will investigate
the $N_{\infty}$-breather asymptotics for the mCH equation with linear dispersion and non-zero backgrounds in future.

\vspace{0.2in}
\noindent {\bf Declaration of competing interest}

\vspace{0.05in}
The authors declare that they have no known competing financial interests or personal
relationships that could have appeared to influence the work reported in this paper.

\vspace{0.2in}
\noindent {\bf Data availability}

\vspace{0.05in}
All data generated or analyzed during this study are included in this published article
(and its supplementary information files).

\addcontentsline{toc}{section}{Acknowledgments}

\vspace{0.2in}
\noindent {\bf Acknowledgments}

\vspace{0.05in}
This work of W.W. was supported by the China Postdoctoral Science Foundation (No.043201025).
The work of Z.Y. was supported by the National Natural Science Foundation of China (No.11925108).

\addcontentsline{toc}{section}{References}

\end{document}